\lstdefinelanguage{polyc}{
  morekeywords={for, if, else, return, istring, iint, int, bool, size, true, false},
  sensitive=true,
  comment=[l]{//},
  morecomment=[s]{/*}{*/},
  morestring=[b]',
  escapeinside=``,
  morestring=[b]"
}
\begin{document}
\title{A Programming Language for Feasible Solutions}
%
%
\author{Weijun Chen \and
Yuxi Fu{\Envelope} \and Huan Long
}
\authorrunning{W. Chen et al.}
%
\institute{BASICS, Shanghai Jiao Tong University, Shanghai, China.\\
\email{cwj2018@sjtu.edu.cn},
\email{fu-yx@cs.sjtu.edu.cn},
\email{longhuan@sjtu.edu.cn}
}
\maketitle              
\begin{abstract}
Runtime efficiency and termination are crucial properties in the studies of program verification.
Instead of dealing with these issues in an ad hoc manner, it would be useful to develop a robust framework in which such properties are guaranteed by design.
This paper introduces a new imperative programming language whose design is grounded in a static type system that ensures the following equivalence property:
All definable programs are guaranteed to run in polynomial time; Conversely, all problems solvable in polynomial time can be solved by some programs of the language.
The contribution of this work is twofold.
On the theoretical side, the foundational equivalence property is established, and the proof of the equivalence theorem is non-trivial.
On the practical side, a programming approach is proposed that can streamline program analysis and verification for feasible computations.
An interpreter for the language has been implemented, demonstrating the feasibility of the approach in practice.

\keywords{Feasible computation \and Programming language \and Type system \and Polynomial time \and Efficient verification.}
\end{abstract}
%
%
%
%
\def\classP{\mathbf{P}}
\def\classFP{\mathbf{FP}}
\def\classPC{\mathbf{FPC}}
\def\classL{\mathscr{L}}
\def\classE{\mathscr{E}}
\def\classB{\mathbf B}
\def\classPH{\mathbf{PH}}
\def\classBFF{\mathbf{BFF}_2}
\def\N{\mathbb{N}}
\def\size#1{\vert#1\vert}
\def\typeInt{\mathtt{int}}
\def\typeBool{\mathtt{bool}}
\def\typeIterInt{\mathtt{iint}}
\def\opSize{\mathtt{size}}
\def\funSize#1{\mathrm{size}(#1)}
\def\funAbs#1{\mathrm{abs}(#1)}
\def\t{\texttt{\#}\mathtt{t}}
\def\f{\texttt{\#}\mathtt{f}}
\def\True{\texttt{true}}
\def\False{\texttt{false}}
\def\A{\mathcal{A}}
\def\P{\mathsf{P}}
\def\c{\mathtt{c}}
\def\D{\mathsf{D}}
\def\p{\mathsf{p}}
\def\S{\mathsf{S}}
\def\E{\mathsf{E}}
\def\e{\mathsf{e}}
\def\s{\mathsf{s}}
\def\Z{\mathbb{Z}}
\def\B{\mathbb{B}}
\def\V{\mathbb{V}}
\def\C{\mathsf{C}}
\def\tt{\mathsf{t}}
\def\smap#1#2{\llbracket#1\rrbracket_{#2}}
\def\O{\mathsf{O}}
\def\op{\mathtt{op}}
\def\arity{\mathrm{ary}}
\def\L{\mathsf{L}}
\def\T{\mathsf{T}}
\def\dom{\mathrm{dom}}
\def\Int{\mathsf{Int}}
\def\Bool{\mathsf{Bool}}
\def\semantics#1#2#3{\langle#1,#2\rangle\Downarrow{#3}}
\def\cost#1#2#3#4{\langle#1,#2\rangle\Downarrow_{#4}{#3}}
\def\srule#1#2#3{\frac{#1}{#2}[\Sigma\mathtt{-#3}]}
\def\sref#1{$\Sigma\mathtt{-#1}$}
\def\evalTree#1#2{\mathscr{E}_{\;#1,\,#2}}
\def\tmap#1#2{\{\!|#1|\!\}_{#2}}
\def\l{\ell}
\def\assignable{\mathsf{Asg}}
\def\comparable{{\sim_{\,\T}\;}}
\def\notcmp{\not\sim_{\,\T}\;}
\def\typing#1#2#3#4{#1,#2\vdash#3:#4}
\def\trule#1#2#3{\frac{#1}{#2}[\Gamma\mathtt{-#3}]}
\def\tref#1{$\Gamma\mathtt{-#1}$}
\def\domi#1{\dom_{\mathrm{\,I}}(#1)}
\def\restr#1#2{{(#1)_{\restriction #2}}}
\def\classF{\mathscr{F}}
\def\ic{\mathrm{ic}}
\def\polyn{\mathtt{poly}(n)}
\def\I{\mathscr{I}}
\def\N{\mathbb{N}} 
\section{Introduction}\label{sec:introduction}

Program verification and analysis depend heavily on programming languages.
From the early days of programming in assembly languages to recent experiments in applying deep learning techniques to verification and analysis~\cite{si2020code2inv}, different programming languages have been proposed to address the increasingly demanding concerns from industry on software correctness and safety.
A lot of tools and theoretical frameworks (Hoare logic, symbolic execution, theorem prover, to name a few)~\cite{baier2008principles,ali2017survey,moura2021lean,bertot2013interactive} that have been developed in the field are based on program logics and program semantics.
The fundamental barrier to program verification and analysis is the undecidability.
One of the basic program properties is the termination issue, which is well-known to be undecidable.
Even if a program property is decidable, it is often computationally infeasible.
Exponential blowup is more a rule than an exception.

Modern software systems are composed of many components with complex interconnections.
Different parts of the system may well be designed at different times, in different locations, by different people using different programming languages.
Most components of a software system do not require the full Turing computability.
A good strategy to reduce the burden of verification is to impose strong constraints on the power of the programming language used to develop a particular component, thereby making some properties hold \emph{a priori}.
In many real-world cases, program executions should be feasible, meaning they can run within bounded resources, with running time being the most critical resource.
If we can design a programming language that admits only feasible programs, the issue of program efficiency effectively disappears, placing us in a far better position to pursue program verification and analysis.
To achieve this goal, we must first determine what constitutes \emph{practical efficiency}, on which it is difficult for the industry to reach a clear consensus.

Meanwhile, modeling feasible computation is also a central theme in theoretical research~\cite{hartmanis1965computational}.
The thesis promoted by Cobham~\cite{cobham1965intrinsic} and Edmonds~\cite{edmonds1965paths} is widely accepted among theoretical researchers: an algorithm is considered \emph{theoretically feasible} if it can be computed within polynomial time.
The complexity classes $\classP$ (the class of the problems solvable in polynomial time) for decision problems and $\classFP$ for functions were introduced to capture the theoretically feasible problems.
Strictly speaking, theoretical feasibility cannot fully predict a program’s practical performance, with the Coppersmith-Winograd algorithm~\cite{COPPERSMITH1990251} and the simplex algorithm~\cite{Klee1970HOWGI} for linear programming serving as famous counterexamples.
Nonetheless, the Cobham-Edmonds Thesis still provides a solid explanation for practical efficiency and has inspired extensive related research~\cite{cook2023feasibly,cook1989functional}.
Based on this, designing a simple and elegant programming language that precisely characterizes the class $\classFP$ is a crucial step toward implementing our strategy.

The original characterization of $\classFP$ is in terms of polynomial-time Turing machines that involve low-level manipulations and incur undecidability issues.
A different research avenue is to design light-weight models that characterize $\classFP$ in a direct manner.
The seminal work of Bellantoni and Cook~\cite{bellantoni1992new} has laid down the foundation for this line of research and has inspired a wide range of subsequent developments~\cite{oitavem2022polynomial,hainry2020tier,hainry2023programming}.
This line of thinking has led to the implicit computational complexity (ICC) theory~\cite{dal2010short}, whose goal is to characterize complexity classes without imposing explicit restrictions on resources.
Various research, such as linear logic systems~\cite{girard1987linear,lafont2004soft}, bounded arithmetic~\cite{buss1985bounded,buss1986polynomial,gurevich1983algebras,hainry2021complexityparser}, and the soft lambda calculus~\cite{baillot2004soft}, also falls within this line of work.

With the strong connection to application contexts, programming languages are well-suited to bridge theoretical foundations and practical concerns.
In this light, a program for $\classFP$ represents both a continuation of ICC research and a further exploration of program verification.
The challenge of designing such a language is twofold.
Firstly, does it exist?
Secondly, does such a language look natural?
Studies in ICC~\cite{jones1999logspace,marion2011type,hainry2023general} only partially affirm the first question.
For the language to look natural, the loop statements are preferred over recursion.
The problem now is that one has to settle for the loop statements without any bounds.
Such programs may not terminate, which brings us back to square one.

To design a language that is both intuitive to use and formally guarantees polynomial-time complexity is inherently challenging.
This creates a trade-off, and in this paper, we aim to strike a balance between soundness and naturalness.
We propose a novel programming approach based on an imperative programming language called PolyC that captures precisely the problems in $\classFP$.
The design of PolyC enables pre-verified efficient programming by embedding verification guarantees into the language. 
This contrasts to the practice in which implementation comes first and safety verification comes afterwords, often incurring significant overhead.
In safety-critical scenarios such as aviation, PolyC, which guarantees termination and low time complexity, offers inherent advantages.

Several programming languages and type systems have been designed to characterize complexity classes, including functional languages~\cite{jones1999logspace}, object-oriented languages~\cite{hainry2023general}, tier-based type systems~\cite{marion2011type}, and so on.
It has become a consensus that restrictions on the usage of variables in these languages is necessary. 
Despite sharing this common mechanism with the prior works, our design differs in some critical dimensions.
\begin{enumerate}
    \item Our priority is to design a practical programming language for $\classFP$, rather than proposing yet another model.
    In PolyC, termination and polynomial-time execution are guaranteed for all well-typed programs. 
    Previous models~\cite{marion2011type,hainry2023general} may admit non-terminating programs, necessitating termination verification techniques~\cite{lee2001size,kuwahara2014automatic} to ensure polynomial-time termination.
    \item Most previous works rely on flow analysis~\cite{volpano2000} to ensure non-interference that prohibits information flow across types. 
    In contrast, PolyC restricts the use of variables in different program blocks.
    \item We provide a deterministic static type-checker accompanied by an interpreter to validate practical feasibility. 
    The implementation issue of the type system is rarely discussed in literature.
\end{enumerate}

\subsubsection{Contribution.}
The goal of this paper is to lay down the foundation for the proposed approach.
We shall confine our attention to {PolyC}, focusing on the formal semantics of the language and the equivalence property we have explained in the above.
The main contributions of the paper are summarized as follows:
\begin{enumerate}
    \item  We present a formal description of CorePolyC, an imperative programming language with simple syntax, operational semantics, and a type system. We demonstrate that well-typed CorePolyC programs always terminate. Additionally, we show that the set $\classPC$ of the CorePolyC computable functions is precisely $\classFP$ (Theorem~\ref{thm:main}).
    \item We extend CorePolyC to a more user-friendly programming language, PolyC, and demonstrate that they are equally expressive (Theorem~\ref{thm:poly-main}).
    We have developed an interpreter for PolyC, allowing users to write and run PolyC programs for problem-solving. 
    The details are available \href{https://github.com/AzureSkyChen/polyc}{online}.
    \item We have pointed out that the verification problem for PolyC programs is relatively easy (Proposition~\ref{prop:unfold}), and our type system can be further extended as a tool for time-complexity analysis of other programming languages.
\end{enumerate}

\subsubsection{Organization.}
Section~\ref{sec:CorePolyC} defines CorePolyC.
Some key properties of PolyC programs are stated.
Section~\ref{sec:main_theorem} proves that CorePolyC characterizes $\classFP$.
Section~\ref{sec:polyc} formally introduces PolyC and establishes its equivalence to CorePolyC in terms of expressive power. 
Section~\ref{sec:conclusion} concludes.

\section{Core PolyC}
\label{sec:CorePolyC}

Core PolyC (CorePolyC) is a strongly typed programming language.
A CorePolyC program outputs an integer upon termination.
The semantics and the type system guarantee that all well-typed CorePolyC programs terminate.

\subsection{Syntax}
\label{subsec:syntax}

The syntax of CorePolyC is defined in Fig.~\ref{fig:vocalbulary}.
CorePolyC is a strongly typed language requiring explicit type annotations.
There are three \textit{fundamental data types}: $\typeIterInt$ (for iterable int), $\typeInt$ and $\typeBool$.
The difference between an $\typeIterInt$ variable and an $\typeInt$ variable is that the former must abide by specific rules given in Section~\ref{sec:semantics}.
Notably, in CorePolyC, integers (both $\typeIterInt$ and $\typeInt$) are unbounded, which aligns with standard models of computation used in complexity theory. 
Let $\Int:=\{\typeInt,\typeIterInt\}$ and $\Bool:=\{\typeBool\}$.
We write $\tt$ for an element of $\,\T := \Int\cup\Bool$.
The set of variables is denoted by $\L$, and we typically use $x, y, z, \ldots$ to range over $\L$.
A variable is \textit{iterable} if it is declared as $\typeIterInt$.

\begin{figure}[t]
\centering
\begin{align*}
\texttt{Types\qquad\quad\;\;t}::=&\texttt{ \textbf{iint} | \textbf{int} | \textbf{bool}}\\
\texttt{Operator\quad\,\;op}::=&\texttt{ + | - | / | \% | size |  }\\
&\texttt{ >= | <= | > | < | == | != | ! | \&\& | || }\\
\texttt{Expressions e}::=&\texttt{ $x$ | c | op($\widetilde{\e}$) | (e) }\\
\texttt{Statements\quad s}::=&\texttt{ t $x$; | $x$=e; | \{ $\widetilde{\s}$ \} | \textbf{if}(e) $\s_1$ \textbf{else} $\s_2$} \texttt{| \textbf{for}($x$<\textbf{size}(e)) s}\\
\texttt{Programs\qquad p}::=&\texttt{ \textbf{int} \textbf{main}(\textbf{int} $x_1$,\dots,\textbf{int} $x_m$)\{ $\widetilde{\s}$ \textbf{return} e; \}}
\end{align*}
\caption{The syntax of CorePolyC. }
\label{fig:vocalbulary}
\end{figure}

Let $\D:=\{\texttt{0,\ldots,9}\}$ and $\C :=\D^{+} \cup \{\True,\False\}$ represent all the literals.
The set of all operators in CorePolyC is denoted by $\O$.
It is important to note that the multiplication \texttt{*} is not present in our language, even though it is a fundamental operator in almost all programming languages.
An intuitive explanation of this design will be provided at the end of this section.
All operators and their precedence and associativity are standard except for $\opSize$, which is defined as
the bit-size of $\funAbs{x}$, i.e.,
$
    \funSize{x}=\size{x}=\lceil \log (\funAbs{x}+1) \rceil.
$

Expressions are constructed from variables $x\in \L$, constants $\c\in \C$, and operators $\op\in\O$.
We abbreviate a sequence of expressions $\e_1, \ldots, \e_m$ to $\widetilde{\e}$,
where the meta notation $m$ ranges over the set $\N$ of natural numbers.
An expression is \textit{iterable} if it contains only iterable variables;
it is \textit{non-iterable} otherwise.

\begin{remark}
    In the syntax, all expressions are generated in prefix form. 
    However, for readability, the examples presented later will typically use \emph{infix notation} for binary operators. 
    For instance, we write \texttt{1==1} instead of \texttt{==(1,1)}.
\end{remark}

The statements are standard except for the loop statement.
The notation $\widetilde{\s}$ is for a sequence of statements.
A loop statement includes a \textit{loop counter} and an explicit \textit{loop bound} on the number of iterations, which must be a size expression.
A simple example is $\texttt{\textbf{for}(i<\textbf{size}(x))\{y=y+1;\}}$,
where the loop counter is \texttt{i} and the loop bound is \texttt{\textbf{size}(x)}.
Note that the counter \texttt{i} should be a new variable implicitly declared here.
When executing the loop statement, the counter \texttt{i} is automatically assigned the initial value $0$, and at the same time, the loop bound is evaluated.
The value of \texttt{i} will be increased by $1$ upon the completion of an execution.
Furthermore, a loop statement must meet two restrictions.
\begin{enumerate}
    \item It is not allowed to declare any iterable variables within any loop body.
    \item No assignment to any iterable variables may occur within the loop body.
\end{enumerate}
The semantics and constraints mentioned above will be ensured, respectively, by the semantic rules and typing rules in the following sections.

We call a string generated by the non-terminal $\texttt{program}$ a \textit{well-formed program}, denoted by $\p$.
Note that \emph{all} parameters are of type \texttt{int}.
Since boolean values can be encoded within integers, for simplicity, we do not introduce \texttt{bool} parameters for CorePolyC.
The number of its inputs is denoted by $\arity{(\p)}$.
The set of the variables referred in $\p$ is denoted by $\L(\p)$.
The set of all well-formed programs is denoted by $\P$.
Similarly the set of all well-formed expressions $\e$ (resp. statements $\s$) is denoted by $\E$ (resp. $\S$).
Unless otherwise stated, the syntactic entities in the following text are all well-formed.

\begin{remark}
    We now provide some intuitions behind the design of CorePolyC.
\begin{enumerate}
    \item \textbf{Loop bound.}
    In CorePolyC, we completely prohibit unbounded loop statements, such as \texttt{\textbf{while}(\textbf{true})\{...\}}, which run the risk of non-termination.
    Such a trade-off is acceptable, since the bounded loops together with the size operation mirror widespread programming practices: loops with explicit iteration ranges (e.g., \texttt{\textbf{for}(\textbf{auto} i: vec)\{$\dots$\}} and \texttt{\textbf{for} i \textbf{in} \textbf{range}(n)\{$\dots$\})} dominate array/matrix algorithms and dynamic programming. 
    They also fit well within some programming practices, where explicit resource parameterization, such as specifying security parameters, is standard.
    \item \textbf{The size operator.}
    The loop bound must be a size expression.
    Without this constraint, we would be able to write a loop statement \texttt{\textbf{for}(i<x)\{$\dots$\}}.
    The loop body can be executed $\Theta(2^{\size{x}})$ times (exponential in the bit-size of $x$), which defeats the polynomial time computability restriction we set out.
    \item \textbf{Absence of multiplication.}
    If the multiplication operator \texttt{*} is admitted in CorePolyC, one can design the following code snippet.
    \begin{lstlisting}
    x=2;
    for(i<size(z)) x=x*x;
    \end{lstlisting}
    It effectively implements the function $z\mapsto 2^{2^{\size{z}}}$, which takes exponential time to output this number.
    Therefore, the multiplication operator must be prohibited in our language.
    As a comparison, Example~\ref{example:fast-mul} provides a valid implementation of multiplication in CorePolyC. 
    A more general discussion of operations will be presented in Section~\ref{subsec:language}.
    
    \item \textbf{Iterable variables and expressions.}
    If we do not impose these distinctions between the iterable and the non-iterable variables, the following code would be legitimate.
    \begin{lstlisting}
    for(i<size(z)){
        y=x;
        for(i<size(y)) x=x+x;
    }
    \end{lstlisting}
    The program calculates the function $(x,z) \mapsto 2^{\size{x}(2^{\size{z}}-1)} x$ whose size grows exponentially.
    In this work, we employ a specialized type system (iterable types) to avoid this problematic scenario.
\end{enumerate}
\end{remark}

\subsection{Semantics}
\label{subsec:semantics}
Let $\A$ denote the set of ASCII characters and $\A^{*}$ the set of all finite-length strings over $\A$.
Let $\Z$ be the set of integers $i, j, k, \dots$, $\N$ the set of natural numbers $m, n, \cdots$, and $\B=\{\t,\f\}$ the set of boolean values $b$.
We denote by $\V:=\Z\cup\B$ the set of all possible values $v$ that variables can be assigned during execution.
A tuple of values $(v_1, \ldots, v_m)$ is abbreviated to $\widetilde{v}$ with its size $\size{\widetilde{v}}:=\sum_{i=1}^m\size{ v_i}$.
Accordingly, the input variables of program $\p$ are often abbreviated to $\widetilde{x}$.

The \textit{auxiliary interpretation operator} $\smap{\cdot}{m}$ is defined partially on $\A^{*}$.
Given an operator $\op$ and its arity $\arity(\op)$, we can
interpret it as an $\arity(\op)$-ary total function $\smap{\op}{\arity(\op)}$.
For all operators given in Fig.~\ref{fig:vocalbulary}, the interpretation is straightforward.
Recall that $\smap{\texttt{-}}{1}$ is defined as negation function, and $\smap{\texttt{-}}{2}$ is the subtraction function.
For maintaining the totality of functions, $\smap{\texttt{/}}{2}$ and $\smap{\texttt{\%}}{2}$ are defined to return $0$ when the divisor is $0$.
Conventional language implementations typically handle such situations by throwing a runtime error.
A constant can be treated as a $0$-ary operator as follows.
\begin{equation}
    \smap{\True}{0}=\t\;,\quad \smap{\False}{0}=\f,
\end{equation}
\begin{equation}
    \smap{\texttt{0}}{0}=0,\dots,\smap{\texttt{9}}{0}=9,\quad\smap{w\alpha}{0}=10\smap{w}{0}+\smap{\alpha}{0}\text{ for }w\in\D^{+}\text{ and }\alpha\in \D.
\end{equation}
When applied to an element $\t$ of $\T\subseteq \A^{*}$, the function $\smap{\cdot}{0}$ is defined as:
\begin{equation}
   \smap{\tt}{0}=
\begin{cases}
0\,,\quad&\text{if}\;\tt\in\Int, \\
\f\,,\qquad&\text{if}\;\tt\in\Bool,
\end{cases}
\end{equation}
which gives the initial values for variables of these types.
From now on, we will omit the subscript of $\smap{\cdot}{m}$ whenever there is no ambiguity.

A \textit{store environment} $\Sigma$ is essentially a mapping from $\L$ to $\V$ with a finite domain $\dom(\Sigma)$.
It is a snapshot of the variable values at a certain moment during program execution.
An environment $\Sigma$ can be spelt out as $[x_1\mapsto v_1][x_2\mapsto v_2]\cdots[x_m\mapsto v_m]$ where $\dom(\Sigma)=\{x_1,\dots,x_m\}$.
The empty store environment is denoted by $\emptyset$, and $\Sigma[x\mapsto v]$ is an \textit{update} of $\Sigma$ defined as follows:
\begin{equation}
    \emptyset(x)= \;\uparrow\;,\quad\Sigma[x\mapsto v](y)=\begin{cases}\Sigma(y)\;, &\texttt{if }y\not\equiv x, \\v\;, &\texttt{if }y\equiv x, \end{cases}
\end{equation}
where $\uparrow$ indicates undefinedness and $\equiv$ is the syntactic equality.
For $m$-tuples $\widetilde{x}$ and $\widetilde{v}$, the update $\Sigma[x_1\mapsto v_1]\cdots[x_m\mapsto v_m]$ is abbreviated to $\Sigma[\widetilde{x}\,\mapsto\,\widetilde{v}]$.

\begin{figure*}[t]
    \centering
    \[
    \srule{x\in\dom(\Sigma)}{\semantics{\Sigma}{x}{\Sigma(x)}}{Var}\qquad
    \srule{\c\in\C}{\semantics{\Sigma}{\c}{\smap{\c}{}}}{Const}\qquad
    \srule{\semantics{\Sigma}{\e}{v}}{\semantics{\Sigma}{\texttt{(e)}}{v}}{Paren}
    \]
    \[
    \srule{\semantics{\Sigma}{\widetilde{\e}}{\widetilde{v}}\quad \op\in\O}{\semantics{\Sigma}{\texttt{op($\widetilde{\e}$)}}{\smap{\mathtt{op}}{}(\widetilde{v})}}{Op}\qquad
    \srule{}{\semantics{\Sigma}{\tt\;x\texttt{;}}{\Sigma[x\mapsto{\smap{\tt}{}}]}}{Decl}\]
    \[
    \srule{\semantics{\Sigma}{\;\widetilde{\s}\;}{\Sigma'}}{\semantics{\Sigma}{\texttt{\{ }\widetilde{\s}\texttt{ \}}}{\Sigma'}}{Block}\qquad
    \srule{x\in\dom(\Sigma)\quad \semantics{\Sigma}{\e}{v}}{\semantics{\Sigma}{\texttt{$x$=e;}}{\Sigma[x\mapsto{v}]}}{Asgmt}
    \]
    \[
    \srule{\Sigma_0:=\Sigma\quad\semantics{\Sigma_{i-1}}{\s_i}{\Sigma_{i}} ,\quad\text{for}\;1\le i\le m.}{\semantics{\Sigma}{\;\widetilde{\s}\;}{\Sigma_m}}{Seq}\]
    \[
    \srule{\semantics{\Sigma}{\e}{b}\quad \Sigma':=\begin{cases}\Sigma_1\,,\;&\text{if}\;b\land \semantics{\Sigma}{\s_1}{\Sigma_1}, \\
    \Sigma_2\,,\;&\text{if}\;\lnot b\land \semantics{\Sigma}{\s_2}{\Sigma_2}.\end{cases}}
    {\semantics{\Sigma}{\texttt{\textbf{if}(}\e\texttt{) } \s_1 \texttt{ \textbf{else} } \s_2}{\Sigma'}}{Cond}
    \]
    \[
    \srule{\semantics{\Sigma}{\e}{i}\quad\Sigma_0:=\Sigma\quad\semantics{\Sigma_{j}[x\mapsto j]}{\s}{\Sigma_{j+1}},\;\text{for}\;\;0\le j< i.}{\semantics{\Sigma}{\texttt{\textbf{for}(}x \texttt{<} \e \texttt{) } \s}{\Sigma_{i}}}{Loop}
    \]
    \[
    \srule{\semantics{\Sigma}{\;\widetilde{\s}\;}{\Sigma'}\quad\semantics{\Sigma'}{\e}{i}}{\semantics{\Sigma}{\texttt{\textbf{int} \textbf{main}(}\widetilde{\tt\;x}\texttt{)\{ }\widetilde{\s}\texttt{ \textbf{return} }\e\texttt{;\}}}{i}}{Prog}
    \]
    \caption{Semantics of CorePolyC.}
    \label{fig:semantics}
\end{figure*}

Once we know the current store environment, we can evaluate an expression to a value.
The \emph{semantics} of CorePolyC is defined by the structural operational rules in Fig.~\ref{fig:semantics}.
The judgments $\semantics{\Sigma}{\e}{v}$ specifies that the evaluation of $\e$ terminates with value $v$ under the store environment $\Sigma$.
The judgement $\semantics{\Sigma}{\widetilde{\e}}{\widetilde{v}}$ is an abbreviation for $\semantics{\Sigma}{\e_i}{v_i}$ for $1 \leq i \leq m$, assuming that the length $m$ of the tuple is known.
Similarly, the judgment $\semantics{\Sigma}{\s}{\Sigma'}$ states that under the store environment $\Sigma$, the execution of $\s$ terminates with the effect that updates $\Sigma$ to $\Sigma'$.
The judgement $\semantics{\Sigma}{\p}{v}$ indicates that the execution of program $\p$ terminates with output $v$ under $\Sigma$.

For $w\in\E\cup\S\cup\P$, if $\semantics{\Sigma}{w}{\cdot}$, let the derivation tree of rule instances that lead to this conclusion be denoted by $\evalTree{\Sigma}{w}$.
It is well-defined and unique since the semantics rules are all deterministic.

\subsection{Termination-Guarantee Typing System}
\label{sec:semantics}

In this part, we present the typing rules of CorePolyC, which enforce the special restrictions on iterable variables.
It is expected that the execution of every well-typed CorePolyC program terminates and produces the correct output.

\subsubsection{Typing Rules.}
A \textit{typing environment} $\Gamma:\L\to\T$ is a partial map from variable names to types such that the domain $\dom(\Gamma)$ is finite.
The operations of $\Gamma$ are defined similarly as those of $\Sigma$.
The empty typing environment is also denoted by $\emptyset$.
The equivalence relation $\comparable$ is induced by the partition of $\T$ by $\Int,\Bool$.
In other words, $\tt_1 \comparable \tt_2$ if $\tt_1,\tt_2\in\Int$ or $\tt_1,\tt_2\in\Bool$.

Let $(\Int, \preccurlyeq)$ be a total order relation determined by $\typeIterInt\preccurlyeq\typeInt$.
For $m$-tuple $\widetilde{\tt}\in\Int^m$,  we write $\bigvee_{i=1}^m \tt_i$ for the supremum under the order relation $\preccurlyeq$.
By definition, $\bigvee_{i=1}^m \tt_i\equiv\typeIterInt$ if and only if $\tt_i\equiv\typeIterInt$ for $1\le i\le m$.
The type of an expression is the supremum of the types of the variables in the expression.

The \textit{auxiliary typing operator} $\tmap{\cdot}{m}$ is defined partially on $\A^{*}$. For $\c\in\C$, $\tmap{\c}{0}\in\T$ is defined as the type of $\c$. Thus,
\begin{equation}
    \tmap{\True}{0}=\tmap{\False}{0}=\typeBool,\qquad \tmap{\texttt{c}}{0}=\typeIterInt,\quad\text{for }\c\in\D^{+}.
\end{equation}
For an $m$-ary operator $\op\in\O$, $\tmap{\op}{m}$ is a partial map from $\T^{m}$ to $\T$: 
\begin{itemize}
    \item $\tmap{\op}{m}(\typeBool^{m})=\typeBool$ if $\op\in\{\texttt{\&\&},\texttt{||},\texttt{!}\}$;
    \item $\tmap{\op}{m}(\widetilde{\tt})=\typeBool$ if $\op\in\{\texttt{>=},\texttt{<=},\texttt{>},\texttt{<},\texttt{==},\texttt{!=}\}$ and $\widetilde{\tt}\in\Int^m$;
    \item $\tmap{\op}{m}(\widetilde{\tt})=\bigvee_{i=1}^{m}t_i$ if $\op\in\{\texttt{+},\texttt{-},\texttt{/},\texttt{\%}\}$ and $\widetilde{\tt}\in\Int^m$;
    \item $\tmap{\op}{1}(\typeIterInt)=\typeIterInt$ if $\op\equiv\opSize$.
\end{itemize}
The subscript $m$ in $\tmap{\cdot}{m}$ is often omitted.

A \textit{loop indicator} $\l$ is a boolean variable indicating whether the current statement is inside any loop body.
The predicate $\assignable$ is defined on $\B\times\T$ by $\assignable(\l\,, \tt)=\lnot (\l\land \tt\equiv\typeIterInt)$, meaning that an assignment can be performed whenever it is not inside a loop or the variable being assigned is not of iterable type.

\begin{figure*}[t]
    \centering
    \[
    \trule{x\in\dom(\Gamma)}{\typing{\Gamma}{\l}{x}{\Gamma(x)}}{Var}\qquad
    \trule{\c\in\C}{\typing{\Gamma}{\l}{\c}{\tmap{\c}{}}}{Const}\qquad
    \trule{\typing{\Gamma}{\l}{\e}{\tt}}{\typing{\Gamma}{\l}{\texttt{(e)}}{\tt}}{Paren}\]
    \[
    \trule{\typing{\Gamma}{\l}{\widetilde{\e}}{\widetilde{\tt}}\quad \widetilde{\tt}\in\dom(\tmap{\mathtt{op}}{})\quad \mathtt{op}\in\O}{\typing{\Gamma}{\l}{\texttt{op($\widetilde{\e}$)}}{\tmap{\mathtt{op}}{}(\widetilde{\tt})}}{Op}
    \qquad
    \trule{\assignable(\l\,,\tt)\quad x\not\in\dom(\Gamma)}{\typing{\Gamma}{\l}{\tt\;x\texttt{;}}{\Gamma[x\mapsto \tt]}}{Decl}
    \]
    \[
    \trule
    {\typing{\Gamma}{\l}{\;\widetilde{\s}\;}{\Gamma'}}
    {\typing{\Gamma}{\l}{\texttt{\{ }\widetilde{\s}\texttt{ \}}}{\Gamma}}
    {Block}
    \qquad
    \trule
    {\typing{\assignable(\l\,,\Gamma(x))\quad\Gamma}{\l}{\e}{\tt}\quad\Gamma(x)\comparable \tt}
    {\typing{\Gamma}{\l}{x\texttt{=e;}}{\Gamma}}
    {Asgmt}\]
    \[
    \trule
    {\Gamma_0:=\Gamma\quad\typing{\Gamma_{i-1}}{\l}{\s_i}{\Gamma_{i}} ,\quad\text{for}\;1\le i\le m.}
    {\typing{\Gamma}{\l}{\;\widetilde{\s}\;}{\Gamma_m}}
    {Seq}
    \]
    \[\trule
    {\typing{\Gamma}{\l}{\e}{\typeBool}\quad \typing{\Gamma}{\l}{\s_1}{\Gamma_1}\quad \typing{\Gamma}{\l}{\s_2}{\Gamma_2}}
    {\typing{\Gamma}{\l}{\texttt{\textbf{if}(}\e\texttt{) } \s_1 \texttt{ \textbf{else} } \s_2}{\Gamma}}
    {Cond}\]
    \[
    \trule
    {\typing{\Gamma}{\l}{\e}{\typeIterInt}\quad x\not\in\dom(\Gamma)\quad\typing{\Gamma[x\mapsto{\tt}]\,}{\t}{\s}{\Gamma'}}
    {\typing{\Gamma}{\l}{\texttt{\textbf{for}(}x \texttt{<} \e \texttt{) } \s}{\Gamma}}
    {Loop}\]
    \[
    \trule
    {\typing{\Gamma[\;\widetilde{x}\mapsto\widetilde{\tt}\;]\,}{\l}{\;\widetilde{\s}\;}{\Gamma'}\quad\typing{\Gamma'}{\l}{\e}{\tt'\in\Int}}
    {\typing{\Gamma}{\l}{\texttt{\textbf{int main}(}\widetilde{\tt\;x}\texttt{)\{ }\widetilde{\s}\texttt{ \textbf{return} }\e\texttt{;\}}}{\typeInt}}
    {Prog}\]
    \caption{The typing rules of CorePolyC.}
    \label{fig:typing}
\end{figure*}

The typing judgment of the form $\typing{\Gamma}{\l}{\e}{\tt}$ states that under the typing environment $\Gamma$ and the loop indicator $\l$, the expression $\e$ has type $\tt$.
Similarly, we use typing judgments of the form $\typing{\Gamma}{\l}{\s}{\Gamma'}$ and $\typing{\Gamma}{\l}{\p}{\tt}$ for statements and programs, respectively.
Fig.~\ref{fig:typing} gives the typing rules for CorePolyC.

We can now formally state that an expression $\e$ is iterable under the typing environment $\Gamma$ if $\typing{\Gamma}{\f}{\e}{\typeIterInt}$ holds.
The definition of $\tmap{\cdot}{}$, together with the rule~\tref{Op} ensure that the operand of the $\opSize$ operation must be iterable.
Note that among these rules, the only place where $\l$ is modified is in the premise of \tref{Loop}. Setting $\l$ to $\t$ indicates that type checking is performed within a loop body.
On the other hand, $\l$ is used in rules \tref{Decl} and \tref{Asgmt}, which formalizes the restrictions on the iterable variables by $\assignable$ (See Example~\ref{exp:typing_derivation} and Example~\ref{exp:loop} for an intuitive explanation).
Proposition \ref{prop:invariant} in the next section provides a formal statement of these observations.

We end this part with some examples to explain our type system.
\begin{example}[Fast multiplication]
\label{example:fast-mul}
 Fig.~\ref{fig:mul} gives two programs in the style of CorePolyC that intend to implement the fast multiplication of two positive integer variables \texttt{x}, \texttt{y}.
\begin{figure}[t]
\begin{minipage}{0.45\linewidth}
\centering
\begin{lstlisting}[xleftmargin=1em,xrightmargin=1em]
int main(int x,int y){
    iint o;
    for(i<size(y)){
        if(y%2!=0){
            o=o+x;
        } else {}
        x=x+x;
        y=y/2;
    }
    return o;
}
\end{lstlisting}
\end{minipage}
\hspace{10mm}
\begin{minipage}{0.45\linewidth}
\centering
\begin{lstlisting}[xleftmargin=1em,xrightmargin=1em]
int main(int x,int y){
    int o; iint z; z=y;
    for(i<size(z)){
        if(y%2!=0){
            o=o+x;
        } else {}
        x=x+x;
        y=y/2;
    }
    return o;
}
\end{lstlisting}
\end{minipage}
\caption{Two CorePolyC implementations of fast multiplication. 
The left one violates the rules~\tref{Loop} and~\tref{Asgmt}, while the right one is well-typed.}
\label{fig:mul}
\end{figure}
Both programs are  well-formed but only the right one is well-typed. The left program violates the rules~\tref{Loop} and~\tref{Asgmt}, because the non-iterable variable \texttt{y} appears as the parameter of \texttt{size} operation, and the iterable variable \texttt{o} is modified in the loop body.

Note that the program on the right declares an iterable variable $z$.
Therefore, this part of the code cannot be placed inside a loop to achieve the effect of \texttt{\textbf{for}(i<\textbf{size}(z)) x=x*x;}, which, as we have explained earlier, should be prohibited.
This is also equivalent to defining a function \texttt{mul(x, y)} and attempting to invoke it within a loop.
In Section~\ref{subsec:language}, we will discuss how to introduce function calls in PolyC to prevent such constructions.
\end{example}

\begin{example}[Typing derivation]
\label{exp:typing_derivation}
Considering the statement $\s_1\equiv\;$\texttt{x=x+x;} and $\Gamma_1=[\texttt{x}\mapsto\typeInt]$, we may derive that $\typing{\Gamma_1}{\t}{\s_1}{\Gamma_1}$.
The corresponding typing derivation is shown as follows:
$$
{
\inference
{
    \assignable(\t,\typeInt)
    \enspace
    {
        \inference
        {
            {
            \inference
            {
                \texttt{x}\in\dom(\Gamma_1)
            }
            {
                \typing{\Gamma_1}{\t}{\texttt{x}}{\typeInt}
            }
            }
            \quad
            (\typeInt,\typeInt)\in\dom(\tmap{\texttt{+}}{})
        }
        {
            \typing{\Gamma_1}{\t}{\texttt{x+x}}{\tmap{\texttt{+}}{}(\typeInt,\typeInt)=\typeInt}\;
        }
    }
    \enspace
    \typeInt\comparable\typeInt
}
{
    \typing{\Gamma_1}{\t}{\s_1}{\Gamma_1}
}
}
$$
where $\tmap{\texttt{+}}{}(\tt_1,\tt_2)=\tt_1\lor \tt_2$.
The above results indicate that if the variable \texttt{x} is of type $\typeInt$, then assignment statements like $\s_1$ can appear within a loop body.
However, if we let $\Gamma_1'=[\texttt{x}\mapsto\typeIterInt]$, the similar type inference cannot be completed because the predicate $\assignable(\t,\Gamma_1'(\texttt{x}))$ evaluates to false.
This implies that if the variable \texttt{x} is of type $\typeIterInt$, then assignments to it cannot occur within any loop body.
The situation with declaration statements is similar.

Furthermore, both $\typing{\Gamma_1}{\f}{\s_1}{\Gamma_1}$ and $\typing{\Gamma_1'}{\f}{\s_1}{\Gamma_1'}$ hold because $\l=\f$ indicates that the current check is carried out in the scenario where $\s_1$ is not within any loop.
In this case, there are no restrictions on assignment or declaration statements.
This example elucidates how $\l$ and $\assignable$ ensure that the program complies with the requirements during type checking.
\end{example}

\begin{example}[Loop]
\label{exp:loop}
Let $\s_2:=\;$\texttt{\textbf{for}(i<\textbf{size}(z)) $\s_1$;} and $\Gamma_2 = \Gamma_1[\texttt{z}\mapsto\typeIterInt]$, with $\Gamma_1$ and $\s_1$ defined in Example \ref{exp:typing_derivation}.
Now, we have $\typing{\Gamma_2}{\f}{\s_2}{\Gamma_2}$ as follows:
{
\small
$$
\inference
{
    {
    \inference
    {
        {
        \inference
        {\texttt{z}\in\dom(\Gamma_2)}
        {\typing{\Gamma_2}{\f}{\texttt{z}}{\typeIterInt}}
        }\;
        \typeIterInt\in\dom(\tmap{\opSize}{})
    }
    {
        \typing{\Gamma_2}{\f}{\texttt{\textbf{size}(z)}}{\typeIterInt}
    }
    }
    \enspace
    \texttt{i}\not\in\dom(\Gamma_2)
    \enspace
    {
    \inference
    {
        \text{$\cdots$}
    }
    {
        \typing{\Gamma_2'}{\t}{\s_1}{\Gamma_2'}
    }
    }
}
{\typing{\Gamma_2}{\f}{\s_2}{\Gamma_2}}
$$
}

\noindent
where $\Gamma_2'=\Gamma_2[\texttt{i}\mapsto\typeIterInt]$.
Since $\Gamma_2'(\texttt{x}) = \Gamma_1(\texttt{x})$, we can obtain $\typing{\Gamma_2'}{\t}{\s_1}{\Gamma_2'}$ similar to Example \ref{exp:typing_derivation}.
Therefore, we have omitted the corresponding part here.
\end{example}

\subsubsection{Termination-Guarantee of Well-Typed Programs.}
The \emph{well-typedness} of syntactical entities with regards to the type system is defined as follows.
\begin{enumerate}
    \item An expression $\e$ is \emph{well-typed} if $\typing{\Gamma}{\l}{\e}{\tt}$ for some $\Gamma$, $\l$ and $\tt$.
    \item A statement $\s$ is \emph{well-typed} if $\typing{\Gamma}{\l}{\s}{\Gamma'}$ for some $\Gamma$, $\l$ and $\Gamma'$.
    \item A program $\p$ is \emph{well-typed} if there exists a type $t$ such that $\typing{\emptyset}{\f}{\p}{\tt}$.
\end{enumerate}
It follows from the structural definition that if $\p$ is well-typed, all syntactically well-formed components within $\p$ are also well-typed.

Now, we establish the connection between types and semantics.
We say that value $v$ is \textit{consistent} with type $\tt$ if $(v,\tt)\in\Z\times\Int\cup\B\times\Bool$.
A store environment $\Sigma$ and a typing environment $\Gamma$ are consistent if $\dom(\Gamma)\subseteq\dom(\Sigma)$, and for any $x\in\text{dom}(\Gamma)$, $\Sigma(x)$ and $\Gamma(x)$ are consistent.
We also say that these two environments $(\Gamma, \Sigma)$ form a consistent pair.
Lemma~\ref{lem:map} formalizes that, in CorePolyC, all constants and operators have consistent types and semantics.

\begin{lemma}\label{lem:map}
The following statements are valid.
\begin{enumerate}
    \item For each $\c\in\C$, $\smap{\c}{}$ and $\tmap{\c}{}$ are consistent.
    \item For each $\tt\in\T$, $\smap{\tt}{}$ and $\tt$ are consistent.
    \item For $\op\in\O$, $\,\widetilde{v}\in\dom(\smap{\op}{})$ and $\;\widetilde{\tt}\in\dom(\tmap{\op}{})$, if $\,\widetilde{v}\,$ are consistent with $\widetilde{\tt}$, then $\smap{\op}{}(\,\widetilde{v}\,)$ are consistent with $\tmap{\op}{}(\,\widetilde{\tt}\,)$.
\end{enumerate}
\end{lemma}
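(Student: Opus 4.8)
The plan is to establish all three parts by directly unfolding the interpretation operators $\smap{\cdot}{}$ and $\tmap{\cdot}{}$ and checking the consistency predicate $(v,\tt)\in\Z\times\Int\cup\B\times\Bool$ case by case. Parts~1 and~2 are immediate; Part~3 carries the substance but is still a finite verification once the right observation about the domains is made.

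For Part~1, I would split on the syntactic form of $\c\in\C$. If $\c\equiv\True$ or $\c\equiv\False$, then $\smap{\c}{}\in\B$ while $\tmap{\c}{}\equiv\typeBool\in\Bool$, so the pair lies in $\B\times\Bool$. If $\c\in\D^{+}$, then the recursive definition of $\smap{\c}{}$ gives a value in $\N\subseteq\Z$, while $\tmap{\c}{}\equiv\typeIterInt\in\Int$, so the pair lies in $\Z\times\Int$. For Part~2 the same dichotomy applies to the initial-value clause: $\smap{\tt}{}=0\in\Z$ when $\tt\in\Int$ and $\smap{\tt}{}=\f\in\B$ when $\tt\in\Bool$, each matching $\tt$.

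For Part~3 the crucial point is that membership of $\widetilde{\tt}$ in $\dom(\tmap{\op}{})$ already forces the input types into a single kind, and consistency of $\widetilde{v}$ with $\widetilde{\tt}$ then forces the input values into the corresponding value kind. I would therefore follow the four clauses defining $\tmap{\op}{}$. For $\op\in\{\texttt{\&\&},\texttt{||},\texttt{!}\}$ the domain restricts every $\tt_i\equiv\typeBool$, hence every $v_i\in\B$, and the logical interpretation returns a value in $\B$, consistent with the output type $\typeBool$. For the comparison operators the inputs are forced into $\Int$, hence into $\Z$, and the interpretation returns a truth value in $\B$, again consistent with $\typeBool$. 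For $\op\in\{\texttt{+},\texttt{-},\texttt{/},\texttt{\%}\}$ the inputs are likewise integer-valued and the interpretation returns an element of $\Z$, consistent with the output type $\bigvee_{i}\tt_i\in\Int$. Finally, for $\opSize$ the single input is of type $\typeIterInt$, hence an integer, and $\smap{\opSize}{}$ returns the bit-size in $\N\subseteq\Z$, consistent with $\typeIterInt$.

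The argument has no deep obstacle; its only delicate point is the interaction between the totality of the semantic operators and the partiality of the typing operators. I would flag explicitly that the degenerate conventions making $\smap{\texttt{/}}{}$ and $\smap{\texttt{\%}}{}$ return $0$ on a zero divisor do not break consistency, since $0\in\Z$ is still consistent with the Int output type. Thus restricting attention to consistent inputs whose types lie in $\dom(\tmap{\op}{})$ suffices, and no case in the totalized semantics escapes the value kind predicted by $\tmap{\op}{}$.
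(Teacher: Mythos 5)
Your proof is correct and follows essentially the same route as the paper's: a direct case analysis unfolding the definitions of $\smap{\cdot}{}$ and $\tmap{\cdot}{}$, where membership of $\widetilde{\tt}$ in $\dom(\tmap{\op}{})$ pins down the input kinds and consistency then pins down the value kinds. You are in fact slightly more thorough than the paper, which works out only the boolean-operator case of Part~3 in full and declares the rest similar, whereas you cover all four clauses and explicitly note that the totalizing convention for \texttt{/} and \texttt{\%} on a zero divisor preserves consistency.
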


Let $\left(\dom(\Gamma_1)\subseteq\dom(\Gamma_2)\right) \wedge \left(\forall x\in\dom(\Gamma_1). \Gamma_1(x)=\Gamma_2(x)\right)$ be abbreviated to $\Gamma_1\subseteq\Gamma_2$.
That is, all variables in $\Gamma_1$ are also contained in $\Gamma_2$ with the same types.
It can be proved that the consistency described above is preserved throughout the type-checking and evaluation processes of expressions and statements.

\begin{lemma}
Suppose $(\Gamma,\Sigma)$ is a consistent pair and $\l$ is a loop indicator.
\begin{enumerate}
    \item For each expression $\e$, if $\typing{\Gamma}{\l}{\e}{\tt}$ for some $\tt$, there exists a unique value $v\in\V$ such that $\semantics{\Sigma}{\e}{v}$ and $v$ is consistent with $\tt$.
    \item For each statement $\s$, if $\typing{\Gamma}{\l}{\s}{\Gamma'}$ for some $\Gamma'$, then $\Gamma\subseteq\Gamma'$ and there exists a unique store $\Sigma'$ consistent with $\Gamma'$ such that $\semantics{\Sigma}{\s}{\Sigma'}$.
\end{enumerate}
\label{lem:preservation}
\end{lemma}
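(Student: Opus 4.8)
The plan is to prove Part~1 first by induction on the typing derivation of the expression, and then Part~2 by induction on the typing derivation of the statement, invoking Part~1 at every point where a statement evaluates a subexpression; no genuine mutual recursion is needed, since CorePolyC expressions never embed statements. In both parts the uniqueness of the produced value or store is inherited from the determinism of the semantic rules recorded after Fig.~\ref{fig:semantics}, so I would concentrate on establishing existence together with consistency and read off uniqueness as a byproduct.

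For Part~1 the base cases are \tref{Var} and \tref{Const}. The variable case reads off $v:=\Sigma(x)$ from consistency of $(\Gamma,\Sigma)$: since $x\in\dom(\Gamma)$ we have $x\in\dom(\Sigma)$ with $\Sigma(x)$ consistent with $\Gamma(x)=\tt$, which is exactly the premise of \sref{Var}; the constant case is Lemma~\ref{lem:map}(1). For \tref{Op}, the inductive hypothesis yields, for each argument, a value $v_i$ consistent with its type $\tt_i$; totality of $\smap{\op}{}$ makes $\widetilde{v}\in\dom(\smap{\op}{})$ so that \sref{Op} fires, and Lemma~\ref{lem:map}(3) certifies that $\smap{\op}{}(\widetilde{v})$ is consistent with $\tmap{\op}{}(\widetilde{\tt})=\tt$. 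The parenthesis case is immediate.

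For Part~2 the monotonicity $\Gamma\subseteq\Gamma'$ is trivial in every rule whose output type equals its input. In \tref{Decl} it holds because $x\notin\dom(\Gamma)$ gives $\Gamma\subseteq\Gamma[x\mapsto\tt]$, and Lemma~\ref{lem:map}(2) makes the fresh binding $x\mapsto\smap{\tt}{}$ consistent with $\tt$. In \tref{Seq}, applying the inductive hypothesis stepwise produces consistent pairs $(\Gamma_i,\Sigma_i)$ with $\Gamma=\Gamma_0\subseteq\cdots\subseteq\Gamma_m=\Gamma'$ and $\Sigma_m$ consistent with $\Gamma'$, delivering both the terminal store and the inclusion. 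The assignment case \tref{Asgmt} turns on one small observation: Part~1 gives $v$ consistent with the expression type $\tt$, and because $\Gamma(x)\comparable\tt$ places $\Gamma(x)$ and $\tt$ in the same block of the $\Int/\Bool$ partition on which consistency is defined, $v$ is consistent with $\Gamma(x)$ too; the premise $x\in\dom(\Sigma)$ of \sref{Asgmt} again comes from consistency, since \tref{Asgmt} mentions $\Gamma(x)$. In \tref{Block} and \tref{Cond} the output type is the \emph{original} $\Gamma$ while the semantics returns the inner store $\Sigma'$; here I would use that the inductive hypothesis gives $\Sigma'$ consistent with some $\Gamma''\supseteq\Gamma$ and that consistency restricts down to any sub-environment, so $\Sigma'$ is consistent with $\Gamma$.

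The main obstacle is the loop case \tref{Loop}. Using Part~1 I would first evaluate the bound to a unique integer $i$ consistent with $\typeIterInt$, and then run an inner induction on the iteration index $j$ to build $\Sigma_0,\dots,\Sigma_i$, carrying the invariant that $(\Gamma[x\mapsto\typeIterInt],\,\Sigma_j[x\mapsto j])$ is a consistent pair. The counter update $x\mapsto j$ respects this because $j\in\Z$ is consistent with $\typeIterInt\in\Int$; the body is typed once under loop indicator $\t$ with output $\Gamma'$, so applying the inductive hypothesis at step $j$ produces $\Sigma_{j+1}$ consistent with $\Gamma'$, from which consistency with the smaller $\Gamma[x\mapsto\typeIterInt]$ follows by restriction, reinstating the invariant after the next counter update. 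A point worth care is that any non-iterable variables declared inside the body lie in $\Gamma'\setminus\dom(\Gamma[x\mapsto\typeIterInt])$, so their stale or reset values across iterations never threaten consistency with $\Gamma$. Finally, whether $i=0$ (so $\Sigma_i=\Sigma$) or $i>0$, the terminal store $\Sigma_i$ is consistent with $\Gamma$ by restriction from $\Gamma'\supseteq\Gamma$, and $\Gamma\subseteq\Gamma$ supplies the monotonicity, closing the case.
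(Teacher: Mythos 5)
Your proposal is correct and follows essentially the same route as the paper's proof: structural induction on the typing derivations, with Lemma~\ref{lem:map} discharging the constant, declaration, and operator cases, uniqueness inherited from determinism of the semantics, and the loop case handled by exactly the paper's iteration argument (consistency of $\Sigma_j[x\mapsto j]$ with $\Gamma[x\mapsto\typeIterInt]$, applying the inductive hypothesis to the body, then restricting consistency from $\Gamma'$ back down via $\Gamma[x\mapsto\typeIterInt]\subseteq\Gamma'$). Your added observations --- that the premise $x\in\dom(\Sigma)$ of \sref{Asgmt} follows from consistency, and that variables declared in the loop body lie outside $\dom(\Gamma[x\mapsto\typeIterInt])$ --- are sound refinements of the same argument rather than a different approach.
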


Recall that all programs have only integer inputs.
We now present the main theorem of this section, which states that, given a set of inputs, any CorePolyC program will terminate and produce a unique integer output.

\begin{theorem}[Type Safety]
    Given a program $\p$ whose input is an $m$-tuple $\widetilde{x}$, for any $\widetilde{v}\in\Z^m$, there exists a unique integer $i$ such that $\semantics{\emptyset[\widetilde{x}\mapsto{\widetilde{v}}]}{\p}{i}$.
    \label{thm:safety}
\end{theorem}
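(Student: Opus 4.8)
The plan is to derive Theorem~\ref{thm:safety} as an almost immediate corollary of Lemma~\ref{lem:preservation}, since the heavy lifting---structural induction over statements and the finiteness of loop iterations---has already been carried out there. Throughout I assume $\p$ is well-typed, i.e. $\typing{\emptyset}{\f}{\p}{\typeInt}$, which is the intended reading of a type-safety statement in this subsection on well-typed programs; otherwise the claim would fail for well-formed but ill-typed programs.

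First I would fix the input values $\widetilde{v}\in\Z^m$ and set up the initial environments $\Sigma_0:=\emptyset[\widetilde{x}\mapsto\widetilde{v}]$ and $\Gamma_0:=\emptyset[\widetilde{x}\mapsto\widetilde{\tt}]$, where $\widetilde{\tt}$ are the declared parameter types. Since every parameter is of type $\typeInt$ and every $v_i\in\Z$, each $v_i$ is consistent with $\tt_i$, and $\dom(\Gamma_0)=\dom(\Sigma_0)=\{x_1,\dots,x_m\}$; hence $(\Gamma_0,\Sigma_0)$ is a consistent pair, which is exactly the hypothesis needed to invoke the preservation lemma.

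Next I would invert the typing judgment $\typing{\emptyset}{\f}{\p}{\typeInt}$. Because each syntactic form has a unique applicable rule, inversion on \tref{Prog} yields a $\Gamma'$ with $\typing{\Gamma_0}{\f}{\widetilde{\s}}{\Gamma'}$ and $\typing{\Gamma'}{\f}{\e}{\tt'}$ for some $\tt'\in\Int$. Applying Lemma~\ref{lem:preservation}(2) to the consistent pair $(\Gamma_0,\Sigma_0)$ and the typed statement sequence gives a unique store $\Sigma'$, consistent with $\Gamma'$, such that $\semantics{\Sigma_0}{\widetilde{\s}}{\Sigma'}$. Now $(\Gamma',\Sigma')$ is again consistent, so Lemma~\ref{lem:preservation}(1) applied to $\typing{\Gamma'}{\f}{\e}{\tt'}$ produces a unique value $i$ with $\semantics{\Sigma'}{\e}{i}$ consistent with $\tt'$. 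Since $\tt'\in\Int$, consistency forces $i\in\Z$, so $i$ is indeed an integer. The semantic rule \sref{Prog} then assembles these two evaluations into $\semantics{\Sigma_0}{\p}{i}$, establishing existence.

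For uniqueness I would rely on the uniqueness clauses already supplied by Lemma~\ref{lem:preservation}: the intermediate store $\Sigma'$ and the output value $i$ are each determined uniquely, reinforced by the determinism of the semantics (the derivation tree $\evalTree{\Sigma_0}{\p}$ is unique whenever it exists). Consequently the integer $i$ with $\semantics{\Sigma_0}{\p}{i}$ is unique. I do not anticipate a genuine obstacle at the program level; any real difficulty is absorbed into Lemma~\ref{lem:preservation}, whose loop case is the subtle point---there one must argue that evaluating the loop bound to a fixed integer up front guarantees exactly that many iterations (and hence termination), and that the non-iterable typing discipline enforced by $\assignable$ keeps the consistency invariant stable across iterations.
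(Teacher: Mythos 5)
Your proposal is correct and follows essentially the same route as the paper's own proof: establish consistency of the initial pair $(\emptyset[\widetilde{x}\mapsto\widetilde{\tt}],\emptyset[\widetilde{x}\mapsto\widetilde{v}])$ using the fact that all parameters have type $\typeInt$, invert the rule \tref{Prog}, apply Lemma~\ref{lem:preservation} first to the statement sequence and then to the return expression (with $\tt'\in\Int$ forcing $i\in\Z$), and assemble the result via \sref{Prog}, with uniqueness coming from the lemma and the determinism of the semantics. Your explicit remark that well-typedness must be assumed (implicit in the paper's ``all programs are well-typed'' convention) is a fair clarification, not a deviation.
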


Given two consistent store and typing environments $\Sigma$ and $\Gamma$, let $\domi{\Gamma}$ denote the set $\{x \in \dom(\Gamma) : \Gamma(x) \equiv \typeIterInt\}$.
The \textit{iterable restriction} of $\Sigma$ on $\Gamma$ is defined as $\restr{\Sigma}{\Gamma} := \Sigma \restriction \domi{\Gamma}$.
Two store environments $\Sigma_1$ and $\Sigma_2$ are equivalent, denoted as $\Sigma_1 \sim \Sigma_2$, if  $\dom(\Sigma_1) = \dom(\Sigma_2)$, and $\Sigma_1(x) = \Sigma_2(x)$ for every $x \in \dom(\Sigma_1)$.
The behaviors of the iterable variables are made predictable by the strong constraints imposed on their usage.
Proposition~\ref{prop:invariant} shows that our type system ensures that the iterable variables are well-behaved.

\begin{proposition}[Loop Invariant]
\label{prop:invariant}
Given a statement $\s$ and a consistent pair $(\Sigma,\Gamma)$, if there exists $\Gamma'$ such that $\typing{\Gamma}{\t}{\s}{\Gamma'}$, then
\begin{enumerate}
    \item $\domi{\Gamma'}=\domi{\Gamma}$, and
    \item there exists $\Sigma'$ such that $\semantics{\Sigma}{\s}{\Sigma'}$ and $\restr{\Sigma'}{\Gamma'}\sim\restr{\Sigma'}{\Gamma}\sim\restr{\Sigma}{\Gamma}$.
\end{enumerate}
These two conclusions above correspond respectively to i) no new iterable variables are declared, and ii) no existing iterable variables have their values changed.
\end{proposition}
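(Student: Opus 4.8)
The plan is to prove both conclusions at once by induction on the derivation of the typing judgement $\typing{\Gamma}{\t}{\s}{\Gamma'}$, with the loop indicator pinned to $\t$ in every recursive call (the rules \tref{Block}, \tref{Cond}, \tref{Seq} and \tref{Loop} all hand $\t$ down to their sub-statements). First I would note that the existence, uniqueness and consistency of the terminal store $\Sigma'$ with $\semantics{\Sigma}{\s}{\Sigma'}$ are already supplied by Lemma~\ref{lem:preservation}; so the genuinely new content is the pair of invariants (i) $\domi{\Gamma'}=\domi{\Gamma}$ and (ii) $\restr{\Sigma'}{\Gamma}\sim\restr{\Sigma}{\Gamma}$. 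Observe that (i) upgrades (ii) to the full chain $\restr{\Sigma'}{\Gamma'}\sim\restr{\Sigma'}{\Gamma}\sim\restr{\Sigma}{\Gamma}$, because $\restr{\cdot}{\Gamma}$ depends on $\Gamma$ only through $\domi{\Gamma}$. This last remark is the workhorse of the argument: whenever the intermediate environments of a derivation share the same iterable domain, all their iterable restrictions literally coincide, so the relation $\sim$ may be chained freely.

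The two base cases are where the loop indicator does the real work. For a declaration $\tt\;x\texttt{;}$, rule \tref{Decl} fires only under $\assignable(\t,\tt)$, and since $\assignable(\t,\tt)=\lnot(\tt\equiv\typeIterInt)$ the declared variable is non-iterable; hence the new binding leaves $\domi{}$ unchanged and the update $\Sigma[x\mapsto\smap{\tt}{}]$ never touches an iterable cell. For an assignment $x\texttt{=e;}$, rule \tref{Asgmt} requires $\assignable(\t,\Gamma(x))$, forcing $\Gamma(x)\not\equiv\typeIterInt$, so the mutated location again lies outside $\domi{\Gamma}$ and both invariants hold with $\Gamma'\equiv\Gamma$.

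For the compound cases \tref{Block}, \tref{Cond} and \tref{Seq} I would simply feed the induction hypothesis into each sub-statement. Their output environment is $\Gamma$ itself (block and conditional) or the final environment of the sequence, and the hypothesis guarantees that every intermediate environment keeps the iterable domain of $\Gamma$; by the workhorse remark the $\sim$-equivalences compose, and for the conditional one uses only the hypothesis for the branch the semantics selects. The consistency of the successive pairs $(\Gamma_{i-1},\Sigma_{i-1})$ needed to invoke the hypothesis along a sequence is re-established at each step by Lemma~\ref{lem:preservation}.

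The crux, and the step I expect to resist routine treatment, is the loop \tref{Loop}. Its body is checked in $\Gamma[x\mapsto\typeIterInt]$ with the indicator already $\t$, so the hypothesis applies to a single pass, but the invariant it returns now mentions the iterable counter $x$, whose value the semantics deliberately advances from $j$ to $j+1$ between passes. The idea is an inner induction on the iteration index $0\le j<i$: assuming $(\Gamma[x\mapsto\typeIterInt],\Sigma_j[x\mapsto j])$ is consistent, the hypothesis yields $\restr{\Sigma_{j+1}}{\Gamma[x\mapsto\typeIterInt]}\sim\restr{\Sigma_j[x\mapsto j]}{\Gamma[x\mapsto\typeIterInt]}$; dropping the counter from both sides --- legitimate since $x\notin\domi{\Gamma}$ --- gives $\restr{\Sigma_{j+1}}{\Gamma}\sim\restr{\Sigma_j}{\Gamma}$, i.e. one pass of the body freezes every genuine iterable variable. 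Chaining across all $i$ passes delivers $\restr{\Sigma_i}{\Gamma}\sim\restr{\Sigma}{\Gamma}$, and since the loop exports $\Gamma'\equiv\Gamma$ conclusion (i) is free. The delicate points to get right are the bookkeeping that keeps $x$ iterable yet unassignable inside the body --- which is precisely what forbids $\domi{}$ from growing and what makes the body's freezing property meaningful --- and the repeated appeal to Lemma~\ref{lem:preservation} to restore consistency of the pair handed to the next pass.
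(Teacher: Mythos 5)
Your proposal is correct and follows essentially the same route as the paper's proof: a case analysis on the typing derivation in which \tref{Decl} and \tref{Asgmt} are handled by observing that $\assignable(\t,\cdot)$ forces the declared/assigned variable to be non-iterable, \tref{Seq} by chaining, and \tref{Block}, \tref{Cond}, \tref{Loop} by noting $\Gamma'=\Gamma$ and appealing to the induction hypothesis, with Lemma~\ref{lem:preservation} supplying the existence of $\Sigma'$. Your inner induction over the iterations of \tref{Loop}, including the step of dropping the counter $x$ from the restriction because $x\notin\domi{\Gamma}$, is a correct and more explicit rendering of what the paper compresses into ``one can easily prove that $\restr{\Sigma'}{\Gamma}\sim\restr{\Sigma}{\Gamma}$ by the induction hypothesis.''
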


\section{Core PolyC is a Model of $\classFP$}\label{sec:main_theorem}

This section substantiates our claim that CorePolyC captures precisely the polynomial time computability.
Unless otherwise stated, all programs will be well-typed CorePolyC programs.
The set of all finite-arity total functions is denoted by $\classF$.
Given an $m$-ary program $\p$ with input variables $\widetilde{x}$, for any $\widetilde{v}\in\Z^{m}$, we define $\smap{\p}{}{(\widetilde{v})}$ as the value $v'$ such that $\semantics{\emptyset[\widetilde{x}\mapsto\widetilde{v}]}{\p}{v'}$.
By Theorem \ref{thm:safety}, $\smap{\p}{}$ is a well-defined total function from $\Z^{m}$ to $\Z$, i.e., $\smap{\p}{}\in\classF$.
We call an $f\in \classF$ to be  \textit{CorePolyC Computable} if there is a CorePolyC program $\p$ such that $\smap{\p}{}=f$.
Let $\classPC$ denote the set of all CorePolyC computable functions.
The following theorem is the main result of the paper.
\begin{theorem}[Main Theorem]\label{thm:main}
$\classPC=\classFP$.
\end{theorem}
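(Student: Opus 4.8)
The plan is to establish the equivalence $\classPC=\classFP$ by proving the two inclusions separately. The forward inclusion $\classPC\subseteq\classFP$ (soundness) requires showing that every well-typed CorePolyC program runs in polynomial time; the reverse inclusion $\classFP\subseteq\classPC$ (completeness) requires simulating an arbitrary polynomial-time Turing machine computation inside CorePolyC. I expect soundness to be the substantially harder direction, since the naturalness of the language forces the time bound to emerge from the global interaction of nested loops and the iterable-type discipline, rather than from any explicit syntactic resource annotation.

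\medskip\noindent\textbf{Soundness: $\classPC\subseteq\classFP$.}
First I would prove that the bit-size of every value stored during execution stays polynomially bounded in the input size. The crucial control on size growth comes from Proposition~\ref{prop:invariant}: inside any loop body the iterable variables cannot change value, so every loop bound $\opSize(\e)$—which by rule~\tref{Loop} must be an iterable expression—is fixed throughout the loop and is bounded by a polynomial in the input size. Consequently each loop iterates at most $\O(\polyn)$ times. The absence of multiplication is what keeps size growth additive rather than multiplicative: each operator application increases bit-size by at most an additive constant (addition of two $k$-bit numbers yields at most $k+1$ bits), so a single pass through a loop body of size $s$ increases any value by at most $\O(s)$ bits. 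Iterating this over a fixed polynomial number of rounds, and then over a fixed nesting depth of loops, I would show by induction on the loop-nesting structure that after executing a statement $\s$ the size of every variable is bounded by the initial size plus a polynomial depending only on $\s$ and the input. Having bounded all intermediate sizes by a fixed polynomial, I would bound the total running time: the size of the derivation tree $\evalTree{\Sigma}{\p}$ is polynomial because each loop contributes a polynomial multiplicative factor and the nesting depth is a fixed constant of the program, while each individual operator evaluation on polynomially-bounded operands costs polynomial time. Combining the size bound and the iteration-count bound yields an overall polynomial time bound, so $\smap{\p}{}\in\classFP$.

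\medskip\noindent\textbf{Completeness: $\classFP\subseteq\classPC$.}
For the reverse inclusion I would simulate a polynomial-time Turing machine $M$ computing a given $f\in\classFP$, say running in time $n^{k}$ on inputs of size $n$. The strategy is to represent the tape contents and machine configuration as integers and to drive the simulation with nested loops. The one delicate point is producing the clock: I need a variable that can be iterated $n^{k}$ times while the loop bounds remain size expressions over iterable variables. I would first build, from the input, an iterable variable $z$ whose bit-size is $\Theta(n^{k})$—this is the step where the iterable discipline matters, since $z$ must be constructed \emph{before} the simulation loop and cannot be rebuilt inside it (mirroring Example~\ref{example:fast-mul}). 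A single outer loop \texttt{for(t<size(z))} then supplies the required $n^{k}$ simulation steps. Inside one step I would decode the current head symbol, apply the transition table via nested \texttt{if} statements (a finite case split on state and symbol), and update the tape encoding and head position using only addition, subtraction, division, and remainder—all available operators—taking care that every write targets a non-iterable working variable so the assignment rules~\tref{Asgmt} are respected.

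\medskip\noindent\textbf{The main obstacle.}
The hardest part is the size-growth bound in the soundness direction, and specifically making the induction on loop nesting go through uniformly. The subtlety is that a naive induction would allow the size increment of an inner loop to depend on values set by an enclosing loop, which could compound across nesting levels; the argument must exploit the fact that loop bounds are frozen iterable expressions (Proposition~\ref{prop:invariant}) so that the per-level polynomial factors multiply to a \emph{fixed} polynomial rather than escalating with the input. I would formalize this by attaching to each statement a size-bounding polynomial computed compositionally from its subcomponents, and proving the bound simultaneously with the consistency statements of Lemma~\ref{lem:preservation}; establishing this compositional invariant correctly, rather than the Turing-machine encoding, is where the genuine difficulty of the theorem lies.
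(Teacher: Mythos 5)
Your proposal is correct and coincides with the paper on two of its three pillars. Your compositional size analysis is exactly the paper's ``Branch~II'': the additive operator bound is Lemma~\ref{lem:op-size}, the expression bound is Lemma~\ref{lem:expr-size}, and your induction on loop nesting, with the increment inside a loop depending only on the frozen iterable variables (Proposition~\ref{prop:invariant}), is precisely the two-case invariant of Lemma~\ref{lem:stmt-invariant} ($\l=\t$ versus $\l=\f$), culminating in Proposition~\ref{prop:prog-size}. Your completeness construction --- integer-encoded tape, transition table as a finite case split, and an iterable clock of bit-size $\Theta(n^{k})$ built \emph{before} the single driving loop --- is the paper's Lemma~\ref{lemma:poly-con} plus Lemma~\ref{lem:simulate} (the paper doubles a non-iterable accumulator in $d+1$ nested loops and only afterwards assigns it to the iterable \texttt{cnt}, exactly the ordering constraint you identify). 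Where you genuinely diverge is the passage from size bounds to time bounds. You count directly: polynomial intermediate sizes give polynomial loop bounds, constant nesting depth gives a polynomially sized derivation tree, and each operator application on polynomial-size operands is cheap, so the whole run is simulable by a Turing machine in polynomial time. The paper instead proves the stronger standalone equivalence of Lemma~\ref{lem:time-size} --- instruction count, output size, and intermediate-value size are $O(T(n))$-bounded simultaneously or not at all, established via the program transformations $T_1$ and $T_2$ --- and routes soundness through Lemma~\ref{lem:class}. Your direct count is more elementary and fully adequate for Theorem~\ref{thm:soundness}; the paper's equivalence buys reusability (it is re-instantiated for PolyC in Lemma~\ref{lem:polyc-class} and feeds the interpreter-complexity analysis of Proposition~\ref{prop:poly-encode}) and the neater reduction ``a bound on output size alone already bounds running time,'' which your argument does not yield. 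One small point to make explicit when formalizing: the semantics evaluates each loop bound once at entry, so Proposition~\ref{prop:invariant} is needed not to fix a single loop's bound but to prevent the re-evaluated bounds of \emph{inner} loops from growing across outer iterations; your ``main obstacle'' paragraph already has this right, and it is exactly how the \tref{Loop} case of Lemma~\ref{lem:stmt-invariant} is discharged.
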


\begin{figure}[t]
    \centering
\usetikzlibrary{arrows.meta, positioning, decorations.pathreplacing}
\begin{tikzpicture}[
  theorem/.style={align=center,font=\fontsize{6.8pt}{8pt}\selectfont},
  lemma/.style={align=center,font=\fontsize{6.5pt}{8pt}\selectfont},
  arrow/.style={->,> = latex'},
]

\node[theorem] (thm2) at (0,0) {Theorem~\ref{thm:main}\\(Main Theorem)};
\node[theorem] (thm3) at (2.1,0.8) {Theorem~\ref{thm:soundness}\\(Soundness)};
\node[theorem] (thm4) at (2.2,-0.8) {Theorem~\ref{thm:completeness}\\(Completeness)};
\draw [decorate, decoration={brace, amplitude=4pt}] (1.2,-0.8) -- (1.2,0.8);

\node[lemma] (lemma4) at (4,1.4) {Lemma~\ref{lem:class}\\(Time)};
\node[lemma] (lemma3) at (6.3,1.4) {Lemma~\ref{lem:time-size}\\(Time vs. Space)};
\draw[arrow] (lemma3) -- (lemma4);

\node[lemma] (prop2) at (4,0.2) {Proposition~\ref{prop:prog-size}\\(Programs)};
\node[lemma] (lemma7) at (6,0.2) {Lemma~\ref{lem:stmt-invariant}\\(Statements)};
\node[lemma] (lemma6) at (8,0.2) {Lemma~\ref{lem:expr-size}\\(Expressions)};
\node[lemma] (lemma5) at (10,0.2) {Lemma~\ref{lem:op-size}\\(Operators)};

\draw[arrow] (lemma5) -- (lemma6);
\draw[arrow] (lemma6) -- (lemma7);
\draw[arrow] (lemma7) -- (prop2);

\draw [decorate, decoration={brace, amplitude=4pt}]
  (3.1,0.2) -- (3.1,1.4);

\definecolor{blue}{HTML}{191A94}
\draw[blue, dashed, opacity=1] (3.16,-0.2) rectangle (11,0.6);
\node[theorem,text=blue] (ex1) at (9.1,-0.6) {\textbf{Branch II:} Analysis of the\\output size step by step};

\definecolor{cyan}{HTML}{04579B}
\draw[cyan, dashed, opacity=1] (3.16,1) rectangle (11,1.8);
\node[theorem,text=cyan] (ex1) at (9.1,1.4) {\textbf{Branch I:} Reduction\\from time to space};

\node[lemma] (lemma9) at (4.6,-0.8) {Lemma~\ref{lem:simulate}\\(Simulation)};
\node[lemma] (lemma8) at (6.6,-0.8) {Lemma ~\ref{lemma:poly-con}\\(Clock)};

\draw[arrow] (lemma8) -- (lemma9);
\draw[arrow] (lemma9) -- (thm4);
\definecolor{gray}{HTML}{0E600F}
\draw[gray, dashed, opacity=1] (3.7,-1.2) rectangle (7.4,-0.4);
\node[theorem,text=gray] (ex1) at (6.5,-1.4) {Simulation of polynomial-time Turing machines};

\end{tikzpicture}

    \caption{The dependency graph for the proof of Theorem~\ref{thm:main}. }
    \label{fig:dependency_graph}
\end{figure}

This section is devoted to the proof of Theorem~\ref{thm:main}.
The overall proof is divided into two main parts, as illustrated in Fig.~\ref{fig:dependency_graph}:
\begin{enumerate}
    \item In Section~\ref{subsec:soundness}, we establish the soundness direction $\classPC \subseteq \classFP$ (Theorem~\ref{thm:soundness}), i.e., all CorePolyC programs run in polynomial time.
   Since all operators are interpreted as polynomial-time computable functions, it suffices to show that (i) every well-typed CorePolyC program performs only a polynomial number of operations, and (ii) each operation is applied to operands whose sizes are polynomially bounded.
   Interestingly, due to the well-structured nature of CorePolyC programs, (i) can be reduced to the proof of (ii) (Branch I).
   Therefore, it is sufficient to analyze how variable sizes grow step by step, from operators to expressions, statements, and finally programs (Branch II).
   \item In Section~\ref{sec-Completeness}, we prove the completeness direction $\classFP \subseteq \classPC$ (Theorem~\ref{thm:completeness}).
   We show that CorePolyC programs can construct a “hard-wired clock” for any polynomial, enabling it to simulate each step of a polynomial-time Turing machine, and ultimately produce a consistent output.
\end{enumerate}

\subsection{Soundness}\label{subsec:soundness}

To analyze programs' running time, we introduce the cost semantics for CorePolyC in Fig.~\ref{fig:cost_semantics}.
The judgments are in the form $\cost{\Sigma}{\cdot}{\cdot}{k}$, where $k$ bounds the number of steps of evaluation.
For example, the cost semantic rule for iteration statements is defined as follows.
\[
\srule{\cost{\Sigma}{\e}{i}{k}\quad\Sigma_0:=\Sigma\quad\cost{\Sigma_{j}[x\mapsto j]}{\s}{\Sigma_{j+1}}{k_j},\;\text{for}\;\;0\le j< i.}{\cost{\Sigma}{\texttt{\textbf{for}(x<e) s}}{\Sigma_{i}}{k+\sum_{j=0}^{i-1}k_j}}{Loop}
\]
This rule intuitively means that the running time of executing a loop is the sum of the running time of each iteration.
The complete set of rules is defined in Appendix~\ref{app:cost_semantics}.
Given a program $\p$,  there is a unique $k$ such that $\cost{\emptyset[\;\widetilde{x}\mapsto\widetilde{v}\;]}{\p}{v'}{k}$.
We define the \textit{instruction count} of $\p$ under input $\widetilde{v}$ as $\ic(\p,\widetilde{v}):=k$, which is essentially the running time of $\p$.

\subsubsection{Time versus Space in CorePolyC.}
The function $\size{\cdot}$ can be extended from $\Z$ to $\V$ by defining $\size{\t}=\size{\f}=1$.
Given a store environment $\Sigma$, the notation $\size{\Sigma}$ stands for $\max_{x\in\dom(\Sigma)}\size{\Sigma(x)}$, that is, the maximum length of the values currently stored in $\Sigma$.
Given a derivation tree $\evalTree{\Sigma}{p}$, the size $\size{\evalTree{\Sigma}{p}}$ is defined as
$\max \{\size{\Sigma'}:\Sigma'\text{ appears in }\evalTree{\Sigma}{p}\}$.
Let $T:\N\to\N$ denote a time function with $T(n)\ge n$.
Suppose the size of an input $\;\widetilde{v}$ to a program $\p$ is $n$.
We say that
\begin{enumerate}
    \item  $\p$ \textit{has an instruction count of} $O(T(n))$ if $\ic(\p, \widetilde{v}) = O(T(n))$,
    \item  $\p$ \textit{produces output of size} $O(T(n))$ if $\size{\smap{\p}{}(\widetilde{v})}=O(T(n))$, and
    \item $\p$ \textit{generates (intermediate) values of size} $O(T(n))$ if $\size{\evalTree{\emptyset[\;\widetilde{x}\mapsto\widetilde{v}\;]}{\p}}=O(T(n))$.
\end{enumerate}
These parameters measure the time/space resources consumed by $\p$.
The following crucial lemma of Branch I in Fig.~\ref{fig:dependency_graph} demonstrates that the time complexity and space complexity of CorePolyC programs are strongly correlated.
\begin{lemma}\label{lem:time-size}
Let $T(n)\ge n$ be a time function.
The following three are equivalent:
\begin{enumerate}
    \item All CorePolyC programs have an instruction count of $O(T(n))$.
    \item All CorePolyC programs only produce outputs of size $O(T(n))$.
    \item All CorePolyC programs only generate values of size $O(T(n))$.
\end{enumerate}
\end{lemma}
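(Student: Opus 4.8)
The plan is to prove the cycle $(1)\Rightarrow(3)\Rightarrow(2)\Rightarrow(1)$; the first two implications are routine and only $(2)\Rightarrow(1)$ is substantial. For $(1)\Rightarrow(3)$ I would first record the elementary fact that, because multiplication is absent, every operator enlarges its output by at most a constant: for $\op\in\O$ and consistent $\widetilde{v}$ one has $\size{\smap{\op}{}(\widetilde{v})}\le\max_i\size{v_i}+1$ (addition and subtraction add at most one bit; $\texttt{/}$, $\texttt{\%}$ and $\opSize$ never increase the operand size; relational and boolean operators return a value of size $1$). Hence if $\semantics{\Sigma}{\e}{v}$ for a fixed $\e$, then $\size{v}\le\size{\Sigma}+O(1)$, the constant being the number of operators in $\e$. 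During a run of $\p$ the quantity $\size{\Sigma}$ can grow only at assignments (declarations reset to $0$ or $\f$, and every loop counter $j$ stays below the loop bound, which is the size of some stored value, so $\size{j}\le\size{\Sigma}$), and each assignment raises it by at most a program-dependent constant. Since the number of assignments is at most $\ic(\p,\widetilde{v})$, we obtain $\size{\evalTree{\emptyset[\widetilde{x}\mapsto\widetilde{v}]}{\p}}\le n+O(\ic(\p,\widetilde{v}))=O(T(n))$, using $T(n)\ge n$. The implication $(3)\Rightarrow(2)$ is then immediate, since the output $\smap{\p}{}(\widetilde{v})$ is the value $v'$ of the return expression under the final store $\Sigma'$, whence $\size{\smap{\p}{}(\widetilde{v})}=\size{v'}\le\size{\Sigma'}+O(1)\le\size{\evalTree{\emptyset[\widetilde{x}\mapsto\widetilde{v}]}{\p}}+O(1)=O(T(n))$.

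The core of the lemma is $(2)\Rightarrow(1)$, where a large instruction count must be turned into a large \emph{output size}, even though the language is designed precisely to forbid the exponential value growth that would make this easy. The key device is that a non-iterable counter may be \emph{doubled} once per loop-body execution without breaking the type discipline, so that each unit of running time contributes exactly one output bit. Given an arbitrary $\p$, I would construct $\p'$ by (i) declaring a fresh $\typeInt$ variable $c$ at the very top and setting $c=1$; (ii) prepending the statement $c=c+c$ to the body of every loop of $\p$, at every nesting depth, wrapping a body in a block where needed; and (iii) returning $c$. All original statements, in particular every statement touching an iterable variable, are kept verbatim. Since $c$ has type $\typeInt$, the assignment $c=c+c$ satisfies $\assignable(\t,\typeInt)$ and type-checks inside loops, so $\p'$ is well-typed; and because $c$ is non-iterable, none of the inserted statements changes any iterable variable.

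It then follows from Proposition~\ref{prop:invariant} that the iterable variables of $\p'$ take the same values as in $\p$ throughout execution, so every loop bound evaluates exactly as before and the number $N_L$ of executions of each loop body $L$ is unchanged. Writing $D=\sum_L N_L$ for the total number of loop-body executions, each such execution doubles $c$ once, and these updates persist across iterations by the block and loop semantics, so $\smap{\p'}{}(\widetilde{v})=2^{D}$ and $\size{\smap{\p'}{}(\widetilde{v})}=D+1$. Conversely, each loop body contains only constantly many of its own operations and there are only constantly many loops and top-level operations, so $\ic(\p,\widetilde{v})\le c_0(D+1)$ for a program-dependent constant $c_0$, whence $\ic(\p,\widetilde{v})=O(\size{\smap{\p'}{}(\widetilde{v})})$. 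Applying hypothesis (2) to $\p'$, whose input is the same $\widetilde{v}$ of size $n$, yields $\size{\smap{\p'}{}(\widetilde{v})}=O(T(n))$ and therefore $\ic(\p,\widetilde{v})=O(T(n))$; as $\p$ was arbitrary, (1) holds. I expect the main obstacle to be exactly this reduction: one must verify that the instrumentation preserves well-typedness and, crucially, leaves all iterable variables — hence all loop bounds and all iteration counts — untouched, so that $D$ faithfully records the running time of the \emph{original} program.
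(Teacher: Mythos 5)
Your proof is correct, but it runs the equivalence cycle in the opposite direction from the paper and with a different division of labor. The paper proves $(1)\Rightarrow(2)\Rightarrow(3)\Rightarrow(1)$, each step by contradiction via a separate program transformation: for $(1)\Rightarrow(2)$ it appends a loop \texttt{for(i<size(z)) o=o+o;} that converts a large output into a long running time; for $(2)\Rightarrow(3)$ its transformation $T_1$ tracks the running maximum $\funAbs{\cdot}$ of all assigned values in a fresh variable \texttt{o} and returns it, converting large intermediate values into a large output; for $(3)\Rightarrow(1)$ its transformation $T_2$ inserts \texttt{o=o+o;} after \emph{every} declaration and assignment, converting a long running time into a large intermediate value. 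You instead prove $(1)\Rightarrow(3)$ \emph{directly}, by the observation that each executed store update can enlarge $\size{\Sigma}$ by at most a program-dependent constant (this is essentially the content of Lemma~\ref{lem:op-size} and Lemma~\ref{lem:expr-size}(1), plus the remark that loop counters stay below a size value already bounded by $\size{\Sigma}$ up to a constant), take $(3)\Rightarrow(2)$ as immediate, and then need only \emph{one} transformation, for $(2)\Rightarrow(1)$: your instrumentation is in effect the paper's $T_2$ composed with returning the counter, with the doubling placed once per loop body rather than once per assignment. That coarser placement suffices because, as you correctly argue, $\ic(\p,\widetilde{v})=\Theta(D+1)$ where $D$ is the total number of loop-body executions, whereas the paper's per-assignment placement is what lets it track the cost $k$ of every sub-derivation (its appendix claim $\size{\Sigma'(\texttt{o})}-\size{\Sigma(\texttt{o})}=\Theta(k)$). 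What your route buys is economy: two of the three implications become transformation-free, and the time-to-size reduction appears only once, directly at the output. Two cosmetic points: your appeal to Proposition~\ref{prop:invariant} is not quite the right tool for arguing that $\p'$ simulates $\p$ — that proposition says iterable variables are unchanged by statements typed with $\l=\t$, whereas what you need (and what clearly holds) is a freshness induction showing the stores of $\p'$ restricted to $\L(\p)$ coincide with those of $\p$, so all loop bounds and iteration counts are preserved; and your bound $\size{j}\le\size{\Sigma}$ for loop counters should carry a program-dependent additive constant, since $j$ is bounded by the value of a size expression, which Lemma~\ref{lem:expr-size} bounds only by $\size{\Sigma}+k$. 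Neither affects soundness.
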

We can now define the \textit{complexity class} $\classPC[T]$ as the set of all the functions in $\classPC$ computable by CorePolyC programs with an instruction count of $O(T(n))$.
The following is an immediate consequence of the above lemma.

\begin{corollary}
\label{co:time-class}
If $\,\size{f(\widetilde{v})}=O(T(n))$ for all $f\in\classPC$, then $\classPC=\classPC[T]$.
\end{corollary}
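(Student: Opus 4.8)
The plan is to prove the two inclusions of $\classPC=\classPC[T]$ separately, with the nontrivial direction reduced entirely to Lemma~\ref{lem:time-size}. The inclusion $\classPC[T]\subseteq\classPC$ is immediate, since $\classPC[T]$ is defined as a subclass of $\classPC$. All the work therefore lies in establishing $\classPC\subseteq\classPC[T]$, and my strategy is to show that the hypothesis is nothing other than statement~(2) of Lemma~\ref{lem:time-size} in disguise, so that the desired instruction-count bound follows from the equivalence with statement~(1).

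First I would reformulate the hypothesis ``$\size{f(\widetilde{v})}=O(T(n))$ for all $f\in\classPC$'' as the assertion that all CorePolyC programs produce outputs of size $O(T(n))$. The bridge is the observation that every CorePolyC program $\p$ computes a function $\smap{\p}{}$ which, by Theorem~\ref{thm:safety}, is a well-defined total function and hence lies in $\classPC$ by definition. Because the output $\smap{\p}{}(\widetilde{v})$ depends only on the function computed and not on the syntactic form of $\p$, the hypothesized bound gives $\size{\smap{\p}{}(\widetilde{v})}=O(T(n))$ for every program $\p$, which is precisely statement~(2). Conversely any program witnesses some function in $\classPC$, so the two formulations are in fact equivalent.

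Next I would invoke Lemma~\ref{lem:time-size} to pass from statement~(2) to statement~(1): every CorePolyC program has an instruction count of $O(T(n))$. Applying this to an arbitrary $f\in\classPC$, I choose a program $\p$ with $\smap{\p}{}=f$; then $\p$ computes $f$ with $\ic(\p,\widetilde{v})=O(T(n))$, which is exactly the membership condition $f\in\classPC[T]$. This yields $\classPC\subseteq\classPC[T]$ and, combined with the trivial reverse inclusion, completes the proof.

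The only point requiring care is the handling of the $O$-notation constant, which may depend on the program under consideration. Since the quantification in the hypothesis, in Lemma~\ref{lem:time-size}, and in the definition of $\classPC[T]$ is all per-program, with the implicit constant allowed to vary with $\p$ but uniform in the input size $n$, no uniformity across programs is ever needed, and the transfer of bounds goes through without obstruction. In effect the result is a genuine corollary: the substantive content is carried by Lemma~\ref{lem:time-size}, and what remains is merely to align the quantifier over functions appearing in the statement with the quantifier over programs appearing in the lemma.
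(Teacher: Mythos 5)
Your proof is correct and follows exactly the route the paper intends: the paper presents Corollary~\ref{co:time-class} as an immediate consequence of Lemma~\ref{lem:time-size}, passing from the output-size hypothesis (statement~(2)) to the instruction-count bound (statement~(1)) and hence to membership in $\classPC[T]$, which is precisely your argument. Your added care about per-program $O$-constants and the function-versus-program quantifier alignment is a faithful elaboration of the paper's one-line justification, not a deviation from it.
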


It should be clear that Lemma~\ref{lem:time-size} still holds if we replace $T(n)$ by a family of functions $\polyn$, say the family of polynomials.
Lemma~\ref{lem:class} reveals the relationship between Corollary~\ref{co:time-class} and the soundness of CorePolyC.

\begin{lemma}\label{lem:class}
$\classPC\subseteq \classFP$ if and only if $\forall\,f\,{\in}\,\classPC.\,\size{f(\widetilde{v})}=O(\polyn)$.
\end{lemma}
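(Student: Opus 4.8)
The plan is to prove the two directions separately, with the forward direction being essentially immediate and the backward direction carrying the real content through a Turing machine simulation built on top of Lemma~\ref{lem:time-size}.

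For the forward direction, I would argue as follows. Suppose $\classPC \subseteq \classFP$ and take any $f \in \classPC$. Then $f \in \classFP$, so some Turing machine computes $f$ within time $p(n)$ for a polynomial $p$. Since a machine emits at most one output symbol per step, the output $f(\widetilde{v})$ has at most $p(n)$ bits, whence $\size{f(\widetilde{v})} = O(\polyn)$. This half needs none of the cost-semantics apparatus; it is just the standard observation that a polynomial-time computation cannot write a super-polynomial amount of output.

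For the backward direction, I would assume $\size{f(\widetilde{v})} = O(\polyn)$ for every $f \in \classPC$, which is precisely clause (2) of Lemma~\ref{lem:time-size} read with the family of polynomials in place of a single $T$, as noted just before this lemma. Invoking Lemma~\ref{lem:time-size} then hands me clauses (1) and (3) for free: every CorePolyC program has instruction count $O(\polyn)$ and generates intermediate values of size $O(\polyn)$. The remaining task is to convert these two resource bounds into a genuine polynomial-time computation. Fixing a well-typed program $\p$, I would simulate its execution on a Turing machine, exploiting the fact that the variable set $\L(\p)$ is finite and fixed by $\p$ independently of the input; by clause (3) each stored value has size $O(\polyn)$, so the whole store environment fits on the tape in $O(\polyn)$ cells. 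Each elementary step counted by the cost semantics amounts to a constant amount of control-flow bookkeeping (advancing through the fixed syntax tree of $\p$, managing loop counters and conditionals) plus at most one application of an operator in $\O$. Every operator of CorePolyC — the arithmetic $\texttt{+},\texttt{-},\texttt{/},\texttt{\%}$, the $\opSize$ operator, the comparisons, and the boolean connectives — is computable in time polynomial in its operand sizes, which are $O(\polyn)$. Hence one step costs $O(\polyn)$ time, and since the total number of steps is $\ic(\p,\widetilde{v}) = O(\polyn)$ by clause (1), the entire run takes $O(\polyn)$ many steps each of $O(\polyn)$ cost, i.e. polynomial time. Thus $\smap{\p}{} \in \classFP$, and as $\p$ was arbitrary, $\classPC \subseteq \classFP$.

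The main obstacle is the simulation in the backward direction: one must check that every piece of bookkeeping — encoding and updating the store, locating a given variable's value, and threading control through nested loops and conditionals — contributes only polynomial overhead. This is exactly where the finiteness of $\L(\p)$ (a constant determined by the program, not the input) and the polynomial value-size bound from clause (3) are indispensable. Without the size bound even a single operator application could blow up, and without the finite variable set the store could not be confined to polynomial space; together they ensure that the abstract notions of instruction count and value size translate faithfully into Turing machine time.
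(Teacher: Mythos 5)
Your proposal is correct and takes essentially the same route as the paper: the forward direction is the standard observation that a polynomial-time machine emits only polynomially many output bits, and the backward direction invokes Lemma~\ref{lem:time-size} (the paper phrases this via Corollary~\ref{co:time-class}, an immediate consequence of that lemma) to get polynomial instruction count and polynomial value sizes, then simulates the program instruction by instruction on a Turing machine. Your extra attention to store encoding, variable lookup, and control-flow bookkeeping merely fleshes out the simulation details the paper leaves implicit.
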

\begin{proof}
Assume that each $f\in\classPC$ can be computable in polynomial time by a Turing machine.
Then its output size can only be polynomial.
Hence, the $\implies$ holds.
As for the $\impliedby$ direction, assume that $\forall\,{f}\,\in\classPC.\,\size{f(\widetilde{v})}=O(\polyn)$ holds.
Then the lengths of values generated by the CorePolyC programs are all polynomial according to Lemma \ref{lem:time-size}.
For any $f\in\classPC$, we also have that $f\in\classPC\subseteq\bigcup_{k\ge0}\classPC{[n^{k}]}$ by Corollary \ref{co:time-class}.
Thus, there exists a program $\p$ computing $f$, and $\p$ can be evaluated in polynomial steps.
We simulate $\p$ instruction by instruction using a Turing machine $M$. Since $\p$ has only a polynomial number of polynomial-time computable operations, and each operation only involves operands of polynomial size, the simulation can be completed in polynomial time.
Therefore $f\in\classFP$, from which $\classPC\subseteq\classFP$ follows.
\qed
\end{proof}

\subsubsection{Polynomial Bounds for Output.}\label{sec-variable-size}
Lemma~\ref{lem:class} reveals that if CorePolyC can only produce output of polynomial size, the soundness must be valid.
Therefore, we need to examine the size of all outputs produced by CorePolyC programs.
The most fundamental task is to examine the growth rate of values under the operators provided by CorePolyC.
Given an operator $\op\in\O$, we observe that
\begin{enumerate}
    \item $\size{\smap{\op}{}(\widetilde{v})}=1$, if $\op\in\{\texttt{\&\&},\texttt{||},\texttt{!}\}\cup\{\texttt{>=},\texttt{<=},\texttt{>},\texttt{<},\texttt{==},\texttt{!=}\}$,
    \item $\size{\smap{\op}{}(\widetilde{v})}\le \max_{i=1}^{m} \size{v_i}+1$, if $\op\in\{\texttt{+},\texttt{-},\texttt{/},\texttt{\%}\}$, and
    \item $\size{\smap{\opSize}{}{(v)}}=\size{\size{v}}\le\size{v}$, where $\size{\size{v}}$ is the size of $\size{v}$.
\end{enumerate}
A parameter is said to be a constant if it is independent of any input and any store environment.
With this remark, the next lemma should be clear.
\begin{lemma}
\label{lem:op-size}
For any $\op\in\O$, $\size{\smap{\op}{}(\widetilde{v})}\le\max_{i=1}^{m} \size{v_i}+k$, where $k$ is a constant.
\end{lemma}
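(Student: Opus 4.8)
The plan is to prove Lemma~\ref{lem:op-size} by a straightforward case analysis on the operator $\op\in\O$, directly invoking the three size observations enumerated immediately before the statement. Since $\O$ is a fixed finite set of operators, I only need to verify that for each operator the inequality $\size{\smap{\op}{}(\widetilde{v})}\le\max_{i=1}^{m}\size{v_i}+k$ holds for some constant $k$ that is independent of the input and the store environment; taking the maximum of these finitely many constants then yields a single uniform $k$ that works for all operators.

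First I would dispatch the boolean and comparison operators $\op\in\{\texttt{\&\&},\texttt{||},\texttt{!}\}\cup\{\texttt{>=},\texttt{<=},\texttt{>},\texttt{<},\texttt{==},\texttt{!=}\}$. By observation~(1) their output has size exactly $1$, so $\size{\smap{\op}{}(\widetilde{v})}=1\le\max_{i=1}^{m}\size{v_i}+1$, using $\size{v_i}\ge0$; here $k=1$ suffices. Next I would handle the arithmetic operators $\op\in\{\texttt{+},\texttt{-},\texttt{/},\texttt{\%}\}$, which are covered directly by observation~(2): $\size{\smap{\op}{}(\widetilde{v})}\le\max_{i=1}^{m}\size{v_i}+1$, again giving $k=1$. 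Finally, for the unary $\opSize$ operator, observation~(3) gives $\size{\smap{\opSize}{}(v)}=\size{\size{v}}\le\size{v}=\max_{i=1}^{m}\size{v_i}$, so $k=0$ (or $k=1$) works. Taking $k:=1$ as the common constant covers every case.

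I do not anticipate a genuine obstacle here, as the lemma is essentially a repackaging of the three preceding size bounds into a single uniform statement. The only mild subtlety worth remarking is the justification of observation~(2), namely that adding, subtracting, or dividing two integers increases the bit-size by at most one; this follows from the elementary fact that $\funAbs{a+b}\le 2\max(\funAbs{a},\funAbs{b})$ (and the analogous bounds for subtraction, with division and remainder only decreasing magnitude), so that $\size{\cdot}=\lceil\log(\funAbs{\cdot}+1)\rceil$ grows by at most one bit. Since these per-operator bounds are stated as the motivating remark preceding the lemma, I would simply cite them and conclude by setting $k$ to be the maximum of the finitely many per-operator constants.
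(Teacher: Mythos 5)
Your proposal is correct and matches the paper exactly: the paper offers no separate proof, stating that the lemma ``should be clear'' from the same three size observations you cite, and your case analysis (with the uniform constant taken as the maximum over the finitely many operators) is precisely the intended argument. Your added justification of observation~(2) via $\funAbs{a+b}\le 2\max(\funAbs{a},\funAbs{b})$ is a sound, if unnecessary, elaboration.
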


An expression can only contain a constant number of operators.
Intuitively, the value of an expression should also have a similar upper bound.
Lemma \ref{lem:expr-size} formalizes these ideas.
Note that statement (\ref{lem:expr-size:2}) corresponds to the cases of iterable expressions, whose size depends only on iterable variables in the environment.

\begin{lemma}
\label{lem:expr-size}

Given an expression $\e$ and a consistent pair $(\Gamma,\Sigma)$, if there exist $v$ and $\tt$ such that $\semantics{\Sigma}{\e}{v}$ and $\typing{\Gamma}{\f}{\e}{\tt}$, then the following statements hold.
\begin{enumerate}
    \item $\size{v}\le\size{\Sigma}+k$, where $k$ is a constant.
    \item\label{lem:expr-size:2} If $\e$ is iterable, i.e., $\tt=\typeIterInt$, then $\size{v}\le\size{\restr{\Sigma}{\Gamma}}+k$, where $k$ is a constant.
\end{enumerate}
\end{lemma}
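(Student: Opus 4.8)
The plan is to prove both statements simultaneously by structural induction on the expression $\e$, following the grammar \texttt{e ::= x | c | op($\widetilde{\e}$) | (e)}. The engine of the argument is Lemma~\ref{lem:op-size}, which guarantees that every operator inflates the size of its operands only \emph{additively} by a constant; since a fixed expression contains only finitely many operators, these constants accumulate into a single constant $k$ that depends on $\e$ but is independent of $\Sigma$ and of the input, which is exactly the notion of ``constant'' used here.

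For statement~(1), the base cases are immediate. If $\e\equiv x$, then by \sref{Var} we have $v=\Sigma(x)$, so $\size{v}\le\size{\Sigma}$ by the definition of $\size{\Sigma}$ as a maximum over $\dom(\Sigma)$; if $\e\equiv\c$, then $\size{v}=\size{\smap{\c}{}}$ is a constant. For the inductive case $\e\equiv\texttt{op(}\widetilde{\e}\texttt{)}$, the rule \sref{Op} gives $v=\smap{\op}{}(\widetilde{v})$ with $\semantics{\Sigma}{\e_i}{v_i}$, so Lemma~\ref{lem:op-size} yields $\size{v}\le\max_i\size{v_i}+c$ for a constant $c$, while the induction hypothesis bounds each $\size{v_i}\le\size{\Sigma}+k_i$. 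Combining, $\size{v}\le\size{\Sigma}+(\max_i k_i+c)$, and the parenthesis case is trivial.

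For statement~(2), the crucial observation is that iterability \emph{propagates to subexpressions}. Because $\typeIterInt$ is the bottom element of the order $\preccurlyeq$, the definition of $\bigvee$ forces $\bigvee_i\tt_i\equiv\typeIterInt$ only when every $\tt_i\equiv\typeIterInt$. Inspecting $\tmap{\op}{}$, an expression typed $\typeIterInt$ can only be headed by an arithmetic operator in $\{\texttt{+},\texttt{-},\texttt{/},\texttt{\%}\}$ (whose result type is the supremum of the operand types) or by $\opSize$ (which accepts and returns $\typeIterInt$); the comparison and boolean operators all produce $\typeBool$ and hence cannot occur at any internal node. In either admissible case the rule \tref{Op} forces every subexpression to be typed $\typeIterInt$ as well, so the induction hypothesis applies to each $\e_i$ with the bound $\size{v_i}\le\size{\restr{\Sigma}{\Gamma}}+k_i$. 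The base cases mirror those of statement~(1): a variable $x$ of an iterable expression satisfies $\Gamma(x)\equiv\typeIterInt$, hence $x\in\domi{\Gamma}$ and $\size{\Sigma(x)}\le\size{\restr{\Sigma}{\Gamma}}$; a (necessarily numeric) constant contributes only a constant. Applying Lemma~\ref{lem:op-size} exactly as before then closes the induction.

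The main obstacle is the bookkeeping in statement~(2): one must verify that the type-$\typeIterInt$ restriction genuinely cascades down to every leaf of $\e$, which is precisely what rules out comparison and boolean operators at any internal node and thereby guarantees that only iterable variables can influence the value $v$. Once this structural fact about the type system is established, both bounds follow uniformly from the additive growth supplied by Lemma~\ref{lem:op-size}.
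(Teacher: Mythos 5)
Your proposal is correct and takes essentially the same route as the paper: structural induction over the expression (the paper phrases it as induction on $\evalTree{\Sigma}{\e}$, which is equivalent since the rules are syntax-directed), with the \sref{Var} case of statement~(2) resting on $x\in\domi{\Gamma}$ and the \sref{Op} case closed by Lemma~\ref{lem:op-size} plus accumulation of the per-operator constants. The only difference is presentational: you explicitly justify that the type $\typeIterInt$ cascades to every subexpression via the supremum $\bigvee$ and the shape of $\tmap{\op}{}$, a step the paper asserts in a single line (``it must be the case that $\tt_i\equiv\typeIterInt$''), and your justification is sound.
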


Moreover, let $a \dot{-} b = \max(a - b, 0)$.
Then Lemma~\ref{lem:stmt-invariant} is convenient in upper bound estimation of values after executing a statement.
Particularly, if a statement is inside a loop ($\ell=\t$), then the growth of all values during execution of this statement depends only on the iterable variables.

\begin{lemma}
\label{lem:stmt-invariant}
Given a statement $\s$ and a consistent pair $(\Gamma, \Sigma)$ such that $\typing{\Gamma}{\l}{\s}{\Gamma'}$ and $\semantics{\Sigma}{\s}{\Sigma'}$ for some $\l$, $\Gamma'$ and $\Sigma'$, then the following statements hold.
\begin{enumerate}
    \item If $\l=\t$, then $\size{\Sigma'}\dot{-}\size{\Sigma}=O(\size{\restr{\Sigma}{\Gamma}}^d)$, where $d$ is a constant.
    \item If $\l=\f$, then $\size{\Sigma'}\dot{-}\size{\Sigma}= O(\size{\Sigma}^{d^{k}})$, where $d$ and $k$ are constants.
\end{enumerate}
\end{lemma}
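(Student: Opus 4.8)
The plan is to prove both statements simultaneously by structural induction on $\s$, keeping the two regimes $\l=\t$ and $\l=\f$ coupled, since the loop case for (2) feeds the induction hypothesis for (1) into itself. The organizing principle I would stress throughout is the contrast between the two regimes. When $\l=\t$, Proposition~\ref{prop:invariant} freezes the iterable variables, so every size bound can be expressed relative to the \emph{fixed} quantity $\size{\restr{\Sigma}{\Gamma}}$ and successive growths simply \emph{add}; when $\l=\f$ the bounds are relative to the \emph{changing} $\size{\Sigma}$ and successive growths \emph{compound in the exponent}. This is exactly what produces the additive polynomial $\size{\restr{\Sigma}{\Gamma}}^d$ in (1) versus the iterated exponent $d^k$ in (2).

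For the base cases I would appeal directly to Lemma~\ref{lem:expr-size}: a declaration installs an initial value of size at most $1$, and an assignment $x\texttt{=e;}$ sets $x$ to a value $v$ with $\size{v}\le\size{\Sigma}+k$, so in both cases $\size{\Sigma'}\dot{-}\size{\Sigma}=O(1)$, which suffices for (1) and (2) alike. The loop-free compound cases — sequence, block, and conditional — are then routine: for a conditional or block one recurses into the chosen branch or body, while for a sequence one chains the per-statement bounds. Here the regimes diverge. Under $\l=\t$, Proposition~\ref{prop:invariant} guarantees $\domi{\cdot}$ is preserved and the iterable restriction is unchanged across every sub-statement ($\restr{\Sigma_i}{\Gamma}\sim\restr{\Sigma_0}{\Gamma}$), so each term of the sum is $O(\size{\restr{\Sigma_0}{\Gamma}}^{d_i})$ and, the number of sub-statements being a fixed constant, the total is $O(\size{\restr{\Sigma}{\Gamma}}^{\max_i d_i})$. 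Under $\l=\f$ the intermediate $\size{\Sigma_i}$ may itself grow, so the exponents multiply along the sequence and a constant number of compositions yields a bound of the form $O(\size{\Sigma}^{d^k})$.

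The crux is the loop case $\texttt{for}(x\texttt{<}\e)\,\s$, and it is where both statements acquire their shape. Two facts drive it. First, the typing rule \tref{Loop} forces $\e$ to be an iterable $\opSize$-expression, so by Lemma~\ref{lem:expr-size}(\ref{lem:expr-size:2}) its value $i$ — the iteration count — satisfies $i\le\size{\restr{\Sigma}{\Gamma}}+k$; the iteration count is thus \emph{linear} in the iterable size rather than exponential, which is the whole purpose of mandating the size operator on loop bounds. Second, the body is typed with $\l=\t$, so Proposition~\ref{prop:invariant} freezes the iterable variables across all iterations and the induction hypothesis (1) bounds each iteration's growth by the \emph{same} quantity $O(\size{\restr{\Sigma}{\Gamma}}^d)$; the counter $x$ is itself iterable with value $<i$, contributing only $O(\log\size{\restr{\Sigma}{\Gamma}})$ to the iterable restriction and leaving the estimate intact. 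Multiplying the per-iteration bound by $i=O(\size{\restr{\Sigma}{\Gamma}})$ gives total growth $O(\size{\restr{\Sigma}{\Gamma}}^{d+1})$. When the enclosing indicator is $\l=\t$ this establishes (1) with the exponent merely incremented; when $\l=\f$ one relaxes $\size{\restr{\Sigma}{\Gamma}}\le\size{\Sigma}$ to obtain (2).

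I expect the main obstacle to be the bookkeeping around Proposition~\ref{prop:invariant}: one must check that the hypotheses for freezing iterable variables genuinely hold at each recursive invocation — in particular that $\domi{\cdot}$ stays stable so that the iterable restriction is taken against a fixed domain — and that augmenting the store with the counter $x\mapsto j$ never inflates $\size{\restr{\Sigma}{\Gamma}}$ beyond a constant factor. A secondary care point is the convention for $O(\cdot)$ applied to the truncated difference $\dot{-}$, since a genuinely constant growth must still be absorbed into $O(\size{\restr{\Sigma}{\Gamma}}^d)$; reading $O(g)$ as $c\,(g+1)$ resolves this uniformly.
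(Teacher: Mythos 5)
Your proposal is correct and follows essentially the same route as the paper's proof: structural induction with base cases via Lemma~\ref{lem:expr-size}, the loop case driven by the two facts you identify (the iteration count is bounded by $\size{\restr{\Sigma}{\Gamma}}+k$ via Lemma~\ref{lem:expr-size}(\ref{lem:expr-size:2}), and Proposition~\ref{prop:invariant} freezes the iterable restriction across iterations, yielding the exponent increment $d\mapsto d+1$), and the sequence case splitting into additive accumulation under $\l=\t$ versus exponent-compounding under $\l=\f$, which is exactly where the paper's $d^k$ arises. Your attention to the counter's $O(\log\size{\restr{\Sigma}{\Gamma}})$ contribution and to the $O(\cdot)$-with-$\dot{-}$ convention matches the bookkeeping the paper performs explicitly.
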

Proofs for the above two lemmas are carried out by structural induction.
Details are given in   \ref{App_proof_of_lemma:expr-size} and \ref{App_proof_of_lemma:stmt-invariant}, respectively.
Intuitively, in Lemma \ref{lem:stmt-invariant},
$d$ represents the maximum depth of the nested loops in the statements, and $k$ represents the number of the sequential loop statements in $\s$, i.e., the number of maximal subtrees induced by \tref{Loop} in the derivation tree.
We are now able to bound the output produced by any programs and deduce the soundness of CorePolyC.

\begin{proposition}\label{prop:prog-size}
Given a program $\p$, it holds that $\size{\smap{\p}{}(\widetilde{v})}=O(\polyn)$.
\end{proposition}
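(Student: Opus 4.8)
The plan is to peel off the program's outer structure and then track how the store size can grow through the finitely many top-level statements. By the rule \sref{Prog}, evaluating $\p$ on input $\widetilde{v}$ amounts to running the statement sequence $\widetilde{\s}$ from the initial store $\Sigma_0 := \emptyset[\widetilde{x}\mapsto\widetilde{v}]$ to obtain some $\Sigma'$ with $\semantics{\Sigma_0}{\widetilde{\s}}{\Sigma'}$, and then evaluating the return expression $\e$ under $\Sigma'$ to get the output $i = \smap{\p}{}(\widetilde{v})$. Since the input size is $n = \size{\widetilde{v}}$, we have $\size{\Sigma_0} = \max_i\size{v_i} \le n$. By Lemma~\ref{lem:preservation}, the pair $(\Gamma',\Sigma')$ arising from the \tref{Prog} derivation is consistent, so Lemma~\ref{lem:expr-size}(1) applies to the return expression and gives $\size{i} \le \size{\Sigma'} + k$ for a constant $k$. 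Thus it suffices to bound $\size{\Sigma'}$ by a polynomial in $n$.

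First I would decompose the top-level run using \sref{Seq} and \tref{Seq}. Because a well-typed program is checked with loop indicator $\f$ (well-typedness means $\typing{\emptyset}{\f}{\p}{\tt}$), every top-level statement $\s_j$ of $\widetilde{\s} = \s_1,\ldots,\s_p$ is executed and typed with $\l=\f$. Writing $\Sigma_0,\Sigma_1,\ldots,\Sigma_p = \Sigma'$ for the intermediate stores, Lemma~\ref{lem:stmt-invariant}(2) gives, for each $j$, $\size{\Sigma_j}\dot{-}\size{\Sigma_{j-1}} = O(\size{\Sigma_{j-1}}^{d_j^{k_j}})$ with constants $d_j,k_j$. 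Setting $e_j := d_j^{k_j}\ge 1$, this additive bound can be rewritten in composable form as $\size{\Sigma_j} = O(\size{\Sigma_{j-1}}^{e_j})$, since the dominant term $\size{\Sigma_{j-1}}^{e_j}$ absorbs the additive $\size{\Sigma_{j-1}}$. Chaining these inequalities along the sequence yields $\size{\Sigma'} = \size{\Sigma_p} = O(\size{\Sigma_0}^{E})$ with $E := \prod_{j=1}^{p} e_j$.

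The one point that requires care — and essentially the only place the argument could break — is that this composition must terminate with a \emph{constant} exponent. This is guaranteed precisely because $\widetilde{\s}$ is a fixed, finite sequence belonging to the program text: $p$ is a constant, and each $e_j = d_j^{k_j}$ is a constant determined by the nesting depth and number of loops inside $\s_j$, so their product $E$ is again a constant independent of the input. This is the soundness payoff of the language design, since only statements executed a constant number of times (those at $\l=\f$) contribute to the exponent, while the genuinely repeated work inside loops is governed by the much tighter iterable-restriction bound of Lemma~\ref{lem:stmt-invariant}(1). With $E$ constant and $\size{\Sigma_0}\le n$, we obtain $\size{\Sigma'} = O(n^{E}) = O(\polyn)$, and hence $\size{i} \le \size{\Sigma'} + k = O(\polyn)$, which is the claim. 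I would close by noting that this bound, fed through Lemma~\ref{lem:class} and Lemma~\ref{lem:time-size}, settles the soundness direction $\classPC\subseteq\classFP$.
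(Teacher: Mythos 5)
Your proof is correct and takes essentially the same route as the paper's: decompose the run via \sref{Prog}, bound the final store using Lemma~\ref{lem:stmt-invariant}(2) at $\l=\f$ starting from $\size{\emptyset[\widetilde{x}\mapsto\widetilde{v}]}\le n$, and bound the returned value via Lemma~\ref{lem:expr-size}(1). The only difference is that you re-derive the top-level sequential composition by chaining per-statement exponents $E=\prod_j d_j^{k_j}$, whereas the paper applies Lemma~\ref{lem:stmt-invariant}(2) once to the entire body $\widetilde{\s}$, whose \tref{Seq} case already internalizes exactly this composition into the single exponent $d^{k}$.
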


\begin{proof}
Assume that $\p\equiv \texttt{\textbf{int main}(}\tt_1\,x_1,\dots,\tt_m\,x_m\texttt{)\{ }\widetilde{\s}\;\texttt{ \textbf{return} e;\}}$.
There exists a store $\Sigma'$ such that $\semantics{\emptyset[\widetilde{x}\mapsto\widetilde{v}]}{\widetilde{\s}}{\Sigma'}$ and $\semantics{\Sigma'}{\e}{\smap{\p}{}(\widetilde{v})}$.
Since $\size{\emptyset[\widetilde{x}\mapsto\widetilde{v}]}\le \size{\widetilde{v}}$, by Lemma \ref{lem:stmt-invariant} and Lemma~\ref{lem:expr-size}, there exists constants $k$ and $k'$ such that
$\size{\smap{\p}{}(\widetilde{v})}\le \size{\Sigma'}+k'\le\size{\widetilde{v}}+\size{\widetilde{v}}^{d^{k}}+k'= O(\polyn)$.
\qed
\end{proof}

\begin{theorem}[Soundness]
\label{thm:soundness}
$\classPC\subseteq\classFP$.
\end{theorem}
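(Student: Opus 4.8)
The plan is to derive the soundness theorem directly from the two results established immediately above, namely Proposition~\ref{prop:prog-size} and Lemma~\ref{lem:class}. The key observation is that these two statements, taken together, pin down exactly what is needed: Lemma~\ref{lem:class} reduces the set-inclusion $\classPC \subseteq \classFP$ to a purely quantitative condition on output sizes, while Proposition~\ref{prop:prog-size} verifies precisely that condition for every individual program. So the remaining work is simply to connect the per-program bound to the per-function quantifier appearing in Lemma~\ref{lem:class}.

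Concretely, I would begin by unwinding the definition of $\classPC$: an arbitrary $f \in \classPC$ is, by definition, equal to $\smap{\p}{}$ for some well-typed CorePolyC program $\p$. Applying Proposition~\ref{prop:prog-size} to this program yields $\size{f(\widetilde{v})} = \size{\smap{\p}{}(\widetilde{v})} = O(\polyn)$. Since $f$ was arbitrary, and since Lemma~\ref{lem:class} permits each function its own polynomial bound, this establishes the universally quantified statement $\forall f \in \classPC.\, \size{f(\widetilde{v})} = O(\polyn)$, which is exactly the right-hand side of the biconditional in Lemma~\ref{lem:class}.

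Finally, I would invoke the $\impliedby$ direction of Lemma~\ref{lem:class} to conclude $\classPC \subseteq \classFP$, completing the proof. There is no real obstacle at this stage: all the substantive work has already been discharged upstream. The genuinely hard parts were the reduction from instruction count to output size (Branch I, via Lemma~\ref{lem:time-size}), which exploits the well-structured, termination-guaranteed nature of the type system, and the step-by-step size analysis from operators through expressions and statements to programs (Branch II), which hinges on the loop-invariant behaviour of iterable variables isolated in Proposition~\ref{prop:invariant}. With those in hand, the soundness theorem is a one-line corollary whose only role is to repackage Proposition~\ref{prop:prog-size} and Lemma~\ref{lem:class} into the clean inclusion $\classPC \subseteq \classFP$.
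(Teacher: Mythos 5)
Your proposal is correct and matches the paper's own proof essentially verbatim: the paper likewise takes an arbitrary $f\in\classPC$ computed by some program $\p$, applies Proposition~\ref{prop:prog-size} to obtain $\size{f(\widetilde{v})}=O(\polyn)$, and concludes via the $\impliedby$ direction of Lemma~\ref{lem:class}. Your framing of the theorem as a one-line repackaging of those two upstream results is exactly how the paper treats it.
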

\begin{proof}
    For any $f\in\classPC$, there exists a program $\p$ computing $f$.
    By Proposition \ref{prop:prog-size}, we have $\size{f(\widetilde{v})}=\size{\smap{\p}{}(\widetilde{v})}=O(\polyn)$.
    The result follows from Lemma~\ref{lem:class}.
    \qed
\end{proof}

\subsection{Completeness}\label{sec-Completeness}
We now prove that every function computable in polynomial time by a Turing machine is also computable by a CorePolyC program.
Equivalently, we show that every polynomial-time Turing machine $M$ can be simulated by a CorePolyC program.
During this process, it is standard to construct a variable to indicate the current step of the execution.
Since the running time of $M$ is always bounded by a polynomial $p$, we must prove that CorePolyC programs can generate a value $v$ such that $p(n)$ can be bounded by $\size{v}$.
Lemma~\ref{lemma:poly-con} is sufficient to address the above issue. See Appendix \ref{App_proof_of_lemma:poly-con} for a proof.
\begin{lemma}
\label{lemma:poly-con}
For each $d\geq 1$, there exists an $O(d\log d)$-size program $\p$ such that $\size{\smap{\p}{}{(v)}}= d\size{v}^d$.
\end{lemma}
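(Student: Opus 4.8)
The plan is to build, for each fixed $d$, a program that drives an accumulator through exactly $d\,\size{v}^{d}$ doublings using a nest of $d$ loops. Two facts make this work. First, executing \texttt{x=x+x;} a total of $N$ times starting from \texttt{x=1} produces $2^{N}$, whose bit-size is $N+1$; thus the size of the accumulator is linear in the number of doublings performed. Second, nesting loops multiplies their iteration counts, so a nest of depth $d$ whose bounds are all equal to $\size{v}$ executes its innermost body exactly $\size{v}^{d}$ times, converting a linear count into a degree-$d$ polynomial one. Because the input $v$ has type $\typeInt$ while loop bounds must be iterable size expressions, the first step is to introduce an iterable copy via \texttt{iint z; z=v;} at the top level (outside every loop), so that $\size{\texttt{z}}=\size{v}$ permanently.

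Concretely, the program declares \texttt{int x}, creates \texttt{z} as above, sets \texttt{x=1}, and opens $d$ nested loops, each with a fresh counter and the common bound \texttt{size(z)}; the innermost body is \texttt{x=x+x;} written $d$ times, and a single \texttt{x=x-1;} is placed after all the loops. First I would verify well-typedness against Fig.~\ref{fig:typing}: each bound \texttt{size(z)} is iterable since \texttt{z} has type $\typeIterInt$, every counter is fresh (so the side condition $x\notin\dom(\Gamma)$ is met and, by the \tref{Loop} rule, all $d$ counters are simultaneously in scope and hence pairwise distinct), and inside the loops (where $\l=\t$) the only writes target the non-iterable \texttt{x}, so $\assignable(\t,\typeInt)$ holds and no iterable variable is ever declared or assigned in a loop body.

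Next, using the Loop Invariant (Proposition~\ref{prop:invariant}) I would note that \texttt{z} is never modified during execution, so each loop---no matter how often it is re-entered---re-evaluates its bound to the constant $\size{v}$. Consequently the innermost body runs exactly $\size{v}^{d}$ times, \texttt{x=x+x;} is executed $d\cdot\size{v}^{d}$ times, and \texttt{x} reaches $2^{d\size{v}^{d}}$; the trailing decrement turns this into $2^{d\size{v}^{d}}-1$, whose bit-size is exactly $d\,\size{v}^{d}$. For the source size, the repeated body costs $O(d)$ symbols, whereas the $d$ loop headers need $d$ pairwise-distinct counter names, whose shortest family has total length $\Theta(d\log d)$; this term dominates the $O(d)$ from the body and yields the claimed $O(d\log d)$ bound (it is in fact this forced distinctness of counters that makes the bound $d\log d$ rather than $d$).

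The step I expect to be the crux is not the arithmetic but the interplay with the iterable discipline. To obtain the clean product $\size{v}^{d}$ I must guarantee that all $d$ bounds remain equal to $\size{v}$ throughout execution, which is exactly what Proposition~\ref{prop:invariant} buys, since the sole iterable variable \texttt{z} cannot be touched inside any loop. Establishing that the construction simultaneously (i) type-checks under the loop restrictions, (ii) keeps every bound stable, and (iii) still allows the accumulator \texttt{x} to be updated freely because it is non-iterable is the delicate part; the off-by-one arising from $2^{N}$ having size $N+1$ is a minor concern absorbed by the final \texttt{x=x-1;}.
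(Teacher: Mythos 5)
Your proposal is correct and is essentially the paper's own construction: the paper likewise copies the input into an iterable variable \texttt{z}, nests loops of bound \texttt{size(z)} to accumulate $d\size{v}^d$ doublings of a non-iterable accumulator, and fixes the off-by-one at the end (the paper uses $d+1$ nested loops with an innermost bound of $d$ and returns \texttt{o/2}, where you instead repeat \texttt{x=x+x;} $d$ times in the body and subtract $1$ --- a cosmetic difference). Your size accounting, including the observation that the $d$ pairwise-distinct counter names force the $\Theta(d\log d)$ term, matches the paper's claimed $O(d\log d)$ bound.
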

We say that a CorePolyC program $\p$ simulates a Turing machine $M$ if for each binary number $\alpha$, $M(\alpha)=\beta$ if and only if $\p(\widehat{\alpha})=\widehat{\beta}$, where $\widehat{\alpha}$ for example is an efficient encoding of $\alpha$.
Then the following simulation lemma holds.
\begin{lemma}
\label{lem:simulate}
Every polynomial-time Turing machine can be simulated by a CorePolyC program, and such a program can be constructed in polynomial time.
\end{lemma}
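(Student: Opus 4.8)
The plan is to give an explicit construction that, given the description of a polynomial-time Turing machine $M$ running in time $p(n)$ for some polynomial $p$, produces a CorePolyC program $\p_M$ that simulates $M$ step by step. The simulation has three ingredients: (i) a \emph{clock} variable whose size dominates $p(n)$, supplied by Lemma~\ref{lemma:poly-con}; (ii) a faithful encoding of $M$'s configuration (tape contents, head position, state) into CorePolyC integers; and (iii) a transition routine that, given an encoded configuration, computes the encoded successor configuration, wrapped inside a loop that iterates enough times to reach a halting configuration. The key structural difficulty is that CorePolyC loops must be bounded by a \emph{size} expression over an \emph{iterable} variable, so the number of simulation steps cannot be controlled directly by an integer counter; instead I must first materialize a value $v$ with $\size{v}\ge p(n)$ and then loop $\texttt{size}(v)$ times.

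First I would fix the encoding. Given input $\widehat\alpha$ (the integer encoding the binary string $\alpha$), the initial tape, head, and state are computed by straight-line code using $\opSize$, division, and remainder; call the resulting integers the configuration registers. Next I would invoke Lemma~\ref{lemma:poly-con} with an appropriate constant $d$ to obtain, in $O(d\log d)$ program size, a value $w$ with $\size{\smap{\p}{}(\widehat\alpha)} = d\,\size{\widehat\alpha}^d \ge p(n)$, where $n=\size{\widehat\alpha}$; I store $w$ in a fresh iterable variable $z$. The main loop is then $\texttt{\textbf{for}(i<\textbf{size}(z))}\,\s_{\mathrm{step}}$, where $\s_{\mathrm{step}}$ decodes the current state and the scanned symbol, uses a cascade of nested $\texttt{\textbf{if}}$ statements (one branch per entry of $M$'s finite transition table, which is a fixed, constant-size object) to compute the new symbol, head move, and new state, and re-encodes them into the (non-iterable) configuration registers. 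Crucially, $z$ is never assigned and no iterable variable is declared inside the loop body, so $\s_{\mathrm{step}}$ type-checks under $\ell=\t$ exactly as in Example~\ref{exp:loop}; the configuration registers are all of type $\typeInt$, and by Example~\ref{exp:typing_derivation} they may be freely reassigned inside the loop.

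The main obstacle I expect is verifying the \textbf{well-typedness and the size discipline of the step routine simultaneously}: the encoded tape value grows as the simulation proceeds, so I must confirm that updating it inside the loop is typeable (it is, since the tape register is $\typeInt$, not $\typeIterInt$, and by Lemma~\ref{lem:stmt-invariant}(1) its size grows polynomially because the per-step growth depends only on the iterable content, i.e.\ on $\size{\restr{\Sigma}{\Gamma}}$). I must also handle the idempotence of halting: once $M$ halts, further iterations of $\s_{\mathrm{step}}$ must leave the configuration unchanged, which I arrange by a guard that detects a halting state and skips the update, so that looping exactly $\size{z}\ge p(n)$ times always ends in the true halting configuration. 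Finally, after the loop I would decode the output register and \texttt{\textbf{return}} it. For the second clause of the lemma, that $\p_M$ is \emph{constructible in polynomial time}, I note that every component above has size polynomial in $|M|$ and $\deg p$: the transition cascade has one branch per table entry, Lemma~\ref{lemma:poly-con} contributes only $O(d\log d)$ code, and the encoding/decoding routines are fixed straight-line blocks; assembling them is a mechanical, polynomial-time procedure. I would then conclude by checking the simulation correctness invariant---after $t$ iterations the registers encode $M$'s configuration at step $\min(t,\,\text{halting time})$---by a routine induction on $t$ using the semantic rule \sref{Loop}.
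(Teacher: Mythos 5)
Your proposal is correct and follows essentially the same route as the paper: a clock value obtained from Lemma~\ref{lemma:poly-con} and stored in a fresh iterable variable, a \texttt{size}-bounded loop whose body is a constant-size cascade of conditionals implementing $\delta$ on non-iterable configuration registers, a guard making iterations after halting no-ops, a correctness induction over the \texttt{Loop} rule, and a polynomial-size assembly argument. The only divergence is that you leave the configuration encoding abstract (even suggesting an explicit head-position register), whereas the paper commits to a two-stack ternary encoding in which the scanned symbol is simply \texttt{x\%3} and head moves are loop-free; this matters mildly because a literal head-pointer encoding would require extra nested \texttt{size}-bounded loops to extract the digit at a dynamic position, though that is still implementable within the type discipline you verify.
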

\begin{proof}
We start by recalling the definition of Turing machines.
A \textit{Turing machine} $M$ is a $6$-tuple $(Q, \{\texttt{0,1}\}, \Lambda, \delta, q_0, q_{halt})$, where
\begin{enumerate}
    \item $Q$ is a finite set of states,
    \item $\Lambda=\{\texttt{0,1, $\square$}\}$ where \texttt{$\square$} is the blank symbol,
    \item $q_0,q_{halt}\in Q$ are the start state and the halt state, respectively.
    \item $\delta:Q\times\Lambda\to Q\times\Lambda\times\{\texttt{L,R}\}$ is the transition function of $M$.
\end{enumerate}

The Turing machine $M$ has a single, one-way infinite tape and behaves in a standard manner.
Its configuration is represented by $uqw$, where $q$ is the current state, $uw\in\Lambda^{*}$ is the current tape content, and the head points at the first symbol of $w$.
When $M$ halts, the head will return to the leftmost position, and the content on the tape represents the computation result.

Given $M=(Q, \{\texttt{0,1}\}, \Lambda, \delta, q_0, q_{halt})$ whose running time is given by a polynomial $T(n)$.
There is a constant $d$ such that $T(n)\le dn^d$. We construct a program $\p$  with input variable \texttt{x} as follows:

\paragraph{Configuration.} We use an $\typeInt$ variable \texttt{y} and the input variable \texttt{x} to present the contents of the tape in ternary representation, where the symbols \texttt{0}, \texttt{1}, and \texttt{$\square$} are represented by the digits 0, 1, and 2, respectively.
The value of \texttt{y} represents the contents to the left of the head, while the value of \texttt{x} represents the contents to the right of the head.
For convenience, the digits of \texttt{x} are in reverse order so that the least significant digit of \texttt{x} represents the content under the tape head.
Since the most significant digit cannot be $0$, we add an extra $2$ at the beginning of \texttt{x} and \texttt{y}.
Additionally, we declare an $\typeInt$ variable \texttt{q} to simulate the state of $M$.
W.l.o.g., we assume that $Q=\{q_0,q_1,q_2,\cdots,q_{\size{Q}-1}\}$, where $q_1=q_{halt}$.

\paragraph{Input and Initialization.} Assume that the input on the tape are $w\in\{\texttt{0,1}\}^{*}$.
Let $w'=\texttt{2}\circ w^r$, where $w^r$ represents the reverse of $w$.
Interpret $w'$ as a ternary number and convert it to the corresponding decimal number $v$, which will serve as the input to $\p$.
We declare an \texttt{int} variable \texttt{y} and assign it the value $2$.
For example, if the input of $M$ is $w=$\texttt{10010}, then $w'=(201001)_3$ and $v=(514)_{10}$.

\paragraph{Timer.} We utilize the method presented in the proof of Lemma \ref{lemma:poly-con} to generate a value whose length grows to $d\size{v}^d$, and assign it to an iterable variable \texttt{cnt}.
Then we design a statement of the form \texttt{\textbf{for}(i<\textbf{size}(cnt))\{ $\dots$ \}} such that every Turing machine operation is simulated in an iteration of the loop.

\paragraph{Simulation.} Inside the loop body, we simulate $\delta$ by a sequence of conditional statements.
$M$ will only modify its state once in each step.
Therefore, we declare a boolean variable \texttt{flag} at the beginning of the loop body to indicate whether \texttt{q} has been updated.
For a given constant $m$, we introduce the syntax sugar \texttt{$m$*a}, which means \texttt{a+$\dots$+a} ($m$ occurrences of \texttt{a}).

\begin{enumerate}
    \item \textit{Read}. 
    For any transition $\delta(q_i, \alpha)$, where $i\ne1$, we use the following conditions to simulate the tape’s reading and state recognition.
    If they match, the statement inside \texttt{\{\}} simulates the instructions to be executed.
    $$\texttt{\textbf{if}(!flag \&\& q==$i$ \&\& x\%3==$\alpha$) \{$\dots$\}}.$$
    \item \textit{Write}. If $M$ needs to change the content to $\beta\in\Lambda$ and update the state to $q_j$, it can be done by executing
    $$\texttt{x=x-$\alpha$+$\beta'$; q=}j\texttt{; flag=\textbf{true};},$$
    where $\beta'$ is corresponding to $\beta$, i.e., if $\beta\in\{\texttt{0},\texttt{1}\}$, then $\beta'=\beta$; otherwise $\beta'=\texttt{2}$.
    Set \texttt{flag} to $\t$ to skip the remaining conditional statements.
    \item \textit{Leftward Movement}. If the head of $M$ needs to move left, the last digit of \texttt{y} is appended to the end of \texttt{x}.
    However, it is important to note that if the tape head is already at the leftmost position, no movement is performed. This process can be simulated as follows:
    $$\texttt{\textbf{if}(y>2) \{x=3*x+y\%3; y=y/3;\} \textbf{else} \{\}}$$
    In base three, the value of \texttt{y} always begins with $2$, representing the leftmost end of the tape (\texttt{$\square$}).
    \item \textit{Rightward Movement}. The rightward movement is simulated by appending the last digit of \texttt{x} to the end of \texttt{y}.
    If the head is in the rightmost position, it can still move right.
    In this case, the head will point to a blank symbol \texttt{$\square$}, so we should append digit \texttt{2} to the end of \texttt{y}.
    $$\texttt{y=3*y+x\%3; x=x/3; \textbf{if}(x<=2) \{x=3*x+2\} \textbf{else} \{\}}$$
    \item \textit{Halt and Output.} 
    We do not add any statements for $q_1$. 
    Once the program enters this state, it performs no operations until the loop terminates.
    Add \texttt{\textbf{return} x;} at the end of $\p$ to return the encoded output.
\end{enumerate}
For example, assuming $\delta(q_0,\texttt{0})=(q_1,\texttt{1},\mathtt{L})$ and $\delta(q_0,\texttt{1})=(q_3,\texttt{0},\mathtt{R})$, the corresponding simulation statements are shown in Figure \ref{fig:simulation}.
Since $\delta$ is defined on $Q\times\Lambda$, the number of conditional statements is at most $3\size{Q}$. 
Hence, the entire simulation process can be constructed in polynomial time.
\begin{figure}[t]
	\begin{minipage}{0.48\linewidth}
		\centering
\begin{lstlisting}[xleftmargin=1em,xrightmargin=1em]
for(i<size(cnt)){
  bool flag;
  // ...
  if(!flag&&q==0&&x%3==0){
    x=x-0+1;
    q=1;
    flag=true;
    if(y>2){
        x=3*x+y%3;
        y=y/3;
    } else {}
  } else {}
  // other transitions...
}
\end{lstlisting}
	\end{minipage}
	\hspace{10mm}
	\begin{minipage}{0.48\linewidth}
		\centering
\begin{lstlisting}[xleftmargin=1em,xrightmargin=1em]
for(i<size(cnt)){
  bool flag;
  // ...
  if(!flag&&q==0&&x%3==1){
    x=x-1+0;
    q=3;
    flag=true;
    if(x>2){
        y=3*y+x%3;
        x=x/3;
    } else y=3*y+2;
  } else {}
  // other transitions...
}
\end{lstlisting}
	\end{minipage}
	\caption{Examples of the simulation process.
    The left program simulates the transition $\delta(q_0,\texttt{0})=(q_1,\texttt{1},\mathtt{L})$, 
 , while the right one simulates the transition $\delta(q_0,\texttt{1})=(q_3,\texttt{0},\mathtt{R})$.
    }
 \label{fig:simulation}
\end{figure}

The program $\p$ fully simulates the computation of $M$. Since the tape head of $M$ eventually returns to the leftmost position, the reversed ternary representation of the output without the last digit corresponds to the output of $M$.
\qed
\end{proof}

The completeness result is now immediate.
\begin{theorem}[Completeness]
\label{thm:completeness}
$\classFP\subseteq\classPC$.
\end{theorem}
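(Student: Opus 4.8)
The plan is to derive Theorem~\ref{thm:completeness} as an essentially immediate corollary of Lemma~\ref{lem:simulate}, since the hard combinatorial and constructive work has already been done there. First I would take an arbitrary $f \in \classFP$. By definition of $\classFP$, there is a polynomial-time Turing machine $M$ computing $f$ on binary encodings of inputs, say with running time bounded by a polynomial $T(n) \le dn^d$. The goal is to produce a CorePolyC program $\p$ with $\smap{\p}{} = f$, which by the definition of $\classPC$ witnesses $f \in \classPC$.

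Next I would invoke Lemma~\ref{lem:simulate} to obtain a CorePolyC program $\p$ that simulates $M$: for every input $\alpha$, we have $M(\alpha) = \beta$ iff $\p(\widehat{\alpha}) = \widehat{\beta}$ under the ternary encoding scheme described in that proof. The one genuine bookkeeping step is to reconcile the encoded behavior of $\p$ with the ``raw'' function $f$ on integers. The program $\p$ operates on the encoded input $v = \widehat{\alpha}$ (the reversed ternary string of the tape content, prefixed by a leading $2$) and returns the encoded output $\widehat{\beta}$. So strictly speaking $\smap{\p}{}$ computes $f$ only up to this fixed, polynomial-time-computable encoding/decoding. I would note that since the encoding and decoding maps are themselves trivially computable and do not affect membership in $\classFP$ or $\classPC$ (one can fold a constant-size encoding/decoding preamble and postlude directly into the CorePolyC program, or simply adopt the convention that $\classFP$ and $\classPC$ are compared modulo a fixed efficient encoding), the simulation yields a CorePolyC program computing $f$.

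I would then conclude that $f \in \classPC$, and since $f \in \classFP$ was arbitrary, $\classFP \subseteq \classPC$. The main obstacle—if any—is purely definitional rather than mathematical: one must be careful that the encoding convention used in Lemma~\ref{lem:simulate} is compatible with however $\classFP$ and $\classPC$ identify functions on $\Z^m$ versus functions on binary strings. Because Type Safety (Theorem~\ref{thm:safety}) already guarantees that every well-typed program halts with a unique integer output, and the Soundness direction (Theorem~\ref{thm:soundness}) handles the reverse inclusion, this direction needs only the existence established by the simulation lemma. Combining Theorem~\ref{thm:soundness} and Theorem~\ref{thm:completeness} then gives $\classPC = \classFP$, which is Theorem~\ref{thm:main}.
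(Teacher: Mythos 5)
Your proposal is correct and follows essentially the same route as the paper: the paper treats Theorem~\ref{thm:completeness} as an immediate consequence of the simulation lemma (Lemma~\ref{lem:simulate}), exactly as you do. Your extra care about reconciling the ternary encoding $\widehat{\alpha}$ with the raw function $f$ is a reasonable bookkeeping point that the paper itself glosses over by building the encoding into its definition of simulation, so there is no gap.
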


\section{PolyC}\label{sec:polyc}

As noted in Section~\ref{sec:main_theorem}, the set of CorePolyC computable functions corresponds precisely to $\classFP$.
Important verification properties such as efficient termination are intrinsic to CorePolyC programs, eliminating the need for additional verification. 
Once the theoretical foundation has been laid down, we look for a greater level of programming convenience. 
In this section, we extend CorePolyC with additional syntactic sugars while maintaining the expressive power of CorePolyC.
The extension is not done in a simple-minded way since it is easy to break the polynomial computability barrier. 
We shall call the extended language PolyC. 

\subsection{The Language}
\label{subsec:language}
We outline how to embed in PolyC a few familiar programming constructs.
Some programming examples of PolyC are provided in Section~\ref{sec:example_PolyC}.
We emphasize that we view PolyC as an open and extensible language: features can be introduced, and some design choices can also be modified, as long as such changes do not violate the core design goal: to faithfully characterize $\classFP$.

\subsubsection{Data Types.}
For convenience, we introduce the new compound data types \texttt{array} and \texttt{string}.
The elements in a multi-dimensional array are initialized to $0$ or $\f$.
The size of an array is defined as the maximum length among all elements.
As for strings, there is an iterable version of the string type, i.e., the \texttt{istring} type, such that the operator $\opSize$ is applicable to \texttt{istring} variables.
In addition, the parameters of the main function are not restricted to be only \texttt{int}.

\subsubsection{Functions.}
We are free to use those statements and expressions that can be translated into CorePolyC, such as \texttt{+=}, \texttt{++}, etc.
We can also introduce function declarations and function applications. 
The syntax is as follows.
\begin{align*}
    \texttt{Statement\quad\, s}::=&\texttt{ ... | $\t$ $f$($\tt_1$ $x_1$, $\dots$ $\tt_m$ $x_m$)\{ $\widetilde{\s}$ \textbf{return} $\e$;\}}\\
    \texttt{Expression e}::=&\texttt{ ... | $f$($\widetilde{\e}$)}
\end{align*}
The type of a function is an arrow type $\widetilde{\tt} \to \tt$.
If we do not care about the output of a function, then it can output any value, or we can use the syntax sugar \texttt{void} to declare the return type.
A function value is actually a closure $(\Sigma, \widetilde{x}, \widetilde{\s}, \e)$.
The typing rules and the operational semantics of functions are defined as follows.
{
\small
\[\trule{\typing{\Gamma[\;\widetilde{x}\mapsto\widetilde{\tt}\;]}{\t}{\widetilde{\s}}{\Gamma'}\quad \typing{\Gamma'}{\t}{\e}{\tt}\quad f\not\in\dom(\Gamma)}{\typing{\Gamma}{\l}{\texttt{$\tt$ $f$($\widetilde{\tt\;x}$)\{ $\widetilde{\s}$ \textbf{return} $\e$;\}}}{\Gamma[f\mapsto (\widetilde{\tt}\to \tt)]}}{Fun}\]

\[\trule{\typing{\Gamma}{\l}{\widetilde{\e}}{\widetilde{\tt_1}}\quad\widetilde{\tt_1}\preccurlyeq\widetilde{\tt_2}\quad \Gamma(f)=\widetilde{\tt_2}\to \tt}{\typing{\Gamma}{\l}{\texttt{$f$($\widetilde{\e}$)}}{\tt}}{App}\]

\[\srule{}{\semantics{\Sigma}{\texttt{$t$ $f$($\widetilde{t\;x}$)\{ $\widetilde{\s}$ \textbf{return} $\e$;\}}}{\Sigma[f\mapsto(\Sigma,\widetilde{x},\widetilde{\s},\e)]}}{Fun}\]

\[\srule{\Sigma(f)=(\Sigma_f,\widetilde{x},\widetilde{\s},\e)\quad\semantics{\Sigma}{\widetilde{\e}}{\widetilde{v}}\quad\semantics{\Sigma_f[\;\widetilde{x}\mapsto\widetilde{v}\;]}{\widetilde{s}}{\Sigma'}\quad\semantics{\Sigma'}{\e}{v'}}{\semantics{\Sigma}{\texttt{$f$($\widetilde{\e}$)}}{v'}}{App}\]~
}

\noindent
The partial order relation $(\T_{+}^{m}, \preccurlyeq)$ is a pairwise extension of $(\Int, \preccurlyeq)$ on $m$-tuples, where $\typeIterInt \preccurlyeq \typeInt$.
This indicates that a parameter specified as a non-iterable type in the definition of a function can also be supplied with an iterable value.

It is clear that recursion is not definable in this version of PolyC.
However, this restriction does not significantly impact the expressive power of functions.
The constraints on the iterable variables are also imposed on the body of a function, with the loop indicators $\l$ set to $\t$.
Otherwise, it would be easy to write a function \texttt{mul(a,b)} similar to Example \ref{example:fast-mul} to simulate general multiplication and then invoke \texttt{x=mul(x,x)} in a loop, leading to an exponential explosion of variable lengths.
Therefore, when defining a function, it is necessary to consider that the function may be used within a loop body, and the typing rules above provide a solution to this issue.

Some form of recursion can be incorporated into PolyC, as long as it does not destroy the polynomial computability. 
For instance, simple tail recursion can be incorporated.
More generally, to prevent divergence, we would need to ensure that the size of function parameters strictly decreases under some well-founded ordering (i.e., a termination metric). 
However, since this must be determined statically, defining recursive functions becomes more complex; for instance, some form of decreasing proof must be provided. 
In the simplest case, if we restrict recursion to only allow predefined monotonically decreasing metrics, then recursion would resemble loops in both structure and expressiveness. 
For simplicity, we have not discussed the issues raised by recursion.

\subsubsection{Operators.}
PolyC is compatible with a variety of atomic operators.
Instead of specifying all the operators of PolyC, we would like to clarify what kinds of operators can be included in the operator set.
It is easy to verify that any polynomial-time computable operator satisfying the bound given in Lemma~\ref{lem:op-size} is harmless to PolyC, which falls into the category of \textit{positive operators}, as defined in \cite{hainry2023general}.
However, this condition can be relaxed.
The principle of adding an operator $\op$ to PolyC is to make sure that it is polynomial-time computable and renders true the following predicate.
\begin{center}
    If $\typing{\Gamma}{\f}{\op(\widetilde{x})}{\tt}$ and $\semantics{\Sigma}{\op(\widetilde{x})}{v}$, then $\size{v}\dot{-}\size{\Sigma}=O(\mathrm{poly}(\size{\restr{\Sigma}{\Gamma}}))$.
\end{center}
We can now explore why some operations are unsuitable for PolyC and how such operations can be restricted to meet the aforementioned conditions.
For example, since $\size{s_1 \circ s_2} = \size{s_1} + \size{s_2}$, the general string concatenation must be prohibited, unless we restrict its type to $\mathtt{string}\times\mathtt{istring}\to\mathtt{string}$.
Similarly, scalar multiplication has the type $\typeIterInt \times \typeInt \to \typeInt$, and is therefore safe. 
We already introduced this syntactic sugar earlier in the proof of Lemma~\ref{lem:simulate}.
The sensitivity highlights the fact that we must be very careful when extending PolyC.

\subsection{Coincidence with $\classFP$}

We shall use the subscript ``+'' to annotate the PolyC counterparts of what is defined for CorePolyC.
For clarity, we only consider the PolyC computable functions of type $\Z^m \to \Z$ and denote the corresponding set by $\classPC_{+}$.
Clearly, by Theorem \ref{thm:completeness}, one has $\classFP \subseteq \classPC \subseteq \classPC_{+}$.
We only need to show that $\classPC_{+}\subseteq\classFP$, whose intuition resembles the roadmap shown in Fig.~\ref{fig:dependency_graph}.
Note that the proof of Branch I relies solely on the structure of the program and is unaffected by newly introduced data types or syntactic sugar.
We can derive the following lemma, whose proof elaborates on the proof of Lemma~\ref{lem:class}.

\begin{lemma}\label{lem:polyc-class}
$\classPC_{+}\subseteq \classFP$ if and only if $\,\forall f\,{\in}\,\classPC_{+}.\;|f(\widetilde{v})| =O(\polyn)$.
\end{lemma}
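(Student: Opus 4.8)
The plan is to mirror the proof of Lemma~\ref{lem:class}, since the statement of Lemma~\ref{lem:polyc-class} is the exact PolyC analogue. The forward direction ($\classPC_{+}\subseteq\classFP \implies \forall f.\,\size{f(\widetilde{v})}=O(\polyn)$) is immediate and transfers verbatim: if every $f\in\classPC_{+}$ is computable by a polynomial-time Turing machine, then the output of that machine, and hence $f(\widetilde{v})$, can have only polynomial size, because a machine running in time $\polyn$ can write at most $\polyn$ output symbols. So the entire substance lies in the backward direction.

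For the $\impliedby$ direction, I would assume that every $f\in\classPC_{+}$ satisfies $\size{f(\widetilde{v})}=O(\polyn)$ and aim to conclude $\classPC_{+}\subseteq\classFP$. The key observation, flagged in the excerpt, is that Branch~I of Fig.~\ref{fig:dependency_graph} (the equivalence of Lemma~\ref{lem:time-size}) depends only on the structural/control-flow skeleton of programs and is indifferent to the new data types and syntactic sugar. Concretely, I would first argue that Lemma~\ref{lem:time-size} lifts to PolyC: the three conditions (polynomial instruction count, polynomial output size, polynomial intermediate-value size) remain equivalent, because the reduction from instruction count to value size exploited only that loop bounds are size expressions and that each statement contributes boundedly to the step count --- features preserved by every PolyC construct. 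Granting the PolyC version of Lemma~\ref{lem:time-size}, the hypothesis $\size{f(\widetilde{v})}=O(\polyn)$ (condition~2) yields that all PolyC programs generate intermediate values of polynomial size (condition~3) and run in a polynomial number of instructions (condition~1). Then for any $f\in\classPC_{+}$, pick a PolyC program $\p$ computing it, and simulate $\p$ instruction by instruction on a Turing machine: there are polynomially many instructions, each a polynomial-time operator applied to polynomial-size operands, so the simulation finishes in polynomial time and $f\in\classFP$.

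The main obstacle --- and the point where more than a routine transfer of Lemma~\ref{lem:class} is needed --- is handling the newly added operators and, especially, function calls, neither of which appears in CorePolyC. For operators, I would invoke the design criterion imposed in Section~\ref{subsec:language}: every PolyC operator is polynomial-time computable and renders true the displayed size predicate $\size{v}\dot{-}\size{\Sigma}=O(\mathrm{poly}(\size{\restr{\Sigma}{\Gamma}}))$, which is precisely the condition that makes the analogues of Lemma~\ref{lem:op-size}, Lemma~\ref{lem:expr-size}, and Lemma~\ref{lem:stmt-invariant} go through with the iterable-restriction bound intact inside loops. For function calls, the crucial design feature is that function bodies are type-checked with the loop indicator set to $\t$ (rule \tref{Fun}), which forces the same iterable constraints on a function body as on a loop body; I would argue that an application $f(\widetilde{\e})$ therefore behaves, for the purpose of the size analysis, like an inlined block whose internal value growth is governed by its iterable arguments, so it cannot be exploited inside a loop to cause exponential blow-up (the very scenario the \texttt{mul} discussion warns against). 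Since recursion is not definable, a call nest has bounded depth determined statically by the program text, so the instruction count and value-size bounds remain polynomial. Assembling these ingredients re-establishes the PolyC versions of Proposition~\ref{prop:prog-size} and the supporting lemmas, after which the argument closes exactly as in Lemma~\ref{lem:class}.
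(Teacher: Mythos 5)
Your proposal is correct and follows essentially the same route as the paper, which establishes Lemma~\ref{lem:polyc-class} by elaborating on the proof of Lemma~\ref{lem:class}: the forward direction is immediate from the time bound on the simulating Turing machine, and the backward direction lifts the purely structural Branch~I equivalence (Lemma~\ref{lem:time-size}) to PolyC and then simulates the polynomially many operations on polynomial-size operands on a Turing machine. One minor remark: your careful treatment of the new operators and of function calls (the PolyC analogues of Lemmas~\ref{lem:op-size}, \ref{lem:expr-size} and \ref{lem:stmt-invariant}, which the paper handles by mutual induction on call depth in Appendix~\ref{app:proof_polyc_size}) is not actually needed for this equivalence itself, but only afterwards to verify its hypothesis via the PolyC version of Proposition~\ref{prop:prog-size} on the way to Theorem~\ref{thm:poly-main}.
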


For Branch II, the introduction of functions and function calls in PolyC enhances the growth rate of variable lengths in expressions.
Since functions may have parameters and can be invoked in a nested fashion, the size of return values can grow polynomially.
Lemma \ref{lem:polyc-expr-size} can be seen as an analog of Lemma~\ref{lem:expr-size}, with the constant $k$ replaced by a polynomial overhead $\mathrm{poly}(\size{\restr{\Sigma}{\Gamma}})$.
\begin{lemma}
\label{lem:polyc-expr-size}
Given a PolyC expression $\e$ and a consistent pair $(\Gamma,\Sigma)$, if there exist $v$ and $\tt$ such that $\semantics{\Sigma}{\e}{v}$ and $\typing{\Gamma}{\f}{\e}{\tt}$, then
\begin{enumerate}
    \item $\size{v}=\size{\Sigma}+O(\size{\restr{\Sigma}{\Gamma}}^{d})$ where $d$ is a constant.
    \item if $\tt\in\{\typeIterInt, \mathtt{{istring}}\}$, then $\size{v}=O(\size{\restr{\Sigma}{\Gamma}}^{d})$.
\end{enumerate}
\end{lemma}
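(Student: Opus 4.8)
The plan is to prove this by structural induction on the PolyC expression $\e$, mirroring the induction used for Lemma~\ref{lem:expr-size} but carrying a polynomial overhead $\mathrm{poly}(\size{\restr{\Sigma}{\Gamma}})$ in place of the additive constant $k$. The two claims must be proved simultaneously, since the iterable case (2) feeds into the general case (1): whenever a subexpression is typed $\typeIterInt$ or $\mathtt{istring}$, its size is governed entirely by the iterable restriction, and this is exactly what keeps the base-change terms from blowing up. I would set up the induction so that the inductive hypothesis gives both bounds for every proper subexpression, and then analyze each syntactic form of $\e$.

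First I would dispatch the base cases. For a variable $x$, we have $\size{v}=\size{\Sigma(x)}\le\size{\Sigma}$, and if $\Gamma(x)\in\{\typeIterInt,\mathtt{istring}\}$ then $x\in\domi{\Gamma}$ so $\size{v}\le\size{\restr{\Sigma}{\Gamma}}$; constants contribute only a constant to the size. For the operator case $\op(\widetilde{\e})$, I would invoke the principle stated in Section~\ref{subsec:language}: every PolyC operator is required to satisfy $\size{v}\dot{-}\size{\Sigma}=O(\mathrm{poly}(\size{\restr{\Sigma}{\Gamma}}))$ whenever $\typing{\Gamma}{\f}{\op(\widetilde{x})}{\tt}$, which is precisely the hypothesis engineered to make this step go through. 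Applying the inductive hypothesis to the arguments $\widetilde{\e}$ and composing with this operator bound yields (1); for (2) I would use that when $\op$ produces an iterable-typed result, the definition of the typing operator $\tmap{\op}{}$ forces all operands to be iterable, so the dependence collapses onto $\size{\restr{\Sigma}{\Gamma}}$ alone.

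The genuinely new case — and the main obstacle — is the function application $f(\widetilde{\e})$, which has no counterpart in Lemma~\ref{lem:expr-size}. Here the semantic rule \sref{App} evaluates the arguments to $\widetilde{v}$, then evaluates the body $\widetilde{\s}$ and return expression under the closure environment $\Sigma_f[\widetilde{x}\mapsto\widetilde{v}]$. The key structural fact is that the typing rule \tref{Fun} type-checks the body with the loop indicator $\l$ set to $\t$ and each parameter $x_i$ mapped to its declared type $\tt_i$; by the \tref{App} rule together with $\widetilde{\tt_1}\preccurlyeq\widetilde{\tt_2}$, any argument supplied to an $\typeInt$ parameter is treated non-iterably inside the body. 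Consequently, inside the body the iterable restriction consists only of the iterably-declared parameters, whose values are bounded (by the inductive hypothesis applied to the corresponding $\e_i$) by $O(\size{\restr{\Sigma}{\Gamma}}^{d})$. I would then appeal to Lemma~\ref{lem:stmt-invariant}(1) — valid since the body is typed under $\l=\t$ — to bound the size growth of $\widetilde{\s}$ by a polynomial in this restricted size, and to the return-expression step by a further application of the inductive hypothesis. The subtlety to handle carefully is nesting: since function bodies may themselves contain applications, the polynomial degrees compose, but because the call graph is finite and non-recursive (recursion being explicitly excluded), the depth of nesting is bounded by a constant depending only on the program text, so the composed degree $d$ remains a constant. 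Assembling these estimates gives both (1) and (2) for $f(\widetilde{\e})$, completing the induction.
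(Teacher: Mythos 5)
Your decomposition of the application case matches the paper's proof almost step by step — bound the evaluated arguments, pass through the body using the statement-size lemma under $\l=\t$, then handle the return expression, using $\preccurlyeq$ in \tref{App} to ensure that only iterable arguments survive into the iterable restriction of the callee's environment — but the induction scheme you announce does not actually support it. You declare structural induction on $\e$; yet in the case $f(\widetilde{\e})$ the function body $\widetilde{\s}$ and its return expression are not subexpressions of $f(\widetilde{\e})$ (they live in the closure stored in $\Sigma$), so ``a further application of the inductive hypothesis'' to the return expression is not licensed by your measure. More seriously, your appeal to Lemma~\ref{lem:stmt-invariant}(1) is circular as stated: that lemma, in its PolyC form, is not an external result you can cite here — assignments inside the body may themselves contain function calls, so the statement bound depends on the expression bound, which is the very thing being proved. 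The paper resolves this by proving Lemma~\ref{lem:polyc-expr-size} and the PolyC version of Lemma~\ref{lem:stmt-invariant} \emph{simultaneously}, by induction on the call depth $D$, defined as the maximum number of \sref{App} instances along any path of the evaluation derivation: the base case $D=0$ is exactly the CorePolyC proofs, and in the inductive step the arguments, the body, and the return expression all have strictly smaller depth, so both induction hypotheses are available.

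The repair is already latent in your closing remark: since recursion is excluded, the dynamic nesting depth of calls is bounded, and it is precisely this depth — not the syntactic structure of $\e$ — that must serve as the induction measure, with the statement lemma folded into the same mutual induction rather than quoted. Once that change is made, the rest of your argument goes through as in the paper: the operator principle of Section~\ref{subsec:language} handles the primitive cases, the restricted environment of the callee collapses onto the iterably-declared parameters (bounded by the expression hypothesis applied to the corresponding $\e_i$), and the stronger bound in case (2) follows because \tref{Fun} types the return expression at an iterable type. One further ingredient you should cite explicitly is Proposition~\ref{prop:invariant}: since the body is typed with $\l=\t$, the iterable restriction is unchanged by executing $\widetilde{\s}$, which is what lets you express the return-expression bound in terms of the original $\size{\restr{\Sigma}{\Gamma}}$ rather than the intermediate store.
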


However, it should be clear that there is no essential difference between a constant and a polynomial $\mathrm{poly}(\size{\restr{\Sigma}{\Gamma}})$.
Hence, Lemma \ref{lem:stmt-invariant} remains applicable for PolyC.
See Appendix~\ref{app:proof_polyc_size} for the formal proofs.
Moreover, it follows that Proposition~\ref{prop:prog-size} also holds for PolyC, which illustrates that PolyC can only construct outputs of polynomial size. 
By Lemma~\ref{lem:polyc-class}, one has that $\classPC_{+} \subseteq \classFP$.
We can now conclude the expressive power of PolyC.

\begin{theorem}
\label{thm:poly-main}
$\classPC_{+}=\classFP$.
\end{theorem}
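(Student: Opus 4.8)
The plan is to prove the two inclusions separately, leaning as heavily as possible on the machinery already developed for CorePolyC. The inclusion $\classFP \subseteq \classPC_{+}$ is essentially free: by Theorem~\ref{thm:completeness} we have $\classFP \subseteq \classPC$, and since PolyC is a conservative syntactic extension of CorePolyC, every CorePolyC program of type $\Z^m\to\Z$ is, after the evident embedding, a PolyC program computing the same function, so $\classPC \subseteq \classPC_{+}$. Chaining these gives the forward inclusion, and the entire burden then falls on the reverse inclusion $\classPC_{+} \subseteq \classFP$.

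For $\classPC_{+} \subseteq \classFP$, the first step is to invoke Lemma~\ref{lem:polyc-class}, which reduces the claim to the single statement that every $f \in \classPC_{+}$ has output of polynomial size, i.e. $\size{f(\widetilde{v})} = O(\polyn)$. This is the payoff of Branch~I: the time-to-space reduction underlying Lemma~\ref{lem:polyc-class} depends only on the block structure of programs and is insensitive to the new data types, operators, and function constructs, so it transfers verbatim. What remains is to re-establish the analog of Proposition~\ref{prop:prog-size} for PolyC, namely that $\size{\smap{\p}{}(\widetilde{v})} = O(\polyn)$ for every PolyC program $\p$.

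To get there I would redo Branch~II with the weakened bounds. The base case — the growth of values under operators — is handled by the operator admissibility principle stated in Section~\ref{subsec:language}: every PolyC operator is polynomial-time computable and satisfies $\size{v} \dot{-} \size{\Sigma} = O(\mathrm{poly}(\size{\restr{\Sigma}{\Gamma}}))$, which is exactly the polynomial relaxation of Lemma~\ref{lem:op-size}. The crucial inductive step is Lemma~\ref{lem:polyc-expr-size}, the PolyC analog of Lemma~\ref{lem:expr-size} in which the additive constant $k$ is replaced by a polynomial overhead. I would prove it by a mutual structural induction over expressions and statements (function bodies), the new case being function application. In evaluating $f(\widetilde{\e})$ via rule \sref{App}, one first evaluates the arguments, then executes the function body, then evaluates the return expression; since rule \tref{Fun} type-checks the body with the loop indicator forced to $\t$, all value growth produced inside the body is governed by the $\l=\t$ case of Lemma~\ref{lem:stmt-invariant} and thus depends only on the iterable variables. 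This is precisely the mechanism that blocks the forbidden \texttt{x=mul(x,x)}-in-a-loop pattern: for size-growth purposes a function behaves like a statement inside a loop.

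The main obstacle I expect is the function-application case of Lemma~\ref{lem:polyc-expr-size}. One must verify that nested (non-recursive) function calls compose to only a polynomial overhead rather than compounding into an exponential, and that supplying an iterable argument to a non-iterable parameter — permitted by $\widetilde{\tt_1} \preccurlyeq \widetilde{\tt_2}$ in rule \tref{App} — does not break the iterable-restriction bookkeeping on which the whole bound rests. The absence of recursion is the decisive structural fact: it guarantees that the call graph is a finite DAG, so the induction is well-founded and the number of composed polynomial overheads along any call path is bounded by a constant depending only on $\p$, keeping the total overhead polynomial. Once Lemma~\ref{lem:polyc-expr-size} is established, the unchanged Lemma~\ref{lem:stmt-invariant} yields the PolyC version of Proposition~\ref{prop:prog-size}, Lemma~\ref{lem:polyc-class} upgrades polynomial output size to $\classPC_{+} \subseteq \classFP$, and combining this with the trivial inclusion completes the proof.
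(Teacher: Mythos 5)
Your proposal is correct and matches the paper's proof essentially step for step: the easy inclusion via Theorem~\ref{thm:completeness} and the embedding $\classPC \subseteq \classPC_{+}$, the reduction of the converse to polynomial output size via Lemma~\ref{lem:polyc-class} (Branch~I being structure-only and hence unaffected by the extensions), and the mutual induction establishing Lemma~\ref{lem:polyc-expr-size} together with the PolyC version of Lemma~\ref{lem:stmt-invariant}, with the \sref{App}/\tref{App} case as the crux and the $\l=\t$ typing of function bodies under \tref{Fun} as the mechanism bounding growth by the iterable restriction. The only cosmetic difference is your well-foundedness measure — the finite, recursion-free call graph of $\p$ rather than the paper's induction on the maximum number of \sref{App} instances along a path in the evaluation derivation — which is an equivalent bookkeeping device for the same constant bound on composed polynomial overheads.
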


Using arrays, the simulation of Turing machines is more straightforward.
However, the completeness proof for CorePolyC does have the virtue in revealing that the four arithmetic operations (\texttt{+}, \texttt{-}, \texttt{\%}, and \texttt{/}), and the \texttt{size} operator, in combination with the loop and the conditional statements, are already sufficient to simulate every polynomial-time Turing machine.

\subsection{Implementation and Examples}\label{sec:example_PolyC}
To demonstrate the practical feasibility of PolyC, we focus on its implementation and present several non-trivial examples in this part.

\subsubsection{Implementation.}
We have implemented an interpreter for PolyC, making use of the tool ANTLR (ANother Tool for Language Recognition)~\cite{parr1995antlr} to generate a syntax tree from the PolyC grammar.
Special attention is given to the handling of the iterable variables.
The source code and the implementation details can be accessed through the supplemental materials.
This tool allows us to do problem-solving in a polynomial time programming paradigm, leveraging its advantages in various programming tasks.
It is our hope that the experiments with this implementation of PolyC will guide the future development of the language.

It is interesting to understand the complexity of the interpreter.
Here again, we see the advantage of PolyC.
In most models for $\classFP$, the termination property is not available, rendering meaningless the complexity issue of any interpreter.
Formally let $\I_{m}:\P_{+}\times{\V_{+}}^{m}\to\V_{+}$ denote the abstract interpreter function for programs of arity $m$, i,e,
$\I_m(\p,\widetilde{v})=\smap{\p}{}(\widetilde{v})$.
When considering only the decision version of this problem, i.e., $\{(\p,\overline{v},i): \p(\overline{v})=i\}$, there is already an intrinsic difficulty in interpreting such programs.

\begin{proposition}
\label{prop:poly-encode}
$\I_0$ is $2$-$\mathbf{EXP}$-complete.
\end{proposition}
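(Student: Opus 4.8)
The plan is to prove membership and hardness separately, with the entire argument turning on a single quantitative observation: for an arity-$0$ program of encoding length $n$, the ``constants'' $d$ and $k$ appearing in Lemma~\ref{lem:stmt-invariant} are no longer fixed but may grow linearly with $n$, pushing every size bound from polynomial up to doubly exponential.

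For membership, I would first bound the size of every value generated during the execution of an arity-$0$ program $\p$ of length $n$. Re-reading the proof of Proposition~\ref{prop:prog-size} with $d,k=O(n)$ and a starting store of size $O(n)$ (the constant literals occurring in $\p$), the bound $\size{\Sigma}^{d^k}$ becomes $n^{2^{O(n)}}=2^{2^{O(n)}}$; hence $\size{\smap{\p}{}()}$ and every intermediate value have size at most $2^{2^{O(n)}}$. By Lemma~\ref{lem:time-size}, whose equivalence is insensitive to whether the program is held fixed, the instruction count of $\p$ is then likewise bounded by $2^{2^{O(n)}}$. Since each operator is polynomial-time computable and is only ever applied to operands of size $2^{2^{O(n)}}$, each instruction costs at most $\mathrm{poly}(2^{2^{O(n)}})=2^{2^{O(n)}}$ time on a Turing machine, and a straightforward instruction-by-instruction simulation runs in the product, still $2^{2^{O(n)}}$. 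This places the decision problem $\{(\p,i):\p()=i\}$ in $2$-$\mathbf{EXP}$.

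For hardness, I would reduce an arbitrary language $L\in 2$-$\mathbf{EXP}$, decided by a Turing machine $M$ running in time $2^{2^{q(n)}}$ for a polynomial $q$. Given $x$ of length $n$, I construct in time $\mathrm{poly}(n)$ an arity-$0$ program $\p_x$ that (i) builds a doubly-exponential clock and (ii) runs the simulation of Lemma~\ref{lem:simulate} inside a single loop driven by that clock, with $x$ hard-wired as an initial constant. The clock is the crux. Using the polynomial amplifier of Lemma~\ref{lemma:poly-con}, a depth-$d$ nest of loops all bounded by $\mathtt{size}$ of a fixed iterable value of size $s$ raises the size of a non-iterable accumulator to roughly $s^{d}$; copying the result into a fresh $\mathtt{iint}$ variable outside the loop (exactly as in the well-typed program of Example~\ref{example:fast-mul}) and repeating $k$ times composes the amplifiers to size roughly $s^{d^{k}}$. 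Choosing $s=2$, $d=2$ and $k=q(n)+c$ yields a value of size at least $2^{2^{q(n)}}$ from only $O(q(n))$ lines of code. Feeding this value to $\mathtt{for}(i<\mathtt{size}(\mathtt{cnt}))\{\dots\}$ provides enough iterations to simulate every step of $M$ on $x$, so $\p_x()=1$ iff $M$ accepts $x$ iff $x\in L$, and the map $x\mapsto(\p_x,1)$ is computable in polynomial time.

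I expect the main obstacle to be the clock construction and its verification. I must show that chaining the amplifiers is type-correct, in particular that the re-iterabilization step (assigning an accumulated value to a fresh $\mathtt{iint}$ variable \emph{between} successive amplifiers, never inside a loop body) respects the restrictions enforced by \tref{Loop} and \tref{Asgmt}, so that $\p_x$ type-checks; and that the resulting bit-length genuinely reaches $2^{2^{q(n)}}$, matching the upper bound used for membership. A secondary point to verify is that the absence of multiplication poses no obstruction: all amplification is obtained purely from nested loops and repeated additions $\mathtt{y}=\mathtt{y}+\mathtt{y}$, exactly the pattern the type system permits. This also explains why the two directions meet precisely at $2$-$\mathbf{EXP}$ rather than at a higher level.
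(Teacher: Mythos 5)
Your proposal is correct and follows essentially the same route as the paper's proof: membership via the observation that the exponents $d,k$ in Lemma~\ref{lem:stmt-invariant} scale with program size, yielding a $2^{2^{O(n)}}$ bound on values and hence on the instruction count, followed by instruction-by-instruction simulation; and hardness via chaining the Lemma~\ref{lemma:poly-con} amplifier with $d=2$ polynomially many times (the paper repeats the loop part $n^k$ times, you compose $q(n)+c$ amplifiers with re-iterabilization in between---the same construction) to build a doubly-exponential clock driving the Lemma~\ref{lem:simulate} simulation with the input hard-wired as a constant. The type-checking concern you flag is genuine but resolves exactly as you anticipate, since the assignments to fresh $\mathtt{iint}$ variables occur at top level where $\l=\f$.
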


Note that the upper bound does not increase as $m$ grows, so we easily obtain the following corollary.
\begin{corollary}
For all $m\ge0$, $\I_m$ is $2$-$\mathbf{EXP}$-complete.
\end{corollary}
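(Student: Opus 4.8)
The plan is to prove membership and hardness separately. The double-exponential bound arises precisely because here the program is part of the input, so the program-dependent quantities that were treated as $O(1)$ in Proposition~\ref{prop:prog-size} (the largest literal, the nesting depth $d$, and the number of sequential loop phases $k$) must now be carried explicitly as functions of $N:=\size{\p}$.

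For the upper bound, $\I_0\in 2$-$\mathbf{EXP}$, I would fix a closed program $\p$ of size $N$ and bound its execution in terms of $N$. The largest literal occurring in $\p$ has size $O(N)$, so the initial relevant value size is $O(N)$, while $d\le N$ and $k\le N$. Iterating the growth estimate $\size{\Sigma'}\dot{-}\size{\Sigma}=O(\size{\Sigma}^{d^k})$ of Lemma~\ref{lem:stmt-invariant} across the top-level statements shows that every value generated during the run has size at most $N^{d^k}=2^{2^{O(N\log N)}}$. By the program-as-input reading of Lemma~\ref{lem:time-size}, the instruction count $\ic(\p,\cdot)$ is of the same order, and since every operator is polynomial-time computable on operands of size $S=2^{2^{O(N\log N)}}$, the whole interpretation runs in time $S\cdot\mathrm{poly}(S)=2^{2^{O(N\log N)}}$. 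Comparing $\smap{\p}{}$ with the target $i$ is then cheap, placing the decision problem in $\mathrm{DTIME}(2^{2^{N^2}})\subseteq 2$-$\mathbf{EXP}$.

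For hardness I would reduce an arbitrary $L\in 2$-$\mathbf{EXP}$, decided by a Turing machine $M$ of running time $T(n)=2^{2^{p(n)}}$, to the decision version of $\I_0$. Given $x$ with $\size{x}=n$, I build a closed program $\p_x$ whose core is a double-exponential clock. A single application of Lemma~\ref{lemma:poly-con} with polynomial parameter yields only single-exponential size, so the key is to \emph{chain} $p(n)$ squaring stages: stage $j$ consists of a depth-$2$ loop nest \texttt{for(i<size(z))for(j<size(z)) y=y+y;} over the previous iterable seed \texttt{z}, which doubles a non-iterable \texttt{y} exactly $\size{z}^2$ times, followed by an assignment \texttt{iint z'; z'=y;} performed \emph{outside} all loops. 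Thus $s_j=s_{j-1}^2$, so after $p(n)$ stages the iterable variable \texttt{cnt} satisfies $\size{\texttt{cnt}}=s_0^{2^{p(n)}}\ge 2^{2^{p(n)}}=T(n)$; crucially, since each squared result is stored into a fresh iterable variable only at loop-indicator level $\f$, the whole chain is well-typed. I then hard-wire $x$ as the initial ternary tape encoding and reuse the step-by-step simulation of Lemma~\ref{lem:simulate} inside one loop \texttt{for(i<size(cnt))\{...\}}; because the simulation idles in the halting state, running for more than $T(n)$ steps is harmless, and the tape (at most $T(n)$ cells) fits in integers of size $O(T(n))=2^{2^{p(n)}}$, which lies within the reachable range. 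Setting $i$ to the accepting indicator, $\smap{\p_x}{}=i$ iff $M$ accepts $x$; the construction has size $\mathrm{poly}(n)$ and is produced in polynomial time, giving $2$-$\mathbf{EXP}$-hardness.

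I expect the main obstacle to be the quantitative clock analysis: proving that iterating the squaring stage $p(n)$ times with a polynomial-size, well-typed program genuinely attains $\size{\texttt{cnt}}\ge 2^{2^{p(n)}}$, and dually that no poly-size program exceeds $2^{2^{O(N\log N)}}$, while threading the iterable-type discipline (assignments to $\typeIterInt$ variables only outside loops) through every stage. A secondary subtlety is that Proposition~\ref{prop:prog-size} and Lemma~\ref{lem:time-size} were phrased for a fixed program with the input as the size parameter, so I must re-verify that the structural bounds of Lemma~\ref{lem:stmt-invariant} expose the program's literals and its $d,k$ as explicit parameters in $N$ rather than absorbing them into program-dependent constants.
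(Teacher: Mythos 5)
Your proposal's core is sound and, in substance, it retraces the paper's own proof of Proposition~\ref{prop:poly-encode}: your membership bound, obtained by iterating Lemma~\ref{lem:stmt-invariant} with the program-dependent parameters $d,k\le\size{\p}$ made explicit, yields $2^{2^{O(N\log N)}}$, which is the paper's $2^{O(\size{\p}^{\size{\p}})}$ estimate; and your chained-squaring clock is exactly the paper's modification of Lemma~\ref{lemma:poly-con} (take $d=2$ and repeat the loop block $\mathrm{poly}(n)$ times), followed by the step-by-step simulation of Lemma~\ref{lem:simulate} with the input $x$ hard-wired as a constant. Your typing check is also correct: each squared value is committed to a fresh $\typeIterInt$ variable outside all loops, i.e.\ at loop-indicator $\f$, so $\assignable$ is never violated, and the stage recurrence $s_j=s_{j-1}^2$ gives $\size{\texttt{cnt}}\ge 2^{2^{p(n)}}$ with a program of size $O(p(n))$.

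The genuine gap is that the statement to be proved is the corollary for \emph{all} $m\ge 0$, and your proof never leaves $m=0$: the upper bound is stated for a closed program, and the reduction targets the decision version of $\I_0$. What the corollary adds over Proposition~\ref{prop:poly-encode} is precisely the lift to arbitrary arity, which the paper dispatches with the one-line remark that the upper bound does not increase as $m$ grows. You need two (short) additions. First, for membership of $\I_m$: the inputs $\widetilde{v}$ are part of the instance, so the initial store satisfies $\size{\Sigma}\le N$ with $N$ the total instance size, and the same iteration of the $\size{\Sigma}^{d^{k}}$ growth bound gives a double-exponential bound in $N$, independent of $m$. Second, for hardness of $\I_m$ with $m\ge 1$: reduce $\I_0$ (or the language $L$ directly) to $\I_m$ by giving the constructed program $m$ parameters that its body ignores and supplying arbitrary arguments, say zeros. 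Without these lines, your argument proves the proposition rather than the corollary. A minor caveat you flag yourself, rightly: Lemma~\ref{lem:time-size} hides program-dependent constants, so the time bound should come from tracking the instruction count directly through the structural induction behind Lemma~\ref{lem:stmt-invariant} (as the paper's proof implicitly does), not from quoting that lemma with the program treated as part of the input.
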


Intuitively, the complexity of the interpreter characterizes the simplicity of PolyC, since the more concise the language, the less efficient the interpreter tends to be.
Another way to understand this is that Proposition~\ref{prop:poly-encode} concerns the fact that, for any PolyC program, the length of its execution paths is at most double-exponential; whereas Theorem~\ref{thm:poly-main} characterizes the case where, for a fixed program, the length of its execution paths is guaranteed to be polynomial. 
These two should not be conflated. 
Moreover, in general-purpose programming languages, both scenarios are unbounded, which is one of the reasons why PolyC can simplify the program verification process.

From the theoretical perspective, there remains an open question: for any language characterizing $\classFP$, what is the lower bound for the complexity of the interpreter of the language?
By diagonalization, the complexity of any such interpreter is strictly beyond $\mathbf{P}$.
It is extremely unlikely that such an interpreter is inside $\mathbf{NP}$ or $\mathbf{PSPACE}$ since that would separate $\mathbf{NP}$ or $\mathbf{PSPACE}$ from $\mathbf{P}$.

\subsubsection{More Examples.} 
Next, we shall present several PolyC program examples that make use of the newly added features.

\begin{example}[Knapsack Problem]
\label{exp:knapsack}
In the Knapsack problem, given $m$ items with positive integer weights $w_1, \ldots, w_m$ and positive integer values $c_1, \ldots, c_m$, it is asked to find a subset of items such that the sum of their weights does not exceed a given positive integer $W$, and the total value is maximized.
The most straightforward way to compute the value is by using the dynamic programming algorithm in $O(mW)$-time.
Note that a pseudo-polynomial $O(mW)$ of input length, which rules out any implementation in PolyC.
But if we use the unary representation of $W$, the classic dynamic programming algorithm for \texttt{knapsack}  can be implemented in PolyC.
\begin{lstlisting}
int max(int x,int y){
    if(x>y) return x;
    return y;
}

int knapsack(array<int> w, array<int> v,iint W,iint n){
    array<array<int>> dp=array(size(n));
    for(i<size(n)) dp[i]=array(size(W)+1);
    for(i<size(n)){
        for(j<size(W+W)){
            if(i==0){
                if(j>=w[0]) {
                    dp[i][j]=v[0];
                } else {
                    dp[i][j]=0;
                }
                continue;
            }
            if(j<w[i]) {
                dp[i][j]=dp[i-1][j];
            } else {
                dp[i][j]=max(dp[i-1][j],dp[i-1][j-w[i]]+v[i]);
            }
        }
    }
    return dp[size(n)-1][size(W)];
}
\end{lstlisting}
Note that the parameters \texttt{W} and \texttt{n} in the function \texttt{knapsack} appear in the size function.
Both must be of $\typeIterInt$ type.
For example, when $w = [1,2,2,3,1]$, $v = [1,2,3,4,5]$, $W = 5$, and $n = 5$, one may invoke \texttt{knapsack(w, v, 0b11111, 0b11111)} to get the answer.

\end{example}

\begin{example}[Graph Reachability Problem]
\label{exp:path}
The \texttt{PATH} problem asks whether there exists a path from a given source node $s$ to a target node $t$ in a directed graph $G=(V,E)$.
The problem can be solved by an $O(\size{V}+\size{E})$-time Depth-First Search algorithm (DFS).
Assuming $\size{V} = m$, using an adjacency matrix to record information about a directed graph and encoding it as a string \texttt{adjacent}, we get a boolean representation of the answer by invoking \texttt{path(m, s, t, adjacent)} as follows.
\begin{lstlisting}
bool path(int m,int s, int t,istring adjacent){
    array<bool> visited=array(size(adjacent));
    for(i<size(adjacent)) visited[i]=false;
    array<int> stack=array(size(adjacent));
    stack[0]=s;
    int top=1;
    int current;
    for(i<size(adjacent)){
        if(top!=0){
            top-=1;
            current=stack[top];
            visited[current]=true;
            int row=0;
            for(j<size(adjacent)){
                if(j/m==current){
                    if(adjacent[j]=="1"&&visited[j%m]==false){
                        stack[top]=j%m;
                        top+=1;
                    }
                }
            }
        } else {
            break;
        }
    }
    return visited[t];
}
\end{lstlisting}
We remark that designing a correct PolyC program for a problem is essentially a proof that the problem is in $\classFP$.
Therefore, we have proved that $\mathbf{NL}\subseteq\mathbf{P}$ by this program, since the \texttt{PATH} problem is $\mathbf{NL}$-complete.

\end{example}

The interesting aspect in Example \ref{exp:path} lies not only in representing the adjacency matrix of the graph using strings but also in using a loop with the assistance of a stack to replace the original recursive algorithm of DFS.
In fact, it is possible to write a library in PolyC, allowing us to freely use stacks.
This demonstrates that even though PolyC does not admit recursively defined functions, it is always possible to convert a polynomial recursive algorithm into an equivalent iterative form.
Below, we consider a slightly more complex yet very common problem in real life, the sorting problem.

\begin{example}[Sorting Problem]
\label{exp:sorting}
The input for sorting algorithms is an array $A$, and the goal is to rearrange the elements' positions so that they are sorted in non-decreasing or non-increasing order.
There are various sorting algorithms, and typically, their average time complexity is $\Theta(n\log n)$, where $n$ is the number of elements in the array.
Although the size of individual elements is not explicitly considered, overall it is still a polynomial-time algorithm and can be easily addressed using PolyC.
\begin{lstlisting}
void merge(array<int> arr,int left,int mid,
        int right,array<int> tmp,iint m){
    int index=left,p1=left,p2=mid;
    for(i<size(m)){
        if(p1<mid&&p2<right){
            if(arr[p1]<arr[p2]){
                tmp[index]=arr[p1];index++;p1++;
            } else {
                tmp[index]=arr[p2];index++;p2++;
            }
            continue;
        }
        if(p1<mid){
            tmp[index]=arr[p1];index++;p1++;
            continue;
        }
        if(p2<right) {
            tmp[index]=arr[p2];index++;p2++;
            continue;
        }
        break;
    }
}

void sort(array<int> arr,iint m){
    array<int> tmp=array(size(m));
    int gap=1;
    for(i<size(m)){
        int left=0,mid=min(left+gap,size(m));
        int right=min(mid+gap,size(m));
        for(j<size(m)){
            if(left<size(m)){
                merge(arr,left,mid,right,tmp,m);
                left+=gap+gap;
                mid=min(left+gap,size(m));
                right=min(mid+gap,size(m));
            } else {
                break;
            }
        }
        for(j<size(m)) arr[j]=tmp[j];
        gap=gap+gap;
        if(gap>size(m)) break;
    }
}
\end{lstlisting}
Since the loop statement breaks when the array is sorted, the complexity remains $O(n\log n)$.
The function \texttt{sort} requires an additional parameter specifying the length of the array in unary.
For instance, if there is an array \texttt{arr} of length $5$, we can invoke \texttt{sort(arr, 0b11111)} to sort it.

\end{example}

\subsection{Application}\label{subsec:application}
In this part, we explore the potential applications of PolyC within the field of formal methods.
Indeed, the impacts of PolyC are at least two-fold.

\subsubsection{Verification of PolyC Programs.}
Most verification problems of PolyC programs have an exact algebraic characterization.
Since all well-typed PolyC programs are well-structured, the loops in PolyC can be unrolled such that any nested structures can ultimately be transformed into a form with at most one loop. 
Formally, we say that a program is \emph{simple} if its main function consists of three parts: i) declaration of variables, ii) a loop \texttt{\textbf{for}(i<\textbf{size}(y)) $\s$}, where $\s$ does not contain any loops or function calls, and iii) a return statement.
Proposition~\ref{prop:unfold} demonstrates that any program can be reduced to a simple one.

\begin{proposition}
    \label{prop:unfold}
    For any $m$-ary PolyC program $\p$ with inputs $\widetilde{x}$, there exists a bound $t_0(n)=n^{\size{\p}^{O(\size{\p})}}$ and a $O(\mathrm{poly}(\size{\p}))$-size $(m+1)$-ary simple program $\p_0$ with inputs $(\widetilde{x},y)$ such that
    \begin{enumerate}
        \item the loop bound in the main function of $\p_0$ is $\opSize(y)$, and
        \item for any inputs $\widetilde{v}\in\V^m$ and $\size{t}\ge t_0(\size{\widetilde{v}})$, $\smap{\p_0}{}(\widetilde{v},t)=\smap{\p}{}(\widetilde{v})$.
    \end{enumerate}
\end{proposition}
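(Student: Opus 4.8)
The plan is to transform an arbitrary well-typed PolyC program $\p$ into a simple program $\p_0$ by repeatedly unrolling nested loops and inlining function calls, while threading a single fresh iterable parameter $y$ through the construction to serve as a universal clock that subsumes every original loop bound. The key observation is that Proposition~\ref{prop:prog-size} (and its PolyC analog) already guarantees that all intermediate values have size $O(\polyn)$, so every \tref{Loop} statement \texttt{\textbf{for}(i<\textbf{size}(e)) s} iterates at most $\size{\smap{\e}{}}$ times, a quantity bounded by a fixed polynomial in the input size $n$. Thus every loop can be safely replaced by a loop whose bound is $\size{y}$ for a sufficiently large $y$, provided the loop body is guarded so that once the original bound is exceeded the body becomes a no-op.

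First I would eliminate function calls. Since recursion is not definable in PolyC and the call graph is therefore a DAG, I can inline each function application by substituting its body, introducing fresh local variables to avoid capture and respecting the iterable-type discipline (the premises of \tref{App} and \tref{Fun} ensure that inlining preserves well-typedness, because arguments supplied to non-iterable parameters may be of iterable type and the substituted body was already type-checked with $\l=\t$). Next I would flatten the nested-loop structure. For two nested loops \texttt{\textbf{for}(i<\textbf{size}(a))\{\textbf{for}(j<\textbf{size}(b)) s\}}, I would merge them into a single loop over a combined counter ranging up to $\size{a}\cdot\size{b}$, recovering $i$ and $j$ from the counter by division and modulo and guarding the inner body by a boolean test; iterating this collapses a depth-$d$ nest into one loop. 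Each guard uses only the four arithmetic operators and \opSize, all of which are available and type-safe, and each merge multiplies the iteration count by at most a polynomial factor, so after finitely many merges the total count is bounded by $O(n^{\size{\p}^{\size{\p}}})$, which fixes the choice of $t_0(n)$. The exponent $\size{\p}^{\size{\p}}$ arises because the nesting depth and the number of sequential loops are each bounded by $\size{\p}$, and Lemma~\ref{lem:stmt-invariant} tells us the size growth compounds as $d^k$ in precisely this fashion.

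Then I would introduce the fresh iterable parameter $y$ and replace the single flattened loop bound by $\size{y}$, inserting a guard \texttt{\textbf{if}(counter<}B\texttt{)\{\dots\} \textbf{else} \{\}} where $B$ is the original (polynomially bounded) combined bound, so that any surplus iterations beyond $B$ do nothing. Correctness then amounts to showing that for $t \ge t_0(\size{\widetilde{v}})$ we have $\size{t} \ge B$, so the flattened loop runs enough real iterations to reproduce the behavior of $\p$ exactly; this is where clause~(2) of the statement is discharged, using the semantics \sref{Loop} together with the guard ensuring idempotence of the extra iterations. The size bound $O(\mathrm{poly}(\size{\p}))$ on $\p_0$ follows because inlining is bounded by the (constant, per program) call depth and loop-merging adds only a constant number of arithmetic guards per merge, with the number of merges bounded by $\size{\p}$.

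The main obstacle I anticipate is the loop-merging step, specifically maintaining the iterable-type invariant throughout. The combined counter must itself be iterable to serve as a loop bound via \opSize, yet the recovered indices $i,j$ and the guard computations must not violate the \assignable{} restriction that forbids assigning to iterable variables inside a loop body (Proposition~\ref{prop:invariant}). Reconciling these requires care: the merged counter is supplied implicitly by the \tref{Loop} rule (so it is never explicitly assigned), while the derived quantities $i = \mathtt{counter}/\size{b}$ and $j = \mathtt{counter}\,\%\,\size{b}$ must be stored in \emph{non-iterable} \typeInt{} variables, which is permissible. I would verify that this assignment of types lets the whole flattened body type-check with $\l=\t$, and that the semantic equivalence survives the reindexing — the bookkeeping here, rather than any deep idea, is the genuinely delicate part of the argument.
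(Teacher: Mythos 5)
Your overall strategy coincides with the paper's: inline all function calls (sound because recursion is absent, so the call graph is acyclic), coalesce the loop structure into a single loop clocked by a fresh iterable parameter $y$, make surplus iterations no-ops via guards, and extract the bound $t_0(n)=O(n^{\size{\p}^{\size{\p}}})$ from Lemma~\ref{lem:stmt-invariant}. Your point about the typing discipline---bookkeeping indices must live in $\typeInt$ variables while the merged counter is the implicitly declared iterable of \tref{Loop}---is exactly how the paper's construction behaves (it declares \texttt{int i, j;} explicitly).

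However, your concrete merging step has a genuine gap. First, the guard \texttt{if(counter<}$B$\texttt{)} with $B=\size{a}\cdot\size{b}$ is not expressible: PolyC has no multiplication operator (only scalar multiplication by a syntactic constant), and inside the single loop of a simple program you cannot run a multiplication subroutine, since that would itself require a loop and an assignment to an iterable variable. This is repairable by guarding on the recovered outer index, \texttt{if(i<size(a))}. The deeper problem is that the fixed-stride recovery $i=\texttt{counter}/\size{b}$, $j=\texttt{counter}\,\texttt{\%}\,\size{b}$ presumes the inner bound is invariant across outer iterations. Proposition~\ref{prop:invariant} guarantees that iterable variables cannot be \emph{assigned} inside a loop body, but the outer loop's counter is itself an iterable variable that changes every iteration via the semantics rather than via assignment, and it may legally occur in the inner bound, as in \texttt{for(i<size(x)) for(j<size(i)) s}. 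Then the stride varies with the outer iteration and div/mod decoding yields wrong indices. The paper's construction avoids decoding altogether: each simulated outer iteration receives a uniform quota \texttt{size(y)/size(z)} of clock ticks, the indices are maintained by explicit increment-and-reset (\texttt{i+=1; if(i>=size(y)/size(z)) \{i=0; j+=1;\}}), and correctness follows from the induction hypothesis that the inner simple form computes the right result for \emph{any} sufficiently large clock, surplus ticks being idempotent. Finally, your construction only merges \emph{nested} loops; to reach the simple form you must also fold sequential loops and straight-line statements into the single loop---the paper does this by dispatching on the clock range (\texttt{if(i<size(y)/2)} runs the first phase, else the second) and by turning each assignment into a one-iteration guarded loop. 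These repairs all use the guard machinery you already propose, but as written the flattening scheme fails on inner bounds that mention the outer counter and is incomplete on sequential composition.
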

The idea behind this proposition is simple: since any program implements a function in $\classFP$, they have the normal form as shown in the proof of Lemma~\ref{lem:simulate}.
A constructive proof is given in Appendix~\ref{app:proof_unfold}.

By Proposition~\ref{prop:unfold}, for any program $\p$, we can construct efficiently the corresponding $\p_0$ and $t_0$, where $t_0$ is bounded by a doubly-exponential function of the program size.
Assume that there are $k$ variables declared in the first part of $\p_0$, including the inputs.
Then the effect of the loop body can be viewed as a function $\mathsf{loop}:\V^k\to \V^k$, and the expression in the return statement can be viewed as $\mathsf{ret}:\V^k\to\Z$.
The input stage can be viewed as a function $\mathsf{init}:\V^m \to \V^k$, where any newly declared variables are initialized to their default values.
Finally, the effect of $\p$ can be characterized as $\smap{\p}{}=\mathsf{ret}\circ \mathsf{loop}^{[t_0]}\circ \mathsf{init}$, where the superscript $[t_0]$ denotes the $t_0$-fold composition of the loop body's effect. 
Some semantic properties, for example, the equivalence of two programs $\p_1$ and $\p_2$ can be precisely reduced to a boolean statement of the form
\begin{equation}
    \varphi_{\mathsf{equiv}}(\p_1,\p_2):=\forall \widetilde{v}.\;(\mathsf{ret_1}\circ \mathsf{loop_1}^{[t_1]}\circ \mathsf{init_1}) (\widetilde{v})=(\mathsf{ret_1}\circ \mathsf{loop_2}^{[t_2]}\circ \mathsf{init_2}) (\widetilde{v}).
\end{equation}
Note that a general Turing-complete language cannot provide such algebraic characterizations.
Moreover, in daily life, we only consider bounded inputs.
Then such formulas can be viewed as instances of co-$\mathbf{NP}$ problem $\mathbf{TAUTOLOGY}$. 
Furthermore, by considering over-approximations or restricting to subclasses of programs, more lightweight analysis methods may be achieved.

\subsubsection{Impact on Other Programming Languages.}
Our type system can also serve as a complexity analysis tool for other programming languages.
A straightforward approach would involve the following steps, as shown in Algorithm~\ref{alg}.

\begin{algorithm}[ht]
\caption{A simple complexity analysis procedure}
\begin{algorithmic}[1]\label{alg}
    \STATE Assign iterable types to all variables;
    \REPEAT
        \STATE Perform type checking;
        \IF{type checking succeeds}
            \STATE \textbf{output} ``poly''; \RETURN
        \ELSE
            \STATE Update some types to non-iterable types accordingly;
        \ENDIF
    \UNTIL{all involved types cannot be adjusted}
    \STATE \textbf{output} ``unknown'';
\end{algorithmic}
\end{algorithm}

This algorithm is sound but not complete.
Consider the left program in Fig.~\ref{fig:mul} as an example.
In the first step, \texttt{y} is assigned the type $\typeIterInt$.
Since \texttt{y} is modified within the loop body, its type must eventually be updated to $\typeInt$.
However, because \texttt{y} also serves as the operand of the $\opSize$ operator, it must retain the type $\typeIterInt$.
As a result, the program fails to pass the polynomial check.
Nevertheless, this does not imply that its running time is non-polynomial.
A key challenge in enhancing this method lies in inferring whether loop bounds are dominated by the bit-size of an iterable expression. 
For instance, \texttt{y} can easily be replaced by a newly declared iterable variable \texttt{z}.
If such substitutions can be effectively detected, the class of programs for which the tool can soundly determine the ``poly'' status would be significantly extended.

\section{Remark}\label{sec:conclusion}
What have we achieved?
We have designed a simple imperative programming language, provided a rigorous definition of its syntax, semantics, and type system, and showed the consistency of the type system.
We have proved that PolyC captures precisely the class $\classFP$ of the polynomial time computable functions, and thus have laid down a firm foundation for a programming methodology.
The design of {PolyC} has successfully addressed the problems that appeared in the previous models and languages. 
It is discussed in the supplementary materials how to introduce more features in {PolyC}.

What can {PolyC} offer?
It is clear that PolyC forces a particular programming methodology on programmers.
Designing a correct program for a problem is itself a proof that the problem is polynomial time computable (See Example~\ref{exp:path}).
Programming in PolyC calls for more low-level algorithmic thinking.
Programmers may spend more time in the design phase, trying to understand the problem better before programming.
The language's built-in complexity boundaries create a natural simplification to theorem provers and model checkers: by eliminating infinite state space inherent to general-purpose languages, PolyC reduces verification overhead and may even enable lightweight verification strategies.

What more can we do with {PolyC}?
Some software systems implement algorithms for $\mathbf{NP}$-complete problems.
In reality, their performance is just fine.
The reason is that among all the input parameters of any one such system, the algorithm depends only on some input exponentially, and the size of the input tends to be small in applications.
The problems solved by these systems are studied in the theory of parameterized complexity~\cite{downey2012parameterized}.
There is more than one way to implement such systems in {PolyC}.
Either a constant upper bound is placed on an input parameter, or the input is coded up in unary.
The dynamic algorithm of the knapsack problem, for example, can be implemented in {PolyC} using the unary encoding of the parameters (Example~\ref{exp:knapsack}).
Thus {PolyC} is far more useful than what it appears at first sight.
Further development of the type system is also a promising future work.
As discussed in Section~\ref{subsec:application}, we can slim down programs designed in other languages to a PolyC-like form for the purpose of feasibility, where the treatment of the loop constructs is crucial.

Randomization is becoming more and more useful in algorithm design.
It is shown in~\cite{10.1145/380752.380813} and the follow-up works that the simplex method has randomized algorithms whose average running time is polynomial.
It means that we may design with a randomized PolyC a program for the linear programming problem that has a great success rate.
In the setting of randomized computation, a polynomial-time algorithm is all we need. 
PolyC can, of course, be extended in different manners in different application scenarios.

\subsubsection{Acknowledgements.}
This work is supported by the National Natural Science Foundation of China (No. 62072299, 62020106005) and the Science and Technology Commission of Shanghai Municipality (No. 24BC3200500).
We are particularly grateful to the anonymous reviewers and Xin Yu for their valuable comments, which have greatly improved this work.

\bibliographystyle{splncs04}
\bibliography{contents/reference}

\appendix

\section{Missing Proofs of Section \ref{sec:CorePolyC}}
\subsection{Proof of Lemma~\ref{lem:map}}
\begin{proof}
The proof is by structural induction. 
\begin{enumerate}
    \item By definition $\C=\{\True,\False\}\cup\D^{+}$.
By the definitions of $\smap{\cdot}{}$ and $\tmap{\cdot}{}$, the followings are valid:
    \begin{itemize}
        \item $\smap{\c}{}\in\B$ and $\tmap{\c}{}=\typeBool\in\Bool$ whenever $\c\in\{\True,\False\}$;
        \item $\smap{\c}{}\in\Z$ and $\tmap{\c}{}=\typeIterInt\in\Int$ whenever $\c\in \D^{+}$.
    \end{itemize}
    Therefore $\smap{\c}{}$ and $\tmap{\c}{}$ are always consistent.
    \item $\T=\Bool\cup\Int$, and
    \begin{itemize}
        \item $\smap{t}{}=\f\in\B$ whenever $t\in\Bool$;
        \item $\smap{t}{}=0\in\Z$ whenever $t\in\Int$.
    \end{itemize}
Therefore $\smap{t}{}$ and $t$ are consistent.
    \item Assume that $\op\in\{\texttt{\&\&},\texttt{||},\texttt{!}\}$.
    Since $\widetilde{t}\in\dom(\tmap{\op}{})$, it can only be the case that $\widetilde{t}=\typeBool^m$, and thus $\tmap{\op}{}(\widetilde{t})=\typeBool$. Furthermore, as $\widetilde{v}$ is consistent with $\widetilde{t}$, it follows that $\widetilde{v}\in\B^m$.
    Regardless of whether $\op$ is \texttt{\&\&}, \texttt{||}, or \texttt{!}, it holds that $\smap{\op}{}(\widetilde{v})\in\B$.
    Consequently, $\smap{\op}{}(\widetilde{v})$ is consistent with $\tmap{\op}{}(\widetilde{t})$.
    Similar reasoning can be applied in other cases.
    \qed
\end{enumerate}
\end{proof}
\subsection{Proof of Lemma~\ref{lem:preservation}}

\begin{proof}
The proof is by structural induction on the typing derivation trees.
The uniqueness follows from the deterministic nature of the semantics.

As for the first statement, consider the derivation tree of $\typing{\Gamma}{\l}{\e}{t}$.
Case~\tref{Const} and \tref{Op} are proved by applying Lemma~\ref{lem:map} (1) and (3).
Case \tref{Var} follows from the consistency of $\Sigma$ and $\Gamma$.
Case~\tref{Paren} is evident.

As for the second statement, $\Gamma'\subseteq\Gamma$ and the existence of $\Sigma'$ are trivial.
We only prove the consistency of $\Sigma'$ and $\Gamma'$ by induction.
\begin{itemize}
    \item Case~\tref{Asgmt}.
    Since $x\in\dom(\Gamma)$, we have $\dom(\Sigma')=\dom(\Sigma)\cup\{x\}\subseteq\dom(\Gamma)$.
    Therefore, one only needs to prove that $v$ and $\Gamma(x)$ are consistent, which is indeed true since $v$ and $t$ are consistent by Lemma~\ref{lem:preservation} (1) and $t\comparable\Gamma(x)$.
    \item Case~\tref{Decl}.
    Since $\Gamma'=\Gamma[x\mapsto t]$ and $\Sigma'=\Sigma[x\mapsto\smap{t}{}]$, we can complete our proof by applying Lemma \ref{lem:map} (2).
    \item Case~\tref{Loop} where $\;\s\equiv\texttt{\textbf{while}(}x \texttt{<} \e \texttt{)} \s'$.
    $$\trule
    {\typing{\Gamma}{\l}{\e}{t\equiv\typeIterInt}\enspace x\not\in\dom(\Gamma)\enspace\typing{\Gamma[x\mapsto{t}]\,}{\t}{\s'}{\Gamma'}}
    {\typing{\Gamma}{\l}{\s}{\Gamma}}
    {Loop}$$
    \[\srule{\semantics{\Sigma}{\e}{i}\enspace\Sigma_0:=\Sigma\enspace\semantics{\Sigma_{j}[x\mapsto j]}{\s'}{\Sigma_{j+1}},\;\text{for}\;\;0\le j< i.}{\semantics{\Sigma}{\,\s}{\Sigma_{i}}}{Loop}\]
    Since $\typing{\Gamma}{\l}{\e}{t} \in \Int$, we obtain $\semantics{\Sigma}{\e}{i \in \Z}$ by the induction hypothesis.
    If $i\le 0$, the case is trivial.
    Otherwise, since $\typing{\Gamma[x\mapsto t]\,}{\t}{\s'}{\Gamma'}$ and $\Sigma_0[x\mapsto 0]$ is consistent with $\Gamma[x\mapsto t]$, one has that $\semantics{\Sigma_0[x\mapsto 0]}{\s'}{\Sigma_1}$ and $\Sigma_1$ is consistent with $\Gamma'$.
    We have proved that $\Gamma[x\mapsto{t}]\subseteq\Gamma'$. Consequently, $\forall\; x\in\dom(\Gamma)$, $\Gamma(x)=\Gamma'(x)$ and $\Gamma'(x)$ is consistent with $\Sigma_1(x)$. It follows that $\Sigma_1$ is also consistent with $\Gamma$. By repeating this process, one can ultimately prove that $\semantics{\Sigma_{i-1}}{\s'}{\Sigma_i}$ and $\Sigma_i$ is consistent with $\Gamma$. The result follows immediately by applying rule \sref{Loop}.
    \item Case~\tref{Cond}.
    By the induction hypothesis, $\Sigma_1$ is consistent with $\Gamma_1$.
    Since $\Gamma\subseteq\Gamma$, $\Sigma_1$ is consistent with $\Gamma$.
    Similarly, we have $\Sigma_2$ is consistent with $\Gamma$.
    Note that $\Sigma'$ must be either $\Sigma_1$ or $\Sigma_2$ and $\Gamma'=\Gamma$.
    It follows that $\Sigma'$ is consistent with $\Gamma'$.
    \item Case \tref{Seq} and Case \tref{Block} can all be proved by applying the induction hypothesis.\qed
\end{itemize}
\end{proof}

\subsection{Proof of Theorem~\ref{thm:safety}}
\begin{proof}
Assume that $\p\equiv\texttt{\textbf{int main}(\textbf{int} } x_1,\cdots,\texttt{\textbf{int} }x_m \texttt{)\{ }\widetilde{\s}\texttt{ \textbf{return} }\e\texttt{;\}}$.
Let $\widetilde{t}$ denote the parameter types of the main function. By the syntax of CorePolyC, one has that $\widetilde{t}\in\Int^m$.
Therefore, $\emptyset[\widetilde{x}\mapsto\widetilde{i}]$ is consistent with $\emptyset[\widetilde{x}\mapsto\widetilde{t}]$ since $\widetilde{i}\in\Z^m$.
By the well-typedness of program $\p$, there exists $t'$ such that $\typing{\emptyset}{\f}{\p}{t'}$.
According to rule \tref{Prog}, there exist a typing environment $\Gamma'$ and a type $t'\in\Int$ rendering true the following.
\begin{equation}
    \typing{\emptyset[\widetilde{x}\mapsto\widetilde{t}]}{\l}{\widetilde{\s}}{\Gamma'},\quad\typing{\Gamma'}{\l}{\e}{t'}.
\end{equation}
By Lemma \ref{lem:preservation}, we have there exists a unique $\Sigma'$ such that $\semantics{\emptyset[\widetilde{x}\mapsto{\widetilde{i}}]}{\widetilde{\s}}{\Sigma'}$ and $\Sigma'$ is consistent with $\Gamma'$.
Again by the lemma, there exists a unique $v'\in\V$ such that $\semantics{\Sigma'}{\e}{v'}$ and that $v'$ is consistent with $t'$.
Together with $t'\in\Int$, one can prove that $v'\in\Z$.
Finally we conclude from \sref{Prog} that $\semantics{\emptyset[\widetilde{x}\mapsto{\widetilde{i}}]}{\p}{v'}$.
\end{proof}

\subsection{Proof of Proposition~\ref{prop:invariant}}
\begin{proof}
The proof is by case analysis.
\begin{itemize}
    \item In the case of~\tref{Decl}, $\assignable(\t,\tt)$ and $x\not\in\dom(\Gamma)$. By definition of $\assignable$, $\tt\not\equiv\typeIterInt$. Therefore, $\domi{\Gamma'}=\domi{\Gamma[x\mapsto t]}=\domi{\Gamma}$ and $\restr{\Sigma'}{\Gamma'}\sim\restr{\Sigma[x\mapsto \smap{t}{}]}{\Gamma}\sim\restr{\Sigma}{\Gamma}$.
    \item In the case of~\tref{Asgmt}, $\Gamma'=\Gamma$, so $\domi{\Gamma'}=\domi{\Gamma}$ is trivial. We also have that $x\not\in\domi{\Gamma}$ since $\assignable(\t,\Gamma(x))$. Consequently, $\restr{\Sigma'}{\Gamma'}\sim\restr{\Sigma[x\mapsto v]}{\Gamma}\sim\restr{\Sigma}{\Gamma}$.
    \item In the case of~\tref{Seq}, the result follows by a simple induction on $m$.
    \item In the case of~\tref{Block}, \tref{Loop} and \tref{Cond}, we have $\Gamma'=\Gamma$.
    Therefore, $\domi{\Gamma'}=\domi{\Gamma}$ and $\restr{\Sigma'}{\Gamma'}\sim\restr{\Sigma'}{\Gamma}$.
    By the induction hypothesis, one can easily prove that $\restr{\Sigma'}{\Gamma}\sim\restr{\Sigma}{\Gamma}$.\qed
\end{itemize}
\end{proof}

\section{Missing Proofs of Section~\ref{sec:main_theorem}}\label{app-soundness}

\subsection{Cost Semantics of CorePolyC}
\label{app:cost_semantics}

We present the complete cost semantics of CorePolyC as follows.
\begin{figure}[t]
    \centering
    \[
\srule{x\in\dom(\Sigma)}{\cost{\Sigma}{x}{\Sigma(x)}{1}}{Var}\qquad
\srule{\c\in\C}{\cost{\Sigma}{\c}{\smap{\c}{}}{1}}{Const}\qquad
\srule{\cost{\Sigma}{\e}{v}{k}}{\cost{\Sigma}{\texttt{(e)}}{v}{k+1}}{Paren}\]
\[
\srule{\cost{\Sigma}{\widetilde{\e}}{\widetilde{v}}{\widetilde{k}}\quad \op\in\O}{\cost{\Sigma}{\op(\widetilde{\e})}{\smap{\op}{}(\widetilde{v})}{1+\sum_{i=1}^{\arity(\op)}k_i}}{Op}\qquad
\srule{x\in\dom(\Sigma)\quad \cost{\Sigma}{\e}{v}{k}}{\cost{\Sigma}{x\texttt{=}\e\texttt{;}}{\Sigma[x\mapsto{v}]}{k+1}}{Asgmt}
\]
\[
\srule{}{\cost{\Sigma}{\tt\;x\texttt{;}}{\Sigma[x\mapsto{\smap{\tt}{}}]}{1}}{Decl}\qquad
\srule{\cost{\Sigma}{\;\widetilde{\s}\;}{\Sigma'}{k}}{\cost{\Sigma}{\texttt{\{ }\widetilde{\s}\texttt{ \}}}{\Sigma'}{k+1}}{Block}
\]
\[
\srule{\Sigma_0:=\Sigma\qquad\cost{\Sigma_{i-1}}{\s_i}{\Sigma_{i}}{k_i} ,\quad\text{for}\;1\le i\le m.}{\cost{\Sigma}{\;\widetilde{\s}\;}{\Sigma_m}{\sum_{i=1}^{m}k_i}}{Seq}\]
\[
\srule{\cost{\Sigma}{\e}{b}{k}\enspace (\Sigma',k'):=\begin{cases}(\Sigma_1,k_1)\,,&\text{if}\;b\land \cost{\Sigma}{\s_1}{\Sigma_1}{k_1}, \\
(\Sigma_2,k_2)\,,&\text{if}\;\lnot b\land \cost{\Sigma}{\s_2}{\Sigma_2}{k_2}.\end{cases}
}
{\cost{\Sigma}{\texttt{\textbf{if}(}\e\texttt{) } \s_1 \texttt{ \textbf{else} } \s_2}{\Sigma'}{k+k'}}{Cond}
\]
\[
\srule{\cost{\Sigma}{\e}{i}{k}\quad\Sigma_0:=\Sigma\enspace\cost{\Sigma_{j}[x\mapsto j]}{\s}{\Sigma_{j+1}}{k_j},\;\text{for}\;\;0\le j< i.}{\cost{\Sigma}{\texttt{\textbf{for}(}x \texttt{<} \e \texttt{) } \s}{\Sigma_{i}}{k+\sum_{j=0}^{i-1}k_j}}{Loop}
\]\vspace{1mm}
\[
\srule{\cost{\Sigma}{\;\widetilde{\s}\;}{\Sigma'}{k_1}\quad\cost{\Sigma'}{\e}{i}{k_2}}{\cost{\Sigma}{\texttt{\textbf{int} \textbf{main}(}\widetilde{t\;x}\texttt{)\{ }\widetilde{\s}\texttt{ \textbf{return} }\e\texttt{;\}}}{i}{k_1+k_2}}{Prog}
\]
    \caption{The cost semantics of CorePolyC.}
    \label{fig:cost_semantics}
\end{figure}
The following example gives a straightforward illustration of these rules.
\begin{example}[Loop]
\label{exp:loop_cost}
Let $\s:=\;$\texttt{\textbf{for}(i<\textbf{size}(z)) x=x+x;} and $\Sigma = [\texttt{z}\mapsto 2^n,\texttt{x}=1]$.
We have $\cost{\Sigma}{\s}{\Sigma'}{4n+6}$.
{\fontsize{7pt}{6pt}\selectfont
$$
\inference
{
    {
    \inference
    {
        {
        \inference{\texttt{z}\in\dom(\Sigma)}
        {\cost{\Sigma}{\texttt{z}}{2^n}{1}}
        }
    }
    {
        \cost{\Sigma}{\texttt{\textbf{size}(z)}}{n+1}{2}
    }
    }
    \enspace
    \Sigma_0:=\Sigma
    \enspace
    {
    \inference
    {
        \texttt{x}\in\dom(\Sigma_j')
        \quad
        {
        \inference
        {
            {
            \inference
            {
                \texttt{x}\in\dom(\Sigma_j')
            }
            {
                \cost{\Sigma_j'}{\texttt{x}}{2^j}{1}
            }
            }
        }
        {
            \cost{\Sigma_j'}{\texttt{x+x}}{2^{j+1}}{3}
        }
        }
    }
    {
        \cost{\Sigma_j'}{\texttt{x=x+x}}{\Sigma_{j+1}}{4}
    }
    }
    , \text{ for }0\le j\le n.
}
{\cost{\Sigma}{\s}{\Sigma'\equiv\Sigma_{n+1}}{4n+6}}
$$
}
where $\Sigma_j'=\Sigma_j[\texttt{i}\mapsto j]$ and $\Sigma_{j+1}=\Sigma_j'[\texttt{x}\mapsto 2^{j+1}]$ for $0\le j\le n$.
\end{example}

\subsection{Proof of Lemma \ref{lem:time-size}}\label{App_proof_of_prop:time-size}

\begin{proof}
The proof is by contradiction.
For convenience, we define $\ic(w, \Sigma):=t$ such that $\cost{\Sigma}{w}{\cdot}{t}$ for $w\in\E\cup\S$.

\vspace*{1mm}
\noindent
(1) $\Rightarrow$ (2).\quad
Suppose that there exists a program $\p$ such that $\size{\smap{\p}{}}\neq O(T(n))$. By (1), $\p$ has an instruction count of $O(T(n))$.
W.l.o.g., we assume that $\p\equiv \texttt{\textbf{int main}(}\cdots\texttt{)\{ }\widetilde{\s}\;\texttt{ \textbf{return} }\e\texttt{ \}}$ and $\texttt{z},\texttt{o}\not\in\L(\p)$.
A well-typed program $\p'$ can be constructed from $\p$ as follows:
\begin{lstlisting}
int main(`$\cdots$`){
    `$\widetilde{\s}$`
    iint z; z=`$\e$`;
    int o;  o=1;
    for(i<size(z)) o=o+o;
    return o;
}
\end{lstlisting}
By \sref{Prog}, there exists an environment $\Sigma'$ such that $\semantics{\emptyset[\widetilde{x}\mapsto\widetilde{v}]}{\widetilde{\s}}{\Sigma'}$ and $\semantics{\Sigma'}{\e}{\smap{\p}{}(\widetilde{v})}$.
Therefore, by Example \ref{exp:loop_cost},
$$\ic(\texttt{\textbf{for}}\cdots,\Sigma'[\texttt{z}\mapsto \smap{\p}{}(\widetilde{v}),\,\texttt{o}\mapsto1])=4\size{\smap{\p}{}(\widetilde{v})}+2.\vspace{2mm}$$
And $\ic(\p', \widetilde{v})=\ic(\p,\widetilde{v})+\Theta(\size{\smap{\p}{}(\widetilde{v})})$.
Since $\ic(\p,\widetilde{v})=O(T(n)))$ and $\size{\smap{\p}{}}\neq O(T(n))$, we have that $\ic(\p',\widetilde{v})\neq O(T(n))$, i.e., $\p'$ does not have an instruction count of $O(T(n))$, which is a contradiction.

\vspace*{1mm}
\noindent
(2) $\Rightarrow$ (3).\quad Let $\p$ be a program such that $\size{\evalTree{\emptyset[\widetilde{x}\mapsto{\widetilde{v}}]}{\p}}\neq O(T(n))$.
Suppose that \texttt{o}$\not\in\L(\p)$. We construct a new well-typed program $\p'$ from $\p$ as follows:
\begin{itemize}
    \item[a)] Insert the declaration \texttt{\textbf{int} o;} in the beginning of $\p$. For each input $x_i$, where $1 \le i \le m$, append the following statement right after the declaration.
\begin{equation}
\texttt{\textbf{if}(}x_i\texttt{>o)\{o=}x_i\texttt{;\} \textbf{else}\{\} \textbf{if}(-}x_i\texttt{>o)\{o=-}x_i\texttt{;\} \textbf{else}\{\}}.
\end{equation}
    \item[b)] After each assignment in the form of \texttt{x=e;} where $\e$ is typed as an integer, add a new statement
\begin{equation}\label{2023-07-09}
\texttt{\textbf{if}(x>o)\{o=x;\} \textbf{else}\{\} \textbf{if}(-x>o)\{o=-x;\} \textbf{else}\{\}}.
\end{equation}
    Then enclose the assignment statement and~(\ref{2023-07-09})  with $\texttt{\{\}}$.
    \item[c)] Replace return statement \texttt{\textbf{return} e;} by
    \begin{equation*}
    \texttt{\textbf{if}(e>o)\{o=e;\} \textbf{else}\{\} \textbf{if}(-(e)>o)\{o=-(e);\} \textbf{else}\{\} \textbf{return} o;}.
    \end{equation*}
\end{itemize}
We will denote this transformation as $T_1$. For example, the program on the left in Fig.~\ref{fig:T1} is transformed into the program on the right.
\begin{figure}[t]
\begin{minipage}{0.45\linewidth}
\centering
\begin{lstlisting}[xleftmargin=0em,xrightmargin=1.5em]
int main(int x,int y){
    iint z;
    z=y;
    for(i<size(z)) x=x+x;
    return x+y;
}
\end{lstlisting}
\end{minipage}
\hspace{10mm}
\begin{minipage}{0.45\linewidth}
\centering
\begin{lstlisting}[xleftmargin=-1em,xrightmargin=0em]
int main(int x,int y){
    iint z;
    int o;
    {
        z=y;
        if(z>o){o=z;}else{}
        if(-z>o){o=-z;}else{}
    }
    for(i<size(z)){
        x=x+x;
        if(x>o){o=x;}else{}
        if(-x>o){o=-x;}else{}
    }
    if(x+y>o){o=x+y;}else{}
    if(-(x+y)>o){o=-(x+y);}else{}
    return o;
}
\end{lstlisting}
\end{minipage}
\caption{An example of transform $T_1$.}
\label{fig:T1}
\end{figure}
In the second step, we capture the longest value during the program's execution using \texttt{o}. Assume that $\p'\equiv \texttt{\textbf{int} \textbf{main}(}\dots\texttt{)\{ }\widetilde{\s}\;\texttt{ \textbf{return} o;\}}$.
By \sref{Prog} and \sref{Var}, there exists an environment $\Sigma'$ such that $\semantics{\emptyset[\widetilde{x}\mapsto\widetilde{v}]}{\s'}{\Sigma'}$ and $\semantics{\Sigma'}{\texttt{o}}{\Sigma'(\texttt{o})=\smap{\p'}{}{(\widetilde{v})}}$.

\begin{claim}
    $\size{\Sigma'(\texttt{o})}=\size{\evalTree{\emptyset[\widetilde{x}\mapsto{\widetilde{v}}]}{\p'}}\ge\size{\evalTree{\emptyset[\widetilde{x}\mapsto{\widetilde{v}}]}{\p}}$.
\end{claim}

To be more concise, we will provide proof of the claim later.
Consequently, $\size{\smap{\p'}{}{(\widetilde{v})}}=\size{\Sigma'(\texttt{o})}=\size{\evalTree{\emptyset[\;\widetilde{x}\mapsto{\widetilde{v}}\;]}{\p'}}\ge\size{\evalTree{\emptyset[\;\widetilde{x}\mapsto{\widetilde{v}}\;]}{\p}}\neq O(T(n))$, which contradicts to the fact that all CorePolyC programs can only produce outputs of size $O(T(n))$.

\vspace*{1mm}
\noindent
(3) $\Rightarrow$ (1).\quad
Let $\p$ be a program with $\ic(\p,\;\widetilde{v})\neq O(T(n))$.
Suppose \texttt{o}$\not\in\L(\p)$. We construct a new program $\p'$ as follows:
\begin{itemize}
    \item[a)] Insert the declaration \texttt{\textbf{int} o;} in the beginning and initialize \texttt{o} with $1$;
    \item[b)] Add \texttt{o=o+o;} after each declaration statement, assignment statement, if statement, and loop statement in $\p$, and combine the two statements into a block statement; if a statement is an empty block, we also add \texttt{o=o+o;} inside it.
\end{itemize}
This transform is denoted as $T_2$. For example, the program on the left in Fig.~\ref{fig:T2} is transformed into the program on the right.
\begin{figure}[t]
\begin{minipage}{0.45\linewidth}
\centering
\begin{lstlisting}[xleftmargin=1em,xrightmargin=0.5em]
int main(int x,int y){
    iint z;
    z=y;
    for(i<size(z)) x=x+x;
    return x+y;
}
\end{lstlisting}
\end{minipage}
\hspace{10mm}
\begin{minipage}{0.45\linewidth}
\centering
\begin{lstlisting}[xleftmargin=0.5em,xrightmargin=1em]
int main(int x,int y){
    int o;
    o=1;
    {iint z; o=o+o;}
    {z=y;    o=o+o;}
    for(i<size(z)){
        x=x+x;
        o=o+o;
    }
    return x+y;
}
\end{lstlisting}
\end{minipage}
\caption{An example of transform $T_2$.}
\label{fig:T2}
\end{figure}
In the second step, a variable $\texttt{o}$ is introduced whose length increases linearly with $\ic$. Note that the declarations and the assignment statements from the original program are transformed into compound statements.

\begin{claim}
    For any cost semantics rule instance $\cost{\Sigma}{T_2(\s)}{\Sigma'}{k}$ in derivation tree $\evalTree{\emptyset[\widetilde{x}\mapsto\widetilde{v}]}{\p'}$, where $\s$ is a statement in $\p$,  if $\texttt{o}\in\dom(\Sigma)$, then $k \le d \cdot (\size{\Sigma'(\texttt{o})}-\size{\Sigma(\texttt{o})})$; otherwise, $k \le d \cdot \size{\Sigma'(\texttt{o})}$, where $d$ is a constant depending only on $\s$.
\end{claim}

Assume that $\p\equiv \texttt{`\textbf{int main}(}\dots\texttt{)\{ }T_2(\widetilde{\s})\;\texttt{ \textbf{return} e;\}}$.
By \sref{Prog}, there exists an environment $\Sigma'$ such that $\semantics{\emptyset[\widetilde{x}\mapsto\widetilde{v}]}{T_2(\widetilde{\s})}{\Sigma'}$.
This rule instance is surely in the derivation tree $\evalTree{\emptyset[\widetilde{x}\mapsto\widetilde{v}]}{\p'}$.
Applying this claim, we obtain that $d\cdot\size{\Sigma'(\texttt{o})}\ge\ic(T_2(\widetilde{s}),\emptyset[\widetilde{x}\mapsto\widetilde{v}])$ for some $d$.
Since $\ic(\p,\widetilde{v}) \le \ic(\p',\widetilde{v}) \le d\cdot\size{\Sigma'(\texttt{o})} + O(1)$, we have $\size{\Sigma'(\texttt{o})} \neq O(T(n))$, which implies that $\size{\evalTree{\emptyset[\;\widetilde{x}\mapsto\widetilde{v}\;]}{\p'}}\ge\size{\Sigma'}\ge\size{\Sigma'(\texttt{o})}\neq O(T(n))$. But (3) states that $\p'$ can only generate values of size $O(T(n))$.
This is a contradiction.
\qed
\end{proof}

\subsubsection{Additional Proof}\label{App_Claims}
\begin{claim}
    $\size{\Sigma'(\texttt{o})}=\size{\evalTree{\emptyset[\widetilde{x}\mapsto{\widetilde{v}}]}{\p'}}\ge\size{\evalTree{\emptyset[\widetilde{x}\mapsto{\widetilde{v}}]}{\p}}$.
\end{claim}

\begin{proof}
It is sufficient to prove that for all rule instances of the form
$\semantics{\Sigma_1}{\s}{\Sigma_2}$ in $\evalTree{\emptyset[\widetilde{x}\mapsto{\widetilde{v}}]}{\p'}$, $\size{\Sigma_2}=\size{\Sigma_2(\texttt{o})}$ holds, which is done by structural induction.
We only need to consider~\sref{Decl} and \sref{Asgmt} because other cases do not directly alter $\Sigma$, and we can simply apply the induction hypothesis.
\begin{itemize}
    \item Case~\sref{Decl}. Since $\size{\smap{t}{}}=0$ for any $t\in\T$, $\size{\Sigma[x\mapsto\smap{t}{}]}=\max\{\size{\Sigma},0\}=\size{\Sigma}$.
    The result follows by applying the induction hypothesis.
    \item Case~\sref{Asgmt}.
    $\p$'s assignments always appear within a block alongside two conditional statements in new program $\p$.
    We only need to consider these fundamental blocks.
    Assume that a block contains $\widetilde{\s}$, where $\s_1 \equiv \texttt{x=$\e$;}$.
    Without loss of generality, let $\Sigma_0 = \Sigma, \semantics{\Sigma_{i-1}}{\s_i}{\Sigma_i}$ for $1\le i\le 3$. Then:
    \begin{itemize}
        \item if $\Sigma_1(x)>\Sigma_1(\texttt{o})$, then according to~\sref{Cond}, we have $\Sigma_3(\texttt{o})=\Sigma_2(\texttt{o})=\Sigma_1(x)$.
        \item if $-\Sigma_1(x)>\Sigma_1(\texttt{o})$, we also have $\Sigma_3(\texttt{o})=\Sigma_2(x)=\Sigma_1(x)$.
        \item if $\funAbs{\Sigma_1(x)}\le\Sigma_1(\texttt{o})$, then $\Sigma_3(x)=\Sigma_1(x)\le\Sigma_1(\texttt{o})=\Sigma_3(\texttt{o})$.
    \end{itemize}
    Therefore, $\size{\Sigma_3}=\max\{\size{\Sigma},\size{\Sigma_3(x)},\size{\Sigma_3(\texttt{o})}\}=\size{\Sigma_3(\texttt{o})}$.
\end{itemize}
Furthermore, before the return statement, there are two conditional statements.
Similar argument leads to the following conclusion: $\size{\Sigma'(\texttt{o})}=\size{\Sigma'}$.
Since the absolute value of \texttt{o} is always non-decreasing, we obtain that $\size{\Sigma'(\texttt{o})}=\size{\evalTree{\emptyset[\widetilde{x}\mapsto{\widetilde{v}}]}{\p'}}$.
Note that in case~\sref{Asgmt}, $\Sigma_3(x)=\Sigma(x)$ for all $x\not\equiv\texttt{o}$ and only statements related to the variable \texttt{o} are missing in $\p$, we can prove that $\size{\evalTree{\emptyset[\widetilde{x}\mapsto{\widetilde{v}}]}{\p'}}\ge\size{\evalTree{\emptyset[\widetilde{x}\mapsto{\widetilde{v}}]}{\p}}$ by induction.\qed
\end{proof}

\begin{claim}
    For any cost semantics rule instance $\cost{\Sigma}{T_2(\s)}{\Sigma'}{k}$ in derivation tree $\evalTree{\emptyset[\widetilde{x}\mapsto\widetilde{v}]}{\p'}$, where $\s$ is a statement in $\p$,  if $\texttt{o}\in\dom(\Sigma)$, then $k \le d \cdot (\size{\Sigma'(\texttt{o})}-\size{\Sigma(\texttt{o})})$; otherwise, $k \le d \cdot \size{\Sigma'(\texttt{o})}$, where $d$ is a constant depending only on $\s$.
\end{claim}

\begin{proof}
If \texttt{o} $\not\in\dom(\Sigma)$, we can assume that $\size{\Sigma(\texttt{o})} = 0$, thereby rendering the two cases equivalent.
By induction on the structure of $\evalTree{\emptyset[\widetilde{x}\mapsto\widetilde{v}]}{\p'}$:
\begin{itemize}
    \item If $\s$ is a declaration statement or an assignment statement, $T_2(\s)$ is a compound statement. We have the base cases where $\size{\Sigma'(\texttt{o})}-\size{\Sigma(\texttt{o})}$ and $k$ are both constants independent of $\Sigma$.
    \item If $\s$ is an iteration statement,
    \[
    \srule{\cost{\Sigma}{\e}{i}{k_\e}\quad\Sigma_0:=\Sigma\enspace\cost{\Sigma_{j}[x\mapsto j]}{T_2(\s')}{\Sigma_{j+1}}{k_j},\text{for}\;\;0\le j< i.}{\cost{\Sigma}{\texttt{\textbf{for}(}x \texttt{<} \e \texttt{) } T_2(\s')}{\Sigma_{i}}{k_\e+\sum_{j=0}^{i-1}k_j}}{Loop}
    \]
    And $k = k_\e+\sum_{j=0}^{i-1}k_j + 5$, where the addend $5$ comes from the appended \texttt{o=o+o;} statement and the enclosing block.
    By applying the induction hypothesis, there exists a constant $d'$ such that
    \begin{equation}
        k_j\le d' \cdot (\size{\Sigma_{j+1}(\texttt{o})}-\size{\Sigma_{j}(\texttt{o})}),\text{ for }0\le j<i.
    \end{equation}
    We take $d := \max (k_\e + 5, d')$, which depends only on $s$. 
    Then it holds that
    $$k = k_\e + \sum_{j=0}^{i-1}k_j + 5 \le d \cdot (1 + \size{\Sigma_{i}(\texttt{o})}-\size{\Sigma_{0}(\texttt{o})}) = d \cdot (\size{\Sigma'(\texttt{o})}-\size{\Sigma(\texttt{o})}).$$
    The last equality holds as we add an extra \texttt{o=o+o;} right after the whole loop.
\end{itemize}
The other cases can be proved in the same manner.
\qed
\end{proof}

\subsection{Proof of Lemma~\ref{lem:expr-size}}\label{App_proof_of_lemma:expr-size}

\begin{proof}
Both proofs are by induction on the structure of $\evalTree{\Sigma}{\e}$.
\begin{itemize}
    \item Case \sref{Var}, $\e\equiv x$. Then $v=\Sigma(x)$ implies $\size{v}=\size{\Sigma(x)}\le\size{\Sigma}$.
    \item Case \sref{Const}, $\e\equiv \c$. Then $v=\smap{\c}{}$ implies
    $\size{v}\le\size{\Sigma}+k$, where $k=\size{\smap{\c}{}}$ is a constant.
        \item Case \sref{Op}, $\e\equiv \op(\widetilde{\e})$. There exists $\widetilde{v}$ such that $\semantics{\Sigma}{\widetilde{\e}}{\widetilde{v}}$, $\semantics{\Sigma}{\e_2}{v_2}$ and $v=\smap{\texttt{op}}{}(\widetilde{v})$. By the induction hypothesis, there exist constants $k_i, 1\le i\le m:=\arity(\op)$ such that $\size{v_i}\le\size{\Sigma}+k_i$. By Lemma \ref{lem:op-size}, there exists a constant $k_0$ such that
    \begin{equation}
        \size{v}=\size{\smap{\texttt{op}}{}(\widetilde{v})}\le\max_{1\le i\le m}\size{v_i}+k_0\le\size{\Sigma}+\max_{1\le i\le m} k_i+k_0.
    \end{equation}
    The proof is completed by setting $k=\max_{i=1}^{m}k_i+k_0$.
    \item Case \sref{Paren} is trivial.
\end{itemize}

If there exists a $\Gamma$ consistent with $\Sigma$ such that $\typing{\Gamma}{\f}{\e}{\typeIterInt}$ holds, then consider the following cases.
    \begin{itemize}
        \item Case~\sref{Var}, $\e\equiv x$.
        Since $\typing{\Gamma}{\f}{x}{\typeIterInt}$, we have $x\in\domi{\Gamma}$, which implies $\size{v}=\size{\Sigma(x)}\le\size{\restr{\Sigma}{\Gamma}}$.
    \item Case \sref{Const}, $\e\equiv \c$.
    Then $v=\smap{\c}{}$ implies
    $\size{v}\le\size{\restr{\Sigma}{\Gamma}}+\size{\smap{\c}{}}$.
    \item Case \sref{Op}, $\e\equiv \op(\widetilde{\e})$.
    For $1\le i\le m:=\arity(\op)$, there exists $v_i,t_i$ such that $\semantics{\Sigma}{\e_i}{v_i}$, $\typing{\Gamma}{\f}{\e_i}{t_i}$, and $v=\smap{\op}{}(\widetilde{v})$, $\tmap{\op}{}(\widetilde{t})\equiv\typeIterInt$.
    Therefore, it must be the case that $t_1=t_2\equiv\typeIterInt$.
    By the induction hypothesis, there exist constants $k_i$ such that $\size{v_i}\le\size{\restr{\Sigma}{\Gamma}}+k_i$.
    By Lemma \ref{lem:op-size}, there exists a constant $k_0$ rendering true the following equation.
    \begin{equation}
        \size{v}=\size{\smap{\op}{}(\widetilde{v})}\le\max_{1\le i\le m} \size{v_i}+k_0\le\size{\restr{\Sigma}{\Gamma}}+\max_{1\le i\le m} k_i+k_0.
    \end{equation}
    The proof is completed by setting $k=\max_{1\le i\le m}k_i+k_0$.
    \item Case \sref{Paren} is trivial.
    \end{itemize}
We are done.\qed
\end{proof}

\subsection{Proof of Lemma~\ref{lem:stmt-invariant}}\label{App_proof_of_lemma:stmt-invariant}

\begin{proof}
The proof is by induction.
We first focus on the case that $\l=\t$.
\begin{itemize}
    \item In case \tref{Asgmt}, $\typing{\Gamma}{\t}{\e}{\tt}$. By Lemma \ref{lem:expr-size} (1), there exists $v$ such that $\semantics{\Sigma}{\e}{v}$ and $\size{v}\le\size{\Sigma}+k$.
    Note that $\size{\Sigma'}=\size{\Sigma[x\mapsto v]} \le \max\{\size{\Sigma},\size{v}\} \le \size{\Sigma} + k$.
    We have $\size{\Sigma'}\dot{-}\size{\Sigma}\le k \le k \cdot \size{\restr{\Sigma}{\Gamma}}^0$.
    \item In case~\tref{Decl}, $\size{\Sigma'}\dot{-}\size{\Sigma}=\max\{\Sigma,\smap{t}{}\}-\size{\Sigma}=0$.
    \item In case \tref{Loop}, there exists $i$ such that $\semantics{\Sigma}{\e}{i}$.
    By the syntactic restriction, $\e=\texttt{\textbf{size}($\e'$)}$ and $\typing{\Gamma}{\t}{\e'}{\typeIterInt}$.
        By Lemma \ref{lem:expr-size} (2), there exists $v$ such that $\semantics{\Sigma}{\e'}{v}$ and $\size{v}\le\size{\restr{\Sigma}{\Gamma}}+k$, where $k$ is a constant. Therefore, $i=\smap{\opSize}{}(v)=\size{v}\le\size{\restr{\Sigma}{\Gamma}}+k$.
    \begin{itemize}
        \item $i\le0$.
        As $\Sigma'=\Sigma$, the result is trivial.
        \item $i>0$.
        By the induction hypothesis, we have that 
        \begin{equation}
            \size{\Sigma_{j+1}}\dot{-}\size{\Sigma_j}\le c \cdot \size{\restr{\Sigma_j[x\mapsto j]}{\Gamma[x\mapsto t]}}^{d},
        \end{equation}
        for $0\le j<i$, and
        \begin{equation}
            \size{\Sigma'}\dot{-}\size{\Sigma} \le c\cdot\sum_{j=0}^{i-1}\size{\restr{\Sigma_j[x\mapsto j]}{\Gamma[x\mapsto t]}}^{d}+\size{\Sigma_0[x\mapsto0]}-\size{\Sigma_0}.
        \end{equation}
        Since Proposition \ref{prop:invariant} shows that $\restr{\Sigma_j}{\Gamma[x\mapsto t]}\sim\restr{\Sigma_{j-1}[x\mapsto j-1]}{\Gamma[x\mapsto t]}$ for $1\le j<i$, one has that
        \begin{equation}
            \begin{aligned}
                \size{\restr{\Sigma_j[x\mapsto j]}{\Gamma[x\mapsto t]}} &= \max\{\size{j}, \size{\restr{\Sigma_j}{\Gamma[x\mapsto t]}}\} \\ &=\max\{\size{j},\size{\restr{\Sigma_{j-1}[x\mapsto j-1]}{\Gamma[x\mapsto t]}}\}\\
                &=\cdots\\
                &=\max\{\size{j},\size{\restr{\Sigma_0[x\mapsto0]}{\Gamma[x\mapsto t]}}\}\\
                &\le\max\{\size{i},\size{\restr{\Sigma}{\Gamma}}\}.
            \end{aligned}
        \end{equation}
        Then
        \begin{equation}
            \size{\Sigma'}\dot{-}\size{\Sigma} \le c \cdot (\size{\restr{\Sigma}{\Gamma}}+k)^{d+1}\le c \cdot k^{d+1} \cdot \size{\restr{\Sigma}{\Gamma}}^{d+1}.
        \end{equation}
    \end{itemize}
    \item In \tref{Seq}, we have $\size{\Sigma_i}\dot{-}\size{\Sigma_{i-1}}\le c_i \cdot \size{\restr{\Sigma_{i-1}}{\Gamma_{i-1}}}^{d_i}$ for $1\le i\le m$.
    Proposition \ref{prop:invariant} implies that $\restr{\Sigma_{i}}{\Gamma_i}=\restr{\Sigma}{\Gamma}$ for all $i$.
    Let $d:=\max_{1\le i\le m}d_i$ and $c := \max_{1 \le i \le m} c_i$.
    Then
    \begin{equation}
        \size{\Sigma'}\dot{-}\size{\Sigma} =\size{\Sigma_m}\dot{-}\size{\Sigma_{0}}\le c \cdot \sum_{i=1}^{m} \size{\restr{\Sigma_{i-1}}{\Gamma_{i-1}}}^{d}\le c \cdot m \cdot \size{\restr{\Sigma}{\Gamma}}^{d},
    \end{equation}
    since $m$ is a constant independent of $\Sigma$.
    \item Case~\tref{cond} and \tref{Block} are trivial.
\end{itemize}
If $\l=\f$, the proof resembles the above analysis, except for \sref{Seq}.
In this case, we can only obtain that $\size{\Sigma_i}\le\size{\Sigma_{i-1}}+O(\size{\Sigma_{i-1}}^{d_i^{k_i}})=O(\size{\Sigma_{i-1}}^{d_i^{k_i}})$.
Let $d$ denote $\max_{1\le i\le m}d_i$ and $k=\sum_{1\le i\le m}k_i$.
We finally have that $\size{\Sigma'}\dot{-}\size{\Sigma}=\size{\Sigma_m}\dot{-}\size{\Sigma_0}=O(\size{\Sigma}^{d^{k}})$.
\qed
\end{proof}

\subsection{Proof of Lemma~\ref{lemma:poly-con}}\label{App_proof_of_lemma:poly-con}

\begin{proof}
By construction.
Consider the following program:
\begin{lstlisting}[xleftmargin=3em,xrightmargin=3em]
int main(int x){
    iint z; z=x;
    int o;  o=1;
    for(i<size(z)){
        `$\cdots$`
        // Nested loops of depth `$\color{gray}d+1$`.
        for(j<size(z)){
            for(k<`$d$`){
                o=o+o;
            }
        }
    }
    return o/2;
}
\end{lstlisting}
There are $d+1$ nested loops in the program, where the outer $d$ loops have the upper bound of \texttt{\textbf{size}(z)}, and the innermost loop has the upper bound of $d$.
Through simple calculation, we know that $\smap{\p}{}(v)=2^{d\size{v}^d-1}$, and hence $\size{\smap{\p}{}(v)}=d\size{v}^d$.
It is trivial that $\size{p}=O(d\log d)$.
\end{proof}

\section{Missing proofs of Section \ref{sec:polyc}}\label{app-polyc}
\subsection{Proof of Lemma~\ref{lem:polyc-expr-size} and Lemma~\ref{lem:stmt-invariant} for PolyC}
\label{app:proof_polyc_size}
We prove these two lemmas in a mutually inductive manner.
\begin{proof}
By induction on the depth of function calls ${D}$.
More formally, we can define $D$ as the maximum number of the \sref{App} rule instances along any path from the root node to a leaf node in the corresponding evaluation derivation.

In the case of $D=0$, when expressions or statements do not contain any function calls, it aligns with the scenarios covered by the proofs of Lemma \ref{lem:expr-size} and Lemma \ref{lem:stmt-invariant}, respectively.

Assuming that the conclusions to be proven for any expression or statement with $D=i\geq 0$ have already been established, let us now consider the case of $D=i+1$.
Firstly, the conclusion holds for an expression $\e$ with $D$, which satisfies $\semantics{\Sigma}{\e}{v}$ and $\typing{\Gamma}{\f}{\e}{t}$.
Consider the structure of $\evalTree{\Sigma}{\e}$.
\begin{itemize}
    \item Case \sref{App}. In this case, assume $\Sigma(\texttt{fun})=(\widetilde{x},\widetilde{\s},\e')$. Then, we have $\semantics{\Sigma}{\widetilde{\e}}{\widetilde{v}}$, $\semantics{\emptyset[\widetilde{x}\mapsto\widetilde{v}]}{\widetilde{s}}{\Sigma'}$ and $\semantics{\Sigma'}{\e'}{v}$.
    Note that $\e$ itself includes one function call, so for $\widetilde{\e}$ or the $\widetilde{\s},\,\e'$ within the function body, the number of occurrences of \sref{App} along any path in their corresponding evaluation derivation is at least reduced by one.
    Therefore, they correspond to the case where $D\le i$. We can then apply the induction hypothesis to obtain:
    \begin{itemize}
        \item  If $\widetilde{v}=(v_1,\cdots,v_m)$, then for any $1\leq j\leq m$, $\size{v_j}\le\size{\Sigma}+O(\size{\restr{\Sigma}{\Gamma}}^{d_j})$ for some constant $d_j$. In particular, if $\e_j$ is an iterable expression, then $\size{v_j}=O(\size{\restr{\Sigma}{\Gamma}}^{d_j})$.
        Therefore, $\size{\emptyset[\widetilde{x}\mapsto\widetilde{v}]}\dot{-}\size{\Sigma}=O(\size{\restr{\Sigma}{\Gamma}}^{d'})$, where $d':=\max_{1\le j\le m} d_j$.
        \item By \tref{Fun}, there exists $\Gamma'$ such that $\typing{\emptyset[\;\widetilde{x}\mapsto\widetilde{t}\;]}{\t}{\widetilde{\s}}{\Gamma'}$ and $\typing{\Gamma'}{\t}{\e'}{t'}$.
        Applying the induction hypothesis again, we obtain
        \begin{equation}
        \size{\Sigma'}\dot{-}\size{\emptyset[\widetilde{x}\mapsto\widetilde{v}]}=O(\size{\restr{\emptyset[\widetilde{x}\mapsto\widetilde{v}]}{\emptyset[\widetilde{x}\mapsto\widetilde{t}]}}^{d''}),
        \end{equation}
        for some constant $d''$.
        \tref{App} ensures that $\widetilde{\e}$ and $\widetilde{t}$ are consistent, so $\size{\restr{\emptyset[\widetilde{x}\mapsto\widetilde{v}]}{\emptyset[\widetilde{x}\mapsto\widetilde{t}]}}=O(\size{\restr{\Sigma}{\Gamma}}^{d'})$, since only the values of iterable parameters remain after the restriction. Now, we obtain that
        \begin{equation}
            \size{\Sigma'}\dot{-}\size{\Sigma}=O(\size{\restr{\Sigma}{\Gamma}}^{\max\{d',d''\}}).
        \end{equation}
        \item In the end, we have $\size{v}\le\size{\Sigma'}+O(\size{\restr{\size{\Sigma'}}{\Gamma'}}^{d'''})$ for some $d'''$ by the induction hypothesis.
        By Proposition \ref{prop:invariant}, $\size{\restr{\Sigma'}{\Gamma'}}=\size{\restr{\emptyset[\widetilde{x}\mapsto\widetilde{v}]}{\emptyset[\widetilde{x}\mapsto\widetilde{t}]}}$.
        Hence, $\size{v}\le\size{\Sigma}+O(\size{\restr{\Sigma}{\Gamma}}^{d})$, where $d:=\max\{d',d'',d'''\}$ is a constant.
    \end{itemize}
    \item For all other cases, it suffices to observe that for any $\op\in\O_{+}$, if $\typing{\Gamma}{\f}{\op(\widetilde{x})}{t}$ and $\semantics{\Sigma}{\op(\widetilde{x})}{v}$, then $\size{v}\dot{-}\size{\Sigma}=O(\mathrm{poly}(\size{\restr{\Sigma}{\Gamma}}))$.
\end{itemize}

Concerning the special case where $\e$ is an iterable expression, it is sufficient to note that in the above proof of Case~\sref{App}, $\typing{\Gamma'}{\t}{\e'}{t'\equiv{\typeIterInt}}$ or \texttt{\textbf{istring}}.
Therefore, the third step should yield a stronger conclusion $\size{v}=O(\size{\restr{\size{\Sigma'}}{\Gamma'}}^{d})$.
Consequently, $\size{v}=O(\size{\restr{\Sigma}{\Gamma}}^{d})$.
The proofs for the other cases follow a similar pattern to those of Lemma \ref{lem:expr-size}.

Secondly, we further demonstrate that the corresponding conclusions hold for statements with $D=i+1$.
In case \tref{Asgmt}, the statement takes the form \texttt{$x$=$\e$;}, and $\typing{\Gamma}{\t}{\e}{t}$. For $\e$, it corresponds to the case of $D\leq i+1$. By applying the conclusions proved above, we have $\size{v}\dot{-}\size{\Sigma}=O(\size{\restr{\Sigma}{\Gamma}}^{d})$. Ultimately, $\size{\Sigma'}\dot{-}\size{\Sigma}\leq\size{v}\dot{-}\size{\Sigma}=O(\size{\restr{\Sigma}{\Gamma}}^{d})$.
The proofs for the remaining cases are similar to the proof of Lemma \ref{lem:stmt-invariant}.

Now, we have demonstrated that the conclusions hold for expressions and statements with any depth of function calls $D$ that satisfy the specified conditions. This completes the proofs of Lemma \ref{lem:polyc-expr-size} and Lemma \ref{lem:stmt-invariant} (PolyC version).
\qed
\end{proof}

\subsection{Proof of Proposition~\ref{prop:poly-encode}}
\begin{proof}
Given a program $\p$ without input, the syntax parsing step can be finished within polynomial time of $\size{\p}$.
However, as shown in Lemma~\ref{lem:stmt-invariant}, the instruction count of $\p$ is bound by $O(\size{\Sigma}^{d^{k}})$, where both $d$ and $k$ are no greater than the size of the program.
Consequently, it is enough to solve $\I_0(\p)$ in $2^{O(\size{p}^{\size{\p}})}$-time using a Turing machine.

For any $L \in 2$-$\mathbf{EXP}$, assume that there is a Turing machine $M$ deciding $L$ in $2^{2^{\mathrm{poly}(n)}}$-time, where $n$ is the size of the input $x$.
Modify $\p$ in the proof of Lemma~\ref{lemma:poly-con} by setting $d=2$ and repeating the loop part $n^{k}$ times. It is clear that $\size{\smap{p}{}(2)}=\Omega(2^{2^{n^{k}}})$.
This shows that we can construct a variable of double-exponential length as a clock.
Then, using the method outlined in Lemma~\ref{lem:simulate}, we can construct a $\mathrm{poly}(n)$-size program $\p'$ to simulate the execution of $M$ and encode the input $x$ as a constant in $\p'$.
Thus, we have constructed a polynomial-time reduction from $L$ to $\I_0$ with $x\mapsto\p'$.
We are done.
\qed
\end{proof}

\subsection{Proof of Proposition~\ref{prop:unfold}}
\label{app:proof_unfold}
\begin{proof}
We say that a sequence of statements $\widetilde{\s}$ is simple if it consists of two parts: i) some declaration statements, and ii) a loop \texttt{\textbf{for}(i<\textbf{size}(y)) $\s'$}, where $\s'$ does not contain any loops or function calls.
We now transform the statements first.

\begin{claim}
    Let $\widetilde{\s}$ be a sequence of statements that does not contain any function calls, and \texttt{y} be a fresh variable
    There exists a bound $t_0(n)=n^{\size{\widetilde{\s}}^{O(\size{\widetilde{\s}})}}$ and a $O(\mathrm{poly}(\size{\widetilde{\s}}))$-size simple statement $\widetilde{\s}_0$ such that
    \begin{enumerate}
        \item the loop bound in $\widetilde{\s}_0$ is \texttt{\textbf{size}(y)}, and
        \item for any store $\Sigma$ such that $\semantics{\Sigma}{\widetilde{\s}}{\Sigma'}$ for some $\Sigma'$, and $|t|\ge t_0(\size{\Sigma})$, one has that $\semantics{\Sigma[\texttt{y}\mapsto t]}{\widetilde{\s}_0}{\Sigma'[\texttt{y}\mapsto t]}$.
    \end{enumerate}
    The notation $\size{\widetilde{\s}}:=\sum_{i=1}^m\size{\s_m}$, i.e., the sum of the bit-size of all statement in $\widetilde{\s}$.
\end{claim}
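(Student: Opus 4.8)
The plan is to realise $\widetilde{\s}_0$ as a single-loop interpreter that executes $\widetilde{\s}$ one atomic step per iteration, in the spirit of the Turing-machine simulation of Lemma~\ref{lem:simulate}, whose output is already a simple program (declarations, one loop \texttt{for(i<size(cnt))} with a loop-free body, and a return). First I would linearise the structured statement $\widetilde{\s}$ into a flat list of atomic instructions indexed by a program counter \texttt{pc}: each assignment and each declaration becomes one instruction that updates the relevant variable and advances \texttt{pc}; each conditional becomes a test that evaluates its guard and sets \texttt{pc} to one of two successors; and each source loop \texttt{for(x<size(e)) body} is compiled to the standard init/test/increment skeleton, namely an instruction that resets the source counter \texttt{x} and records the trip bound \texttt{b}, a test \texttt{if(x<b)} jumping into the body or past the loop, and an increment at the end of the body jumping back to the test. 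All control flow is thereby carried by \texttt{pc}.

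The single loop of $\widetilde{\s}_0$ then has the shape \texttt{for(i<size(y))\{ bool done; $\dots$ \}}, where a fresh boolean \texttt{done} is reset at the top of each iteration and the body is a cascade of guarded blocks, one per program-counter value $k$, of the form \texttt{if(!done \&\& pc==$k$)\{ $\langle$step $k\rangle$; done=true; \} else \{\}}; this mirrors the role of \texttt{flag} in Lemma~\ref{lem:simulate}. Hence each iteration performs exactly one atomic step, and once \texttt{pc} reaches the terminal value every further iteration is inert, which is precisely what underwrites the ``for all $t\ge t_0$'' robustness demanded in clause~2.

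The main obstacle is the treatment of iterable variables: Proposition~\ref{prop:invariant} forbids declaring or assigning them inside a loop, yet the entire simulation now lives inside one loop. I would resolve this by re-typing every iterable variable of $\widetilde{\s}$ as an ordinary $\typeInt$ variable in $\widetilde{\s}_0$, hoisting its declaration into the declaration prefix and simulating its re-initialisations and top-level assignments as $\typeInt$ assignments inside the loop body, which \tref{Asgmt} permits. This is semantically sound because store environments record only values and the equivalence $\sim$ of Proposition~\ref{prop:invariant} is value-based, so the final store of $\widetilde{\s}_0$ agrees with $\Sigma'$ on all original variables (up to the auxiliary bookkeeping) and leaves $y$ untouched; the only surviving iterable variable is the fresh clock $y$, which supplies the bound \texttt{size(y)}. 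The price is that the original bounds \texttt{size(e)} can no longer be obtained from the $\opSize$ operator, whose operand must be iterable. I would therefore precompile each occurrence of \texttt{size} into a short bit-counting routine expressed as a handful of additional program-counter states folded into the step machine, so that $\opSize$ is applied to nothing but $y$.

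It remains to check the three quantitative requirements. For size, linearisation is linear in $\size{\widetilde{\s}}$, each \texttt{size}-routine has constant size and there are at most $\size{\widetilde{\s}}$ of them, and the dispatch cascade carries one branch per \texttt{pc} value, giving $\size{\widetilde{\s}_0}=O(\mathrm{poly}(\size{\widetilde{\s}}))$ with loop bound exactly \texttt{size(y)}. For the iteration count, the cost semantics together with the growth estimates behind Lemma~\ref{lem:stmt-invariant} bound the instruction count of $\widetilde{\s}$ on a store of size $n$ by $O(n^{d^{k}})$ with $d,k\le\size{\widetilde{\s}}$, hence by $O(n^{\size{\widetilde{\s}}^{\size{\widetilde{\s}}}})$; since each simulated step plus its polynomial bit-counting overhead costs one iteration, it suffices that the loop run $t_0(n)=O(n^{\size{\widetilde{\s}}^{\size{\widetilde{\s}}}})$ times, i.e. that $y$ be taken large enough, which is the claimed bound. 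Correctness follows by induction on the length of the small-step execution of $\widetilde{\s}$, showing that the \texttt{pc}-machine reproduces each configuration faithfully and that the inert tail then yields clause~2 for every $t\ge t_0$. I expect the faithful-simulation invariant --- the exact correspondence between configurations of the structured execution of $\widetilde{\s}$ and the $(\texttt{pc},\mathrm{store})$ states of the interpreter, especially at nested-loop entry and exit --- to be the most delicate part of the write-up.
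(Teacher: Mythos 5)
Your proposal is correct in substance but takes a genuinely different route from the paper. The paper proceeds by structural induction on $\widetilde{\s}$, fusing already-simple forms: sequential composition is handled by splitting the iteration budget in two (the merged body dispatches via \texttt{if(i<size(y)/2)} to the first or second phase), and a source loop \texttt{for(j<size(z)) s'} is handled by slice bookkeeping --- auxiliary counters \texttt{i}, \texttt{j} carve the \texttt{size(y)} iterations into \texttt{size(z)} slices of \texttt{size(y)/size(z)} iterations each --- with the bound composed recursively as $t_0(n)=p(n)\,t_0'(p(n))$, where $p$ is the value-growth polynomial supplied by Lemma~\ref{lem:stmt-invariant}. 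You instead build a single flat step-interpreter driven by a program counter, executing one atomic source step per iteration with an inertness flag, in the style of Lemma~\ref{lem:simulate}. Two ideas in your write-up have no counterpart in the paper and are genuinely necessary on your route: retyping the source's iterable variables as $\typeInt$ (their top-level assignments now occur inside the one loop, where \tref{Asgmt} forbids $\typeIterInt$ targets) and replacing each occurrence of \texttt{size(e)} by a bit-counting gadget folded into the \texttt{pc} machine (the paper's compositional construction keeps loop bounds as genuine \texttt{size} expressions over iterable, never-assigned variables, so it never faces this problem). What each approach buys: your interpreter has a uniform, cleaner correctness invariant, and its inert tail yields the ``for all $t\ge t_0$'' robustness for free, whereas the paper's merged constructions must silently cope with subtleties such as a \texttt{break} from an embedded phase escaping the merged loop; conversely, the paper avoids your typing gadgets entirely. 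One quantitative caution, which affects both proofs equally: the claim as written requires the \emph{value} $t\ge t_0(\size{\Sigma})$, yet a simple statement iterates only $\size{t}$ (the bit-size of $t$) times, so a polynomial $t_0$ is only coherent if read as bounding the required number of \emph{iterations}, i.e., the required bit-size of \texttt{y}; your phrase ``it suffices that the loop run $t_0(n)$ times'' matches this intended reading, and under it your count --- polynomially many source steps by the growth estimates of Lemma~\ref{lem:stmt-invariant}, plus polynomial bit-counting overhead --- stays within $O(n^{\size{\widetilde{\s}}^{\size{\widetilde{\s}}}})$. Finally, as with the paper's own construction, your final store agrees with $\Sigma'[\texttt{y}\mapsto t]$ only up to auxiliary bookkeeping variables (\texttt{pc}, \texttt{done}, recorded trip bounds, loop counters), a defect shared by the paper's proof and which you correctly flag.
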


\begin{proof}
    Assume that $\widetilde{\s}=(\s_1,\dots,\s_m)$.
    We prove by induction on $m\ge 1$.
    If $m=1$, we prove by induction on the structure of statements.
    \begin{enumerate}
        \item Case \texttt{x=$\e$;}.
        Let $\widetilde{\s}_0:=\texttt{\textbf{for}(i<\textbf{size}(y)) \textbf{if}(i<1) x=$\e$; \textbf{else} \textbf{break};} $ and $t_0:=1$.
        Since $\e$ does not invoke any function calls, $\widetilde{\s}_0$ and $t_0$ satisfy all the conditions.
        \item Case $\texttt{\textbf{for}(j<\textbf{size}(z)) $\s'$}$.
        Applying the induction hypothesis, we can obtain $\widetilde{\s}_0'$ and $t_0'$.
        Assume that \begin{equation}
        \widetilde{\s}_{0}'\equiv \widetilde{\s}_{1}\texttt{ \textbf{for}(i<\textbf{size}(y)) }\s_{2},
    \end{equation}
    where $\widetilde{\s}_{1}$ corresponds to the declaration stage.  
    We can always rename variables in a way that \texttt{i} and \texttt{j} are two fresh variables.
    Let $\widetilde{\s}_0$ be the following statements.
\begin{lstlisting}
    `$\widetilde{\s}_1$`
    int i, j;
    for(k<size(y)) {
        if(j<size(z){
            `$\s_2$`
            i+=1;
            if(i>=size(y)/size(z)) {i=0; j+=1;}
        } else break;
    }
\end{lstlisting}
        According to Lemma~\ref{lem:stmt-invariant}, $\widetilde{\s}$ can only produce value bounded by a polynomial $p=O(n^{\size{\widetilde{\s}}^{\size{\widetilde{\s}}}})$.
        Let $t_0(n):=p(n)t'_0(p(n))$ and we are done.
        \item Other cases are trivial.
    \end{enumerate}

    If $\widetilde{\s}=(\widetilde{\s}_{\le m},\s_{m+1})$, then we first apply the induction hypothesis to obtain the corresponding simple statements and functions $\widetilde{\s}_{\le m, 0}$, $t_{\le m}$, $\widetilde{s}_{m+1,0}$ and $t_{m+1}$.
    Assume that
    \begin{equation}
        \widetilde{\s}_{\le m, 0}\equiv\widetilde{\s}_{\le m, 1}\texttt{ \textbf{for}(i<\textbf{size}(y)) }\s_{\le m, 2},
    \end{equation}
    \begin{equation}
        \widetilde{\s}_{m+1, 0}\equiv\widetilde{\s}_{m+1, 1}\texttt{ \textbf{for}(i<\textbf{size}(y)) }\s_{m+1, 2},
    \end{equation}
    where $\widetilde{\s}_{\cdot,1}$ corresponds to the declaration stage.
    Let $\widetilde{\s}_0$ be the following sequence.
    \begin{equation}
        \widetilde{\s}_{\le m, 1}\texttt{; }\widetilde{\s}_{m+1, 1}\texttt{ \textbf{for}(i<\textbf{size}(y)) \textbf{if}(i<\textbf{size}(y)/2) $\s_{\le m,2}$ \textbf{else} $\s_{m+1,2}$}.
    \end{equation}
    According to Lemma~\ref{lem:stmt-invariant}, $\widetilde{\s}_{\le m}$ can only produce value bounded by a polynomial $p=O(n^{\size{\widetilde{\s}_{\le m}}^{\size{\widetilde{\s}_{\le m}}}})$.
    Let $t_0:=2\max(t_{\le m}, t_{m+1} \circ \p)$.
    It is easy to see that $\widetilde{s}_0$ and $t_0$ satisfy the three conditions.
    \qed
\end{proof}

Then we proceed to prove the origin proposition.
Without loss of generality, we assume that $\p$ does not invoke any function calls since all the function calls can be rewritten in an inline style.
Assume that the body of the main function consists of statements $\widetilde{\s}$ and \texttt{\textbf{return} $\e$}.
By our claim, we can obtain a polynomial-size simple statement $\s_0$ and a bound $t_0$ fulfilling the above conditions.
Let $\p_0:=\tt\;\texttt{main}(\tt_1\;x_1,\dots,\tt_m\;x_m,\texttt{iint}\;y)\texttt{\{$\widetilde{\s}_0$ \textbf{return} $\e$\}}$, where $\tt,\tt_1,\dots,\tt_m$ is the same as $\p$.
Clearly, $\p_0$ is simple.
It suffices to verify that $\p_0$ satisfies the second condition.
For any inputs $\widetilde{v}\in \V^m$ and $|t|\ge t_0(\size{\widetilde{v}})$, 
there exists some $\Sigma'$ such that $\semantics{\emptyset[\widetilde{x}\mapsto\widetilde{v}]}{\widetilde{s}}{\Sigma'}$, and $\semantics{\Sigma'}{\e}{\tmap{\p}{}(\widetilde{v})}$.
Since $|t|\ge t_0(\size{\widetilde{v}})=t_0(\emptyset[\widetilde{x}\mapsto\widetilde{v}])$, one has that $\semantics{\emptyset[\widetilde{x}\mapsto\widetilde{v}][\texttt{y}\mapsto t]}{\widetilde{\s}_0}{\Sigma'[\texttt{y}\mapsto t]}$.
Note that \texttt{y} is fresh.
Hence, $\semantics{\Sigma'[\texttt{y}\mapsto t]}{\e}{\tmap{\p}{}(\widetilde{v})}$, which implies $\smap{\p_0}{}(\widetilde{v},t)=\smap{\p}{}(\widetilde{v})$.

\qed
\end{proof}

\end{document}